\providecommand{\U}[1]{\protect\rule{.1in}{.1in}}
\newtheorem{theorem}{Theorem}
\newtheorem{conclusion}[theorem]{Conclusion}
\newtheorem{corollary}[theorem]{Corollary}
\newtheorem{criterion}[theorem]{Criterion}
\newtheorem{definition}[theorem]{Definition}
\newtheorem{notation}[theorem]{Notation}
\newtheorem{proposition}[theorem]{Proposition}
\newtheorem{remark}[theorem]{Remark}
\newenvironment{proof}[1][Proof]{\noindent\textbf{#1.} }{\ \rule{0.5em}{0.5em}}
\begin{document}
\preprint{ }
\preprint{UATP/2105}
\title{A Review of the System-Intrinsic Nonequilibrium Thermodynamics in Extended
Space (MNEQT) with Applications }
\author{P.D. Gujrati,$^{1,2}$ }
\affiliation{$^{1}$Department of Physics, $^{2}$Department of Polymer Science, The
University of Akron, Akron, OH 44325}
\email{pdg@uakron.edu}

\begin{abstract}
The review deals with a \emph{novel approach} (MNEQT) to nonequilibrium
thermodynamics (NEQT) that is based on the concept of internal equilibrium
(IEQ) in an enlarged state space $\mathfrak{S}_{\mathbf{Z}}$\ involving\emph{
internal variables as additional state variables}. The IEQ\ macrostates are
unique in $\mathfrak{S}_{\mathbf{Z}}$\ and have no memory just as EQ
macrostates are in the EQ state space $\mathfrak{S}_{\mathbf{X}}%
\subset\mathfrak{S}_{\mathbf{Z}}$. The approach provides a clear strategy to
identify the internal variables for any model through several examples. The
MNEQT deals directly with system-intrinsic quantities, which are very useful
as they fully describe irreversibility. Because of this, MNEQT solves a
long-standing problem in NEQT of identifying a unique global temperature $T$
of a system, \emph{thus fulfilling Planck's dream of a global temperature for
any system}, even if it is not uniform such as when it is driven between two
heat baths; $T$ has the conventional interpretation of satisfying the Clausius
statement that the \emph{exchange macroheat }$d_{\text{e}}Q$\emph{ flows from
hot to cold}, and other sensible criteria expected of a temperature. The
concept of the generalized macroheat $dQ=d_{\text{e}}Q+d_{\text{i}}Q$ converts
the Clausius inequality $dS\geq d_{\text{e}}Q/T_{0}$ for a system in a medium
at temperature $T_{0}$ into the \emph{Clausius equality} $dS\equiv dQ/T$,
which also covers macrostates with memory, and follows from the extensivity
property. The equality also holds for a NEQ isolated system. The novel
approach is extremely useful as it also works when no internal state variables
are used to study nonunique macrostates in the EQ state space $\mathfrak{S}%
_{\mathbf{X}}$ at the expense of explicit time dependence in the entropy that
gives rise to memory effects. To show the usefulness of the novel approach, we
give several examples such as irreversible Carnot cycle, friction and Brownian
motion, the free expansion, etc.

\end{abstract}
\date{\today}
\maketitle

\section{Introduction\label{Sec-Introduction}}

Thermodynamics of a system out of equilibrium (EQ)
\cite{DeDonder,Prigogine71,deGroot,Bedeaux,Kuiken,Ottinger,Eu0,Evans} is far
from a complete science in contrast to the EQ thermodynamics based on the
original ideas of Carnot, Clapeyron, Clausius, Thomson, Maxwell, and many
others \cite{Prigogine,Reif,Landau,Waldram,Balian,Kestin,Fermi,Woods} that has
by now been firmly established in physics, thanks to Boltzmann
\cite{Boltzmann} and Gibbs \cite{Gibbs}. Therefore, it should not be a
surprise that there are currently many schools of nonequilibrium (NEQ)
thermodynamics (NEQT), among which are the most widely known schools of
local-EQ thermodynamics, rational thermodynamics, extended thermodynamics, and
GENERIC thermodynamics \cite{Muschik0,Jou0}. This pedagogical review and
various applications in different contexts deal with a recently developed
NEQT, which we have termed MNEQT, with M referring to a macroscopic treatment
in terms of \emph{system-intrinsic}\ (SI) quantities of the system $\Sigma$ at
each instant. These quantities are normally taken to be \emph{extensive}
SI-quantities, and are used as state variables to describe a macrostate
$\mathcal{M}$ of $\Sigma$. The MNEQT has met with success as we will describe
in this review so it is desirable to introduce it to a wider class of readers
and supplement it with many nontrivial applications.
\begin{figure}
[ptb]
\begin{center}
\includegraphics[
height=3.3209in,
width=3.1739in
]%
{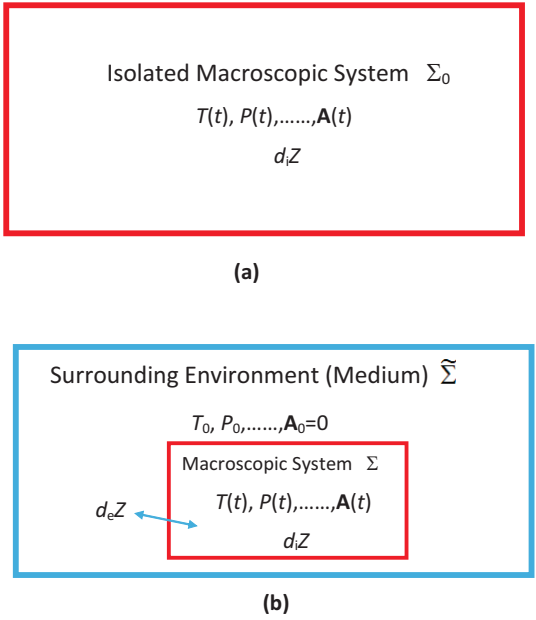}%
\caption{{}(a) An isolated nonequilibrium system $\Sigma_{0}$\ with internally
generated $d_{\text{i}}Z$ driving it towards equilibrium, during which its
SI-fields $T(t),P(t),\cdots,\mathbf{A}(t)$ continue to change to their
equilibrium values; $d_{\text{i}}Z_{k}$ denote the microanalog of
$d_{\text{i}}Z$. The sign of $d_{\text{i}}Z$ is determined by the second law.
(b) A nonequilibrium systen $\Sigma$ in a surrounding medium $\widetilde
{\Sigma}$, both forming an isolated system $\Sigma_{0}$. The macrostates of
the medium and the system are characterized by their fields $T_{0}%
,P_{0},...,\mathbf{A}_{0}=0$ and $T(t),P(t),...,\mathbf{A}(t)$, respectively,
which are different when the two are out of equilibrium. Exchange quantities
($d_{\text{e}}Z$) carry a suffix "e" and irreversibly generated quantities
($d_{\text{i}}Z$) within the system by a suffix "i" by extending the Prigogine
notation. Their sum $d_{\text{e}}Z+d_{\text{i}}Z$ is denoted by $dZ$, which is
a system-intrinsic quantity (see text). }%
\label{Fig_System}%
\end{center}
\end{figure}

We take $\Sigma$ as a \emph{discrete} system in that it is separated from its
surrounding medium $\widetilde{\Sigma}$ (if it exists) with which it
interacts; see Fig. \ref{Fig_System}. Such a system is also called a Schottky
system \cite{Muschik0,Schottky,Muschik-2020} Because of the use of
SI-quantities, the MNEQT differs from all other existing approaches to the
NEQT in that the latter invariably deal with exchange quantities with
$\widetilde{\Sigma}$, which are \emph{medium-intensive }(MI) quantities that
differ from SI-quantities in important ways in a NEQ process as we will see.
We will use \r{M}NEQT to refer to the latter approaches, with \r{M} referring
to the use of macroscopic exchange quantities. The corresponding NEQ
statistical mechanics of the MNEQT is termed $\mu$NEQT, in which $\mu$ refers
to the treatment of $\Sigma$ in terms of microstates, which form a countable
set $\left\{  \mathfrak{m}_{k}\right\}  $, with $k$ counting various
microstates. The existence of the $\mu$NEQT is possible only because of the
use of SI-quantities in the MNEQT. These quantities are easily associated with
$\left\{  \mathfrak{m}_{k}\right\}  $ as will become clear here. This ability
in the MNEQT immediately distinguishes it from the\ \r{M}NEQT as the latter
cannot lead directly to a statistical mechanical treatment with $\left\{
\mathfrak{m}_{k}\right\}  $. Therefore, we believe that the MNEQT and $\mu
$NEQT will prove very useful. All quantities pertaining to $\mathcal{M}$ are
called \emph{macroquantities}, while those pertaining to microstates contain
an index $k$ and are called \emph{microquantities} for simplicity in this review.

While most of the review deals with an isolated system or an interacting
system in a medium, we will occasionally also consider a system interacting
with two different media such as in Fig. \ref{Fig-Sys-TwoSources}, to study
driven and steady macrostates \cite{Oono,Sasa,Bejan} at $\tau\sim
\tau_{\text{st }}$for which there is no EQ macrostate having unique values of
the temperature, pressure, etc. as long as we do not allow the media to come
to EQ with each other, which takes much longer time $\tau_{\text{EQ }}%
>>\tau_{\text{st }}$. A steady or an unsteady macrostate always gives rise to
irreversible entropy generation so it truly belongs to the realm of the NEQT.
What makes the MNEQT a highly desirable approach is that it can also deal with
unsteady processes easily as we will do. \ \ \
\begin{figure}
[ptb]
\begin{center}
\includegraphics[
trim=0.601850in 0.802747in 0.603817in 0.803101in,
height=0.9971in,
width=2.7025in
]%
{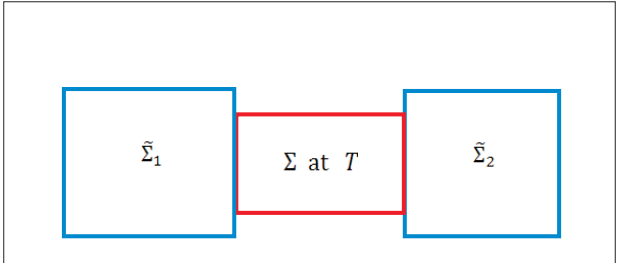}%
\caption{A system driven between two sources that are different in their
fields; see Fig. \ref{Fig_System}. If they are the same, the situation reduces
to that in Fig. \ref{Fig_System}(a). Later in Sec. \ref{Sec-Composite System}
we consider this situation between two heat sources $\widetilde{\Sigma
}_{\text{h}1}$ and $\widetilde{\Sigma}_{\text{h}2}$, where we treat $\Sigma$
as a composite system as an application of our approach.}%
\label{Fig-Sys-TwoSources}%
\end{center}
\end{figure}
\ \ \ \ \ \ \ \ 

\subsection{Unique Macrostates in Extended State Space}

The firm foundation of EQ statistical mechanics is accomplished by using the
concept of microstates $\mathfrak{m}_{k}$ of $\Sigma$ and their EQ
probabilities $p_{k\text{eq}}$. This is feasible as the EQ macrostate
$\mathcal{M}_{\text{eq}}$ is \emph{unique} in the EQ state space
$\mathfrak{S}_{\mathbf{X}}$ spanned by the set of observables \cite{Note}
$\mathbf{X}=(E,V,\cdots)$, where $E,V,\cdots$ are the energy, volume, etc.
using standard notation (we do not show the number of particles $N$ as we keep
it fixed throughout this review \cite{Note0}; see later, however). But the
same cannot be said about its extension to describe NEQT, since NEQ
macrostates $\mathcal{M}$ in $\mathfrak{S}_{\mathbf{X}}$ are not unique
\cite{Kestin} even if they appear in a process between two EQ macrostates,
which we will always denote by $\overline{\mathcal{P}}$ and use $\mathcal{P}$
for any general process including $\overline{\mathcal{P}}$. It is clear that
unless we can specify the microstates for $\mathcal{M}$\ uniquely, we cannot
speak of their probabilities $p_{k}$ in a sensible way, but this is precisely
what we need to establish a rigorous \emph{NEQ statistical mechanics} of
thermodynamic processes
\cite{Keizer-Book,Schuss,Coffee,Jarzynski,Sekimoto,Seifert,Gujrati-LangevinEq,
Stratonovich,Bochkov}. The system is usually surrounded by an external medium
$\widetilde{\Sigma}$, which we always take to be in EQ; see Fig.
\ref{Fig_System}(b). The combination $\Sigma_{0}$ as the union $\Sigma
\cup\widetilde{\Sigma}$ forms an isolated system, which we assume to be stationary.

The lack of uniqueness of $\mathcal{M}$\ is handled in the MNEQT by using a
well-established practice
\cite{deGroot,Coleman,Maugin,Gujrati-II,Langer,Prigogine,Pokrovskii,Gujrati-Hierarchy}
by considering a properly extended state space $\mathfrak{S}_{\mathbf{Z}}$
spanned by $\mathbf{Z}\doteq\mathbf{X}\cup\boldsymbol{\xi}$, by including a
set $\boldsymbol{\xi}$ of \emph{internal variables},\emph{ }in which NEQ
macrostates $\mathcal{M}$ and microstates of interest can be uniquely
specified during the entire process $\mathcal{P}$. Here, $\boldsymbol{\xi}$ is
internally generated within $\Sigma$ so it \emph{cannot} be controlled by the
observer \cite{Note}. The use of internal variables in glasses, prime examples
of NEQ systems, is well known, where they give rise to distinct relaxations of
the glassy macrostate \cite{Davies,Gutzow,Nemilov,Langer,Goldstein}. Their
justification is based on the ideas of chemical reactions \cite{Prigogine0},
and has been formalized recently by us \cite{Gujrati-Hierarchy} to any NEQ
macrostate $\mathcal{M}$. It is well known that internal variables contribute
to irreversibility in $\mathcal{P}$, which justifies their important role in
the NEQT. We give several examples for their need later in the review and a
clear strategy to identify them for computation under different conditions. In
$\mathfrak{S}_{\mathbf{Z}}$, the \emph{unique} $\mathcal{M}$'s are specified
by the collection $\left\{  \mathfrak{m}_{k},p_{k}\right\}  $ of two
\emph{independent} quantities, which form a probability space $\mathbb{P}$. We
can then pursue any $\mathcal{P}$ followed by $\mathcal{M}(t)$ as the latter
evolves in time $t$ to another (EQ or NEQ) unique macrostate. A major
simplification occurs when this independence is maintained at each instant so
that during the evolution, each microstate $\mathfrak{m}_{k}$ follows a
trajectory (such as a Brownian trajectory) $\gamma_{k}$ whose characteristics
do not depend on $p_{k}(t)$ as a function of time $t$ \cite[for example]%
{Gujrati-LangevinEq}; the latter, of course, determines the trajectory
probability $p_{\gamma_{k}}$. Thus, $\left\{  \gamma_{k},p_{\gamma_{k}%
}\right\}  $\ uniquely specifies $\mathcal{P}$ in $\mathbb{P}$. For the same
collection $\boldsymbol{\gamma}\doteq\left\{  \gamma_{k}\right\}  $, different
choices of $\left\{  p_{\gamma_{k}}\right\}  $ describe different processes.

\subsection{Layout}

The review is divided into two distinct parts. The first part consisting of
Sects. \ref{Sec-InternalVariables}-\ref{Sec-EntropyCalculation} deals with the
up-to-date foundation of the MNEQT for $\Sigma$, regardless of whether it is
isolated or interacting (in the presence of one or more a external sources).
We have tried to make the new concepts and their physics as clear as possible
so a reader can appreciate the foundation of the MNEQT, which can be complex
at times. The most important one is that of the NEQ temperature $T$ as
anticipated by Planck that is required to be defined globally over the system
so that it can satisfy the Clausius statement about macroheat flow from hot to
cold. The concepts of the \emph{generalized macroheat} $dQ$ and the
\emph{generalized macrowork} $dW$ are directly and uniquely defined in terms
of SI-quantities that pertain to the system alone. Thus, they are capable of
describing the irreversibilty in the system. A clear strategy to identify
internal variables is discussed for carrying out thermodynamic computation.
The other part consisting of Sects. \ref{Sec-Applications}%
-\ref{Sec-Free Expansion} deals with various applications of the MNEQT, many
of which cannot be studied within the \r{M}NEQT without imposing additional
requirements. This part provides an abundant evidence of successful
implementation of the MNEQT.

The layout of the paper is as follows. In the next section, we introduce our
notation and give some useful definitions and new concepts without any
explanation. This section is only for bookkeeping so that readers can come
back to it to refresh the concepts in the manuscript later when they are not
sure of their meanings. The next six sections deal with various new concepts
and theory behind the MNEQT. Sect. \ref{Sec-InternalVariables} introduces the
central concept of internal variables that are required for arbitrary NEQ
macrostates $\mathcal{M}$. Many examples are given to highlight their
importance for $\mathcal{M}$. They form the extended state space
$\mathfrak{S}_{\mathbf{Z}}$, which contains the state space $\mathfrak{S}%
_{\mathbf{X}}$ as a proper subspace. The internal variables are irrelevant for
EQ macrostates in $\mathfrak{S}_{\mathbf{X}}$. Sect. \ref{Sec-NEQ-S} is also
very important, where we introduce the concept of NEQ entropies based on the
original ideas of Boltzmann. In this sense, the derivation of this entropy is
thermodynamic in nature, and gives rise to an expression of $S$ that
generalizes the Gibbs formulation of the entropy to NEQ macrostates. Using
this formulation, we reformulate a previously given proof of the second law.
In Sect. \ref{Sec-HamiltonianTrajectories}, we formulate the statistical
mechanics of the MNEQT, and discuss the statistical significance of $dW$ and
$dQ$ that provide a reformulation of the first law in terms of SI-quantities
for any\emph{ arbitrary process} between any two arbitrary macrostates.\emph{
}The SI-quantities are determined by $\Sigma$ alone, even if it is interacting
with its exterior, and its usage has neither been noted nor has been
appreciated by other workers in the field. These generalized macroquantities
are different from exchange macrowork and macroheat. In this reformulation,
the first law includes the second law in that it contains all the information
of the irreversibility encoded in $\mathcal{M}$. This formulation applies
equally well to the exchange energy change $d_{\text{e}}E$ and the internally
generated energy change $d_{\text{i}}E$, which shows the usefulness of the
formulation. In Sect. \ref{Sec-UniqueMicrostates}, which is the most important
section for the foundation of the MNEQT, we discuss the conditions for
$\mathcal{M}$ to be uniquely specified in $\mathfrak{S}_{\mathbf{Z}}$, and
introduce the concept of the internal equilibrium (IEQ) to specify
$\mathcal{M}_{\text{ieq}}$ in $\mathfrak{S}_{\mathbf{Z}}$. A parallel is drawn
between $\mathcal{M}_{\text{ieq}}$ and $\mathcal{M}_{\text{eq}}$ so that many
results valid for $\mathcal{M}_{\text{eq}}$ also apply to $\mathcal{M}%
_{\text{ieq}}$, except that the latter has nonzero entropy generation
($d_{\text{i}}S\geq0$). The entropy of $\mathcal{M}_{\text{ieq}}$ is a state
function in $\mathfrak{S}_{\mathbf{Z}}$, while that of\ a macrostate
$\mathcal{M}_{\text{nieq}}$ that lies outside of $\mathfrak{S}_{\mathbf{Z}}$
is not a state function. see later. The entropy of $\mathcal{M}$ that lies
outside $\mathfrak{S}_{\mathbf{X}}$ is similarly not a state function of
$\mathbf{X}$. We show that the NEQ entropy in Sect. \ref{Sec-NEQ-S} reduces to
the thermodynamic EQ entropy for $\mathcal{M}_{\text{eq}}$ and to the
thermodynamic IEQ entropy for $\mathcal{M}_{\text{ieq}}$. We introduce the
concept of a NEQ\ thermodynamic temperature $T$ as an inverse entropy
derivative ($\partial S/\partial E$). We show that this concept satisfies
various sensible requirements (C1-C4) of a thermodynamic temperature, which is
global over the entire system even if it is inhomogeneous. This, we believe,
solves a long-standing problem of a NEQ\ temperature. In terms of $T$, we show
that the Clausius inequality in the \r{M}NEQT is turned into an equality in
the MNEQT as shown in Sect. \ref{Sec_Clausius_Equality}. In Sect.
\ref{Sec-EntropyCalculation}, which is the last section of the first part, we
use the idea of chemical equilibrium to show how entropy is generated in an
isolated system. We now turn to the second part of the review. In Sect.
\ref{Sec-Applications}, we consider various applications of the MNEQT ranging
from a simple system to composite systems under various conditions. This
section is very important in that we establish here that we can treat a system
either (i) as a "black box" $\Sigma_{\text{B}}$ of temperature $T$ but without
knowing anything about its interior, or (ii) as a composite system
$\Sigma_{\text{C}}$ for which we have a detailed information about its
interior inhomogeneity. Both realizations give the same irreversible entropy
generation. Thus, we can always treat a system as $\Sigma_{\text{B}}$ of
temperature $T$, whose study then becomes simpler. In Sect.
\ref{Sec-Tool-Narayan}, we apply our approach to a glassy system and derive
the famous Tool-Narayanaswamy equation for the glassy temperature $T$. In
Sect. \ref{Sec-CarnotCycle}, we apply the MNEQT to study an irreversible
Carnot cycle and determine its efficiency in terms of $\Delta_{\text{i}}S$. In
Sect. \ref{Sec-Friction}, we apply the MNEQT to a very important problem of
friction and the Brownian motion. In Sect. \ref{Sec-Free Expansion}, we
consider a classical and a quantum expansion. In the classical case, we study
the expansion in $\mathfrak{S}_{\mathbf{X}}$, where $\mathcal{M}$ is a non-IEQ
macrostate, with an explicit time-dependence, and in $\mathfrak{S}%
_{\mathbf{Z}}$, where $\mathcal{M}$ is a an IEQ macrostate, with no explicit
time-dependence, and show that we obtain the same result. The quantum
expansion is only studied in $\mathfrak{S}_{\mathbf{Z}}$. The last section
provides an extensive discussion of the MNEQT and draws some useful conclusions.

\section{Notation, Definitions and New Concepts\label{Sec-Notation}}

\subsection{Notation}

Before proceeding further, it is useful to introduce in this section our
notation to describe various systems and their behavior and new concepts for
their understanding without much or any explanation (that will be offered
later in the review where we discuss them) so that a reader can always come
back here to be reminded of their meaning in case of confusion. In this sense,
this section plays an important role in the review for the purpose of bookkeeping.

Even though $\Sigma$\ is macroscopic in size, it is extremely small compared
to the medium $\widetilde{\Sigma}$; see Fig. \ref{Fig_System}(b). The medium
$\widetilde{\Sigma}$ consists of two parts: a work source $\widetilde{\Sigma
}_{\text{w}}$ and a macroheat source $\widetilde{\Sigma}_{\text{h}}$, both of
which can interact with the system $\Sigma$ directly but not with each other.
This separation allows us to study macrowork and macroheat exchanges
separately. We will continue to use $\widetilde{\Sigma}$ to refer to both of
them together. The collection $\Sigma_{0}=\Sigma\cup\widetilde{\Sigma}$ forms
an isolated system, which we assume to be stationary. The system in Fig.
\ref{Fig_System}(a) is an isolated system, which we may not divide into a
medium and a system. Each medium in Fig. \ref{Fig-Sys-TwoSources}, although
not interacting with each other, has a similar relationship with $\Sigma$,
except that the collection $\Sigma_{0}=\Sigma\cup\widetilde{\Sigma}_{1}%
\cup\widetilde{\Sigma}_{2}$ forms an isolated system. In case they were
mutually interacting, they can be treated as a single medium. In the
following, we will mostly focus on Fig. \ref{Fig_System} to introduce the
notation, which can be easily extended to Fig. \ref{Fig-Sys-TwoSources}.

We will use the term "body" to refer to any of $\Sigma,\widetilde{\Sigma}$,
and $\Sigma_{0}$ in this review and use $\Sigma_{\text{b}}$ to denote it.
However, to avoid notational complication, we will use the notation suitable
for $\Sigma$ for $\Sigma_{\text{b}}$ if no confusion would arise in the
context. As the mechanical aspect of a body is described by the Hamiltonian
$\mathcal{H}$, whose value determines its macroenergy $E$, it plays an
important role in thermodynamics. Therefore, it is convenient to introduce
\begin{equation}
\mathbf{w}\doteq\mathbf{X}\backslash E=(V,\cdots),\mathbf{W}\doteq
\mathbf{Z}\backslash E=(V,\cdots,\boldsymbol{\xi}), \label{w-W-eq}%
\end{equation}
where $\backslash E$ means to delete $E$ from the set, and $\cdots$ refers to
the rest of the elements in $\mathbf{X}$ besides $V$. We use $\mathbf{x}$\ to
denote the collection of coordinates and momenta of the $N$ particles in the
phase space of $\Sigma$. The variable $\mathbf{W}$ appears as a parameter set
in the Hamiltonian $\mathcal{H}(\left.  \mathbf{x}\right\vert \mathbf{W})$ of
$\Sigma$\ that can be varied in a process with a concomitant change in
$\mathcal{H}$. As internal variables play no role in EQ, $\mathbf{W=w}$\ in
EQ. We will normally employ a discretization of the phase space in which we
divide it into cells $\delta\mathbf{x}$, centered at $\mathbf{x}$ and of some
small size, commonly taken to be $\left(  2\pi\hbar\right)  ^{3N}$. The cells
cover the entire phase space. To account for the identical nature of the
particles, the number of cells and the volume of the phase space is assumed to
be divided by $N!$ to give distinct arrangements of the particles in the
cells, which are indexed by $k$ $=1,2,\cdots$ and write them as $\left\{
\delta\mathbf{x}_{k}\right\}  $; the center of $\delta\mathbf{x}_{k}$ is at
$\mathbf{x}_{k}$. These cells represents the microstates $\left\{
\mathfrak{m}_{k}\right\}  $. The energy and probability of these cells are
denoted by $\left\{  E_{k},p_{k}\right\}  $ in which $E_{k}(\mathbf{W})$ is a
function of $\mathbf{W}$. Different choices of $\left\{  p_{k}\right\}  $ for
the same set $\left\{  \mathfrak{m}_{k},E_{k}\right\}  $ describes different
macrostates for a given $\mathbf{W}$, one of which corresponding to $\left\{
p_{k}^{\text{eq}}\right\}  $ uniquely specifies an EQ macrostate
$\mathcal{M}_{\text{eq}}$; all other states are called NEQ macrostates
$\mathcal{M}$. Among $\mathcal{M}$ are some special macrostates $\mathcal{M}%
_{\text{ieq}}$ that are said to be in internal equilibrium (IEQ); the rest are
nonIEQ macrostates $\mathcal{M}_{\text{nieq}}$. An arbitrary macrostate
$\mathcal{M}_{\text{arb}}$ refers to either an EQ or a NEQ macrostate.

We use a suffix $0$ to denote all quantities pertaining to $\Sigma_{0}$, a
tilde $(\widetilde{})$ for all quantities pertaining to $\widetilde{\Sigma}$,
and no suffix for all quantities pertaining to $\Sigma$ even if it is
isolated. Thus, the set of observables are denoted by $\mathbf{X}%
_{0},\widetilde{\mathbf{X}}$ and $\mathbf{X}$, respectively, and the set of
state variables by $\mathbf{Z}_{0},\widetilde{\mathbf{Z}}$ and $\mathbf{Z}$,
respectively, in the state space $\mathfrak{S}_{\mathbf{Z}}$; the set of
internal variables are $\boldsymbol{\xi}_{0},\widetilde{\boldsymbol{\xi}}$ and
$\boldsymbol{\xi}$, respectively. As $\widetilde{\Sigma}$ is taken to be in
EQ, weakly interacting with and is extremely large compared to $\Sigma$, all
its fields can be safely taken to be the fields associated with $\Sigma_{0}$
so can be denoted by using the suffix $0$.

In the discrete approach, $\Sigma$ and $\widetilde{\Sigma}$ are spatially
disjoint so%
\[
V_{0}=V+\widetilde{V}.
\]
They are weakly interacting so that their energies are \emph{quasi-additive}%
\[
E_{0}=E+\widetilde{E}+E_{\text{int}}\simeq E+\widetilde{E},
\]
where $E_{\text{int}}$ is the weak interaction energy between $\Sigma$ and
$\widetilde{\Sigma}$ and can be neglected to a good approximation. We also
take them to be \emph{quasi-independent} \cite{Gujrati-II} so that their
entropies also become quasiadditive:%
\begin{align}
S_{0}(\mathbf{X}_{0}\mathbf{,}t)  &  =S(\mathbf{X}(t)\mathbf{,}t)+\widetilde
{S}(\widetilde{\mathbf{X}}(t))+S_{\text{corr}}(t)\nonumber\\
&  \simeq S(\mathbf{X}(t)\mathbf{,}t)+\widetilde{S}(\widetilde{\mathbf{X}%
}(t)); \label{Entropy-Additivity}%
\end{align}
here, $S_{\text{corr}}(t)$ is a negligible contribution to the entropy due to
quasi-independence between $\Sigma$ and $\widetilde{\Sigma}$, and can also be
neglected to a good approximation. The entropy $\widetilde{S}$ has no explicit
time dependence as $\widetilde{\Sigma}$ is always assumed to be in
equilibrium, and $\mathbf{X}_{0}$ remains constant for the isolated system
$\Sigma_{0}$. The discussion of quasi-independence and its distinction from
weak interaction has been carefully presented elsewhere \cite[$S_{\text{corr}%
}$ was called $S_{\text{int}}$ there; however, $S_{\text{corr}}$ seems to be
more appropriate]{Gujrati-II} for the first time, which we summarize as
follows. The concept of quasi-independence is determined by the thermodynamic
concept of \emph{correlation length }$\lambda_{\text{corr}}$, which is a
property of macrostates, and can be much larger than the interaction length
between particles. A simple well-known example is of the correlation length of
a nearest neighbor Ising model, which can be extremely large near a critical
point than the nearest neighbor distance between the spins. This distinction
is usually not made explicit in the literature. For quasi-independence between
$\Sigma$ and $\widetilde{\Sigma}$, we require their sizes to be larger than
$\lambda_{\text{corr}}$. Throughout this review, we will think of the above
\emph{approximate equalities} as equalities to make the energies to be
additive by neglecting the interaction energy between $\Sigma$ and
$\widetilde{\Sigma}$, which is a standard practice in the field, but also
assuming quasi-independence between them to make the entropies to be additive,
which is not usually mentioned as a requirement in the literature.

For a reversible process, the entropy of each macrostate $\mathcal{M}%
_{\text{eq}}(t)\in\mathfrak{S}_{\mathbf{X}}$ of a body along the process is a
state function of $\mathbf{X}(t)$, but not for an irreversible process for
which $\mathcal{M}(t)\mathfrak{\notin S}_{\mathbf{X}}$. Their entropies are
written as $S(\mathbf{X}(t)\mathbf{,}t)$
\cite{Gujrati-Entropy1,Gujrati-Entropy2} with an explicit time dependence. In
general \cite{Gujrati-Symmetry,Landau,Gujrati-Entropy1,Gujrati-Entropy2},
\begin{equation}
S(\mathbf{X}(t)\mathbf{,}t)\leq S(\mathbf{X}(t));\text{ fixed }\mathbf{X}%
(t)\mathbf{.} \label{EntropyBound}%
\end{equation}

The equilibrium values of various entropies are always denoted with no
explicit time dependence such as by $S_{0}(\mathbf{X}_{0})$ for $\Sigma_{0}$.
These entropies represent the maximum possible values of the entropies of a
body as it relaxes and comes to equilibrium for a given set of observables.
Once in equilibrium, the body\ will have no memory of its original macrostate.
The set $\mathbf{X}_{0}$, which includes its energy $E_{0}$ among others,
remains constant for $\Sigma_{0}$ as it relaxes \cite{Note}. This notion is
also extended to a body in internal equilibrium.

\begin{notation}
\label{Notation}We use modern notation \cite{Prigogine,deGroot} and its
extension, see Fig. \ref{Fig_System}, that will be extremely useful to
understand the usefulness of our novel approach. Any infinitesimal and
extensive \emph{system-intrinsic} quantity $dY(t)$ during an arbitrary process
$d\mathcal{P}$ can be partitioned as%
\begin{equation}
dY(t)\equiv d_{\text{e}}Y(t)+d_{\text{i}}Y(t), \label{Y-partition}%
\end{equation}
where $d_{\text{e}}Y(t)$ is \emph{the change caused by exchange (e) with the
medium} and $d_{\text{i}}Y(t)$ is its \emph{change due to internal or
irreversible (i) processes going on within the system}.
\end{notation}

\subsection{Some Definitions and New Concepts}

\begin{definition}
\label{Def-Observables-InternalVariables-StateVariables}Observables
$\mathbf{X}=(E,V,N,\cdots)$ of a system are quantities that can be controlled
from outside the system, and internal variables $\boldsymbol{\xi}=(\xi_{1}%
,\xi_{2},\xi_{3},\cdots)$ are quantities that cannot be controlled. Their
collection $\mathbf{Z}=\mathbf{X}\cup\boldsymbol{\xi}$ is called the set of
state variables in the state space $\mathfrak{S}$.
\end{definition}

\begin{definition}
\label{Def-SystemIntrisic}A \emph{system-intrinsic }quantity is a quantity
that pertains to the system alone and can be used to characterize the system.
A \emph{medium--intrinsic }quantity is a quantity that is solely determined by
the medium alone and can be used to characterize the exchange between the
system and the medium.
\end{definition}

\begin{definition}
A macrostate in $\mathfrak{S}_{\mathbf{X}}$ or $\mathfrak{S}_{\mathbf{Z}}$ is
a collection $\left\{  \mathfrak{m}_{k},p_{k}\right\}  $ of microstates
$\mathfrak{m}_{k}$ and their probabilities $p_{k},k=1,2,\cdots$. In general,
$p_{k}$ are functions of $\mathbf{X}$ or $\mathbf{Z}$, depending on the state
space. They are implicit function of time $t$ through them; they may also
depend explicitly on time $t$ if not unique in the state space.. For an EQ or
an IEQ macrostate, $p_{k}$ have no explicit dependence on $t$. For EQ states,
$p_{k\text{ }}$have no time-dependence. It is through the microstate
probabilities that thermodynamics gets its stochastic nature.
\end{definition}

\begin{definition}
The collection $\left\{  \mathfrak{m}_{k},p_{k}\right\}  $ provides a complete
microscopic or statistical mechanical description of thermodynamics for
$\mathcal{M}_{\text{arb}}$ in some state space $\mathfrak{S}$ in which one
deals with macroscopic or ensemble averages, see Definition
\ref{Def-EnsembleAverage}, over $\left\{  \mathfrak{m}_{k}\right\}  $ of
microstate variables. The same collection $\left\{  \mathfrak{m}_{k}%
,p_{k}\right\}  $ also provides a microscopic description of a microstate and
its probability in any arbitrary process.
\end{definition}

\begin{definition}
\label{Def-NEQ-States}The nonequilibrium macrostates can be classified into
two classes:
\end{definition}

\begin{enumerate}
\item[(a)] \emph{Internal-equilibrium macrostates }(IEQ): The nonequilibrium
entropy $S(\mathbf{X,}t)$ for such a macrostate is a state function
$S(\mathbf{Z})$ in the larger nonequilibrium state space $\mathfrak{S}%
_{\mathbf{Z}}$ spanned by $\mathbf{Z}$; $\mathfrak{S}_{\mathbf{X}}$ is a
proper subspace of $\mathfrak{S}_{\mathbf{Z}}$: $\mathfrak{S}_{\mathbf{X}%
}\subset\mathfrak{S}_{\mathbf{Z}}$. As there is no explicit time dependence,
there is no memory of the initial macrostate in IEQ macrostates.

\item[(b)] \emph{Non-internal-equilibrium macrostates }(NIEQ): The
nonequilibrium entropy for such a macrostate is not a state function of the
state variable $\mathbf{Z}$. Accordingly, we denote it by $S(\mathbf{Z},t)$
with an explicit time dependence. The explicit time dependence gives rise to
memory effects in these NEQ macrostates that lie outside the nonequilibrium
state space $\mathfrak{S}_{\mathbf{Z}}$. A NIEQ macrostate in $\mathfrak{S}%
_{\mathbf{Z}}$ becomes an IEQ\ macrostate in a larger state space
$\mathfrak{S}_{\mathbf{Z}^{\prime}},\mathbf{Z}^{\prime}\supset\mathbf{Z}$,
with a proper choice of $\mathbf{Z}^{\prime}$.
\end{enumerate}

\begin{definition}
\label{Def-ArbitrayState}An \emph{arbitrary macrostate} \emph{(ARB) }of a
system\emph{ }refers to \emph{all possible thermodynamic states}, which
include EQ macrostates, and NEQ macrostates with and without the memory of the
initial macrostate. We denote an arbitrary macrostate by $\mathcal{M}%
_{\text{arb}}$, NEQ macrostates by $\mathcal{M}$, EQ macrostates by
$\mathcal{M}_{\text{eq}}$, and IEQ macrostates by $\mathcal{M}_{\text{ieq}}$.
\end{definition}

\begin{definition}
\label{Def-ThermodynamicEntropy}Thermodynamic entropy $S$ is defined by the
Gibbs fundamental relation for a macrostate.
\end{definition}

\begin{definition}
\label{Def-StatisticalEntropy}Statistical entropy $S$ for $\mathcal{M}%
_{\text{arb}}$ is defined by its microstates by Gibbs formulation.
\end{definition}

\begin{definition}
\label{Def-Finite-Infinitesimal-EQ} Changes in quantities such as
$S,E,V,\cdots$ in an infinitesimal processes $\delta\mathcal{P}$ are denoted
by $dS,dE,dV,\cdots$; changes during a finite process $\mathcal{P}$ are
denoted by $\Delta S,\Delta E,\Delta V,\cdots$.
\end{definition}

\begin{definition}
\label{Def-Path-trajectories}The path $\gamma_{\mathcal{P}}$ of a macrostate
$\mathcal{M}$ is the path it takes in $\mathfrak{S}$\ during a process
$\mathcal{P}$. The trajectory $\gamma_{k}$ is the trajectory a microstate
$\mathfrak{m}_{k}$ takes in time in $\mathfrak{S}$ during the process
$\mathcal{P}$.
\end{definition}

As $\mathfrak{m}_{k}$ evolves due to Hamilton's equations of motion for given
$\mathbf{W}$, the variation of $\mathbf{x}_{k}$ has no effect on $E_{k}$.
Therefore, we will no longer exhibit $\mathbf{x}$ and simply use
$\mathcal{H}(\mathbf{W})$ for the Hamiltonian. The microenergy $E_{k}$ changes
isentropically as $\mathbf{W}$ changes without changing $p_{k}$
\cite{Gujrati-GeneralizedWork}. Accordingly, the generalized macrowork $dW$
does not generate any stochasticity. The latter is brought about by the
generalized macroheat $dQ$, which changes $p_{k}$ but without changing $E_{k}%
$. In the MNEQT,%
\begin{subequations}
\begin{equation}
dQ\equiv TdS \label{dQ-dS}%
\end{equation}
in terms of the temperature
\begin{equation}
T=\partial E/\partial S \label{temperature}%
\end{equation}
and $dS$\ of $\Sigma$. The Eq. (\ref{dQ-dS}) is a general result in the MNEQT.

It is convenient to introduce $\boldsymbol{\varphi}=(S,\mathbf{Z})$ as the set
of all thermodynamic macrovariables, which takes the microvalue
$\boldsymbol{\varphi}_{k}$ on $\mathfrak{m}_{k}$.
\end{subequations}
\begin{definition}
\label{Def-Macropartition}Macropartition: As suggested in Fig.
\ref{Fig_System} and Notation \ref{Notation}, the change
\begin{subequations}
\begin{equation}
d\boldsymbol{\varphi}\mathbf{\doteq}d_{\text{e}}\boldsymbol{\varphi}%
\mathbf{+}d_{\text{i}}\boldsymbol{\varphi} \label{Macro-Partition}%
\end{equation}
in the SI-macrovariable $\boldsymbol{\varphi}$ of $\Sigma$\ consists of two
parts: the MI-change $d_{\text{e}}\boldsymbol{\varphi}$\ is the change due to
exchange with $\widetilde{\Sigma}$, and $d_{\text{i}}\boldsymbol{\varphi}$ is
the \emph{irreversible }change occurring within $\Sigma$.
\end{subequations}
\end{definition}

This is an extension of the standard partition for the entropy change
\cite{Prigogine,deGroot}
\begin{equation}
dS\mathbf{\doteq}d_{\text{e}}S\mathbf{+}d_{\text{i}}S.
\label{Entropy-Partition}%
\end{equation}
For $E$ and $V$, the partitions are%
\begin{equation}
dE\mathbf{\doteq}d_{\text{e}}E\mathbf{+}d_{\text{i}}E,dV\mathbf{\doteq
}d_{\text{e}}V\mathbf{+}d_{\text{i}}V, \label{E-V-Partition}%
\end{equation}
except that
\begin{equation}
d_{\text{i}}E\equiv0,d_{\text{i}}V\equiv0, \label{E-and-V-Partition}%
\end{equation}
for the simple reason that internal processes cannot change $E$ and $V$,
respectively. For $N$, the partition is
\[
dN\mathbf{\doteq}d_{\text{e}}N\mathbf{+}d_{\text{i}}N,
\]
with $d_{\text{i}}N$ present when there is chemical reaction. We will find the
shorthand notation
\begin{equation}
d_{\alpha}=(d,d_{\text{e}},d_{\text{i}}) \label{infinitesimal changes}%
\end{equation}
quite useful in the following for the various infinitesimal contributions.
These linear operators satisfy%
\begin{equation}
d\equiv d_{\text{e}}+d_{\text{i}}. \label{d-operator-rule}%
\end{equation}

\begin{definition}
\label{Def-EnsembleAverage}Ensemble Average: In NEQT, any thermodynamic
macroquantity $\boldsymbol{\varphi}$ is obtained by the instantaneous
\emph{ensemble average}%
\begin{equation}
\boldsymbol{\varphi}\equiv\left\langle \boldsymbol{\varphi}\right\rangle =%
{\textstyle\sum\nolimits_{k}}
p_{k}\boldsymbol{\varphi}_{k}, \label{ensemble-average}%
\end{equation}
where $\boldsymbol{\varphi}$ takes microvalues $\boldsymbol{\varphi}_{k}$ on
$\mathfrak{m}_{k}$ at that instant with probability $p_{k}$.
\end{definition}

We have used the standard convention to write $\boldsymbol{\varphi}$
for\ $\left\langle \boldsymbol{\varphi}\right\rangle $. \ For example, the
internal energy $E$ is given by%
\begin{equation}
E\equiv\left\langle E\right\rangle =%
{\textstyle\sum\nolimits_{k}}
p_{k}E_{k}, \label{E-ensemble-average}%
\end{equation}
while the statistical entropy, often called the Gibbs entropy, is given by%
\begin{equation}
S\equiv\left\langle S\right\rangle =%
{\textstyle\sum\nolimits_{k}}
p_{k}S_{k}=-%
{\textstyle\sum\nolimits_{k}}
p_{k}\ln p_{k} \label{S-Gibbs}%
\end{equation}
where the \emph{microentropy }$S_{k}$ is
\begin{equation}
S_{k}\equiv-\eta_{k}\doteq-\ln p_{k}; \label{Microentropy}%
\end{equation}
in terms of Gibbs' \emph{index of probability} $\eta_{k}\doteq\ln p_{k}$
\cite[p. 16]{Gibbs}.

\begin{definition}
\label{Def-Micropartition}Micropartition:The macropartition in Eq.
(\ref{Macro-Partition}) is extended to microvariable $\mathbf{\varphi}_{k}$:%
\begin{subequations}
\begin{equation}
d\mathbf{\varphi}_{k}\mathbf{\doteq}d_{\text{e}}\mathbf{\varphi}_{k}%
\mathbf{+}d_{\text{i}}\mathbf{\varphi}_{k}. \label{Micro-Partition}%
\end{equation}
Thus,%
\begin{equation}
dE_{k}\mathbf{\doteq}d_{\text{e}}E_{k}\mathbf{+}d_{\text{i}}E_{k}%
,dS_{k}\mathbf{\doteq}d_{\text{e}}S_{k}\mathbf{+}d_{\text{i}}S_{k}.
\label{Micro-Partition-E-S}%
\end{equation}
The micropartition also applies to $dp_{k}$:
\end{subequations}
\begin{subequations}
\begin{equation}
dp_{k}\mathbf{\doteq}d_{\text{e}}p_{k}\mathbf{+}d_{\text{i}}p_{k},
\label{dpk-Partition}%
\end{equation}
We define%
\begin{equation}
d_{\alpha}\eta_{k}\doteq\frac{d_{\alpha}p_{k}}{p_{k}}. \label{detak}%
\end{equation}

\end{subequations}
\end{definition}

In a process, $\boldsymbol{\varphi}$ undergoes infinitesimal changes
$d_{\alpha}\boldsymbol{\varphi}_{k}$ at fixed $p_{k}$, or infinitesimal
changes $d_{\alpha}p_{k}$ at fixed $\boldsymbol{\phi}_{k}$. The changes result
in two distinct ensemble averages or process quantities.

\begin{definition}
\label{Def-Macroaverage Partition}Infinitesimal macroquantities $\left\langle
d_{\alpha}\boldsymbol{\varphi}\right\rangle $ are ensemble averages
\begin{subequations}
\begin{equation}
d_{\alpha}\boldsymbol{\varphi}_{\text{m}}\equiv\left\langle d_{\alpha
}\boldsymbol{\varphi}\right\rangle =%
{\textstyle\sum\nolimits_{k}}
p_{k}d_{\alpha}\boldsymbol{\varphi}_{k} \label{ensemble-average-d-alpha-mech}%
\end{equation}
at fixed $\left\{  p_{k}\right\}  $ so they are isentropic. We identify them
as \emph{mechanical} macroquantity and write it as $d_{\alpha}%
\boldsymbol{\varphi}_{\text{m}}$. Infinitesimal macroquantities%
\begin{equation}
d_{\alpha}\varphi_{\text{s}}\doteq\left\langle \varphi_{k}d_{\alpha}%
\eta\right\rangle \label{ensemble-average-d-alpha-stoch}%
\end{equation}
that are ensemble averages involving $\left\{  d_{\alpha}p_{k}\right\}  $ are
identified as \emph{stochastic} macroquantities and written as $d_{\alpha
}\boldsymbol{\varphi}_{\text{s}}$. Together, they determine the change
$d_{\alpha}\boldsymbol{\varphi}$:%
\end{subequations}
\begin{equation}
d_{\alpha}\boldsymbol{\varphi}\equiv d_{\alpha}\left\langle
\boldsymbol{\varphi}\right\rangle \doteq d_{\alpha}\boldsymbol{\varphi
}_{\text{m}}+d_{\alpha}\boldsymbol{\varphi}_{\text{s}}.
\label{dph-im-dphi-s-identity}%
\end{equation}

\end{definition}

We must carefully distinguish $d_{\alpha}\left\langle \boldsymbol{\varphi
}\right\rangle $ and $\left\langle d_{\alpha}\boldsymbol{\varphi}\right\rangle
$. For $E$, we will use instead the following notation:%
\begin{equation}
d_{\alpha}Q=d_{\alpha}E_{\text{s}},d_{\alpha}W=-d_{\alpha}E_{\text{m}},
\label{Macro-heat-work-alpha}%
\end{equation}
from which follows%
\begin{equation}
d_{\alpha}E=d_{\alpha}Q-d_{\alpha}W. \label{FirstLaw-alpha}%
\end{equation}
Using Eq. (\ref{E-and-V-Partition}) for $d_{\text{i}}E$, we have the following
thermodynamic identity:%
\begin{equation}
d_{\text{i}}Q\equiv d_{\text{i}}W. \label{diQ-diW-EQ}%
\end{equation}
For $d_{\alpha}=d,d_{\text{e}}$, we have the following SI- and MI- formulation
of the first law:
\begin{subequations}
\label{FirstLaw}%
\begin{align}
dE  &  =dQ-dW\label{FirstLaw-SI}\\
dE  &  =d_{\text{e}}Q-d_{\text{e}}W, \label{FirstLaw-MI}%
\end{align}
where we have used the identity $dE=d_{\text{e}}E$. The top equation is also
known as the \emph{Gibbs fundamental relation}.

We can use the operator identity in Eq. (\ref{d-operator-rule}) to introduce
the following important identities following Notation \ref{Notation}%
\end{subequations}
\begin{align}
dW  &  =d_{\text{e}}W+d_{\text{i}}W,dQ=d_{\text{e}}Q+d_{\text{i}%
}Q,\label{Macro-work-heat-partition}\\
dW_{k}  &  =d_{\text{e}}W_{k}+d_{\text{i}}W_{k},dQ_{k}=d_{\text{e}}%
Q_{k}+d_{\text{i}}Q_{k}, \label{Micro-work-heat-partition}%
\end{align}
that will be very useful in the MNEQT. For an isolated system, $d_{\text{e}%
}W\equiv0,d_{\text{e}}Q\equiv0$. Note that $dW,dQ$, etc. do not represent
changes in any SI-macrovariable.

\begin{definition}
\label{Def-Heat-Work}We simply call $dQ$ and $dW$ macroheat and macrowork,
respectively, unless clarity is needed and use exchange macroheat for
$d_{\text{e}}Q$ and exchange macrowork for $d_{\text{e}}W$, irreversible
macroheat for $d_{\text{i}}Q$ and irreversible macrowork for $d_{\text{i}}W$, respectively.
\end{definition}

Manipulating $\mathbf{w}$ such as the "volume" $V$\ from the outside through
$\widetilde{\Sigma}_{\text{w}}$ requires some external "force" $\mathbf{F}%
_{\text{w}0}$, such as the external pressure $P_{0}$ to do some "exchange
macrowork" $d\widetilde{W}$ on $\Sigma$. We have $d\widetilde{\mathbf{w}%
}=d_{\text{e}}\widetilde{\mathbf{w}}=-d_{\text{e}}\mathbf{w}$, and
\begin{equation}
d\widetilde{W}\doteq\mathbf{F}_{\text{w}0}\cdot d\widetilde{\mathbf{w}%
}=-d_{\text{e}}W\doteq-\mathbf{F}_{\text{w}0}\cdot d_{\text{e}}\mathbf{w},
\label{ExchangeWork}%
\end{equation}
where $\mathbf{F}_{\text{w}0}=(P_{0},...,\mathbf{A}_{0}=0)$; see Fig.
\ref{Fig_System}. We use $\mathbf{F}_{\text{w}0}=(\mathbf{f}_{\text{w}%
0},\mathbf{A}_{0}=0)$.

In a NEQ system, the generalized force $\mathbf{F}_{\text{w}}$ in $\Sigma$
differs from $\mathbf{F}_{\text{w}0}$. The resulting macrowork done by
$\Sigma$ is
\begin{equation}
dW\doteq\mathbf{F}_{\text{w}}\cdot d\mathbf{W}. \label{GeneralizedWork-W}%
\end{equation}
This is the SI-macrowork and differs from the MI-macrowork $d\widetilde{W}$
$=-d_{\text{e}}W$. Here,
\begin{equation}
\mathbf{F}_{\text{w}}\doteq-\partial E/\partial\mathbf{W=}(P(t),...,\mathbf{A}%
(t))=(\mathbf{f}(t),\mathbf{A}(t)); \label{GeneralizedForce-W}%
\end{equation}
see Fig. \ref{Fig_System}. The SI-affinity $\mathbf{A}$ corresponding to
$\boldsymbol{\xi}$ \cite{Prigogine,Prigogine0} is nonzero, except in EQ, when
it vanishes: $\mathbf{A}_{\text{eq}}\equiv\mathbf{A}_{0}=0=0$
\cite{deGroot,Prigogine}. The " SI-macrowork" $dW_{\boldsymbol{\xi}}$ done by
$\Sigma$ as $\boldsymbol{\xi}$ varies is
\begin{equation}
dW_{\boldsymbol{\xi}}\doteq\mathbf{A\cdot}d\boldsymbol{\xi}.
\label{GeneralizedForce-Xi}%
\end{equation}
Even for an isolated NEQ system, $dW_{\boldsymbol{\xi}}$ will not vanish; it
vanishes only in EQ, since $\boldsymbol{\xi}$ does no work when $\mathbf{A}%
_{0}=0$; however, $\mathbf{F}_{\text{w}0},d\widetilde{W}$ and $d_{\text{e}}W$
are unaffected by the presence of $\boldsymbol{\xi}$.

The macroforce \emph{imbalance} is the difference%
\begin{subequations}
\begin{equation}
\Delta\mathbf{F}^{\text{w}}\doteq(\mathbf{F}_{\text{w}}-\mathbf{F}_{\text{w}%
0})=(\mathbf{f}_{\text{w}}-\mathbf{f}_{\text{w}0},\mathbf{A}).
\label{work macroforce}%
\end{equation}
In general, $\mathbf{A}$ controls the behavior of $\boldsymbol{\xi}$ in
$\mathcal{M}$ \cite{Prigogine,Prigogine0} and vanishes when EQ is reached
\cite{deGroot,Prigogine}.\ Here, we will take a more general view of
$\mathbf{A}$, and extend its definition to $\mathbf{X}$ also. In particular,
$\Delta F^{\text{h}}\doteq T_{0}-T$ also plays the role of an affinity
\cite{Gujrati-I} so we can include it with $\Delta\mathbf{F}^{\text{w}}$ to
form set of \emph{thermodynamic macroforces} or of \emph{macroforce
imbalance}:%
\begin{equation}
\Delta\mathbf{F}\doteq(T_{0}-T,\mathbf{f}_{\text{w}}-\mathbf{f}_{\text{w}%
0},\mathbf{A}). \label{thermodynamic force}%
\end{equation}
The same reasoning also shows that $\Delta\mathbf{F}$ plays the role of an activity.

\bigskip The \emph{irreversible macrowork }$d_{\text{i}}W\doteq dW-d_{\text{e}%
}W\equiv dW+d\widetilde{W}$ is given by%
\end{subequations}
\begin{equation}
d_{\text{i}}W\doteq(\mathbf{f}_{\text{w}}-\mathbf{f}_{\text{w}0})\cdot
d_{\text{e}}\mathbf{w}+\mathbf{f}_{\text{w}}\cdot d_{\text{i}}\mathbf{w}%
+\mathbf{A\cdot}d\boldsymbol{\xi}\geq0. \label{Irrev-Work}%
\end{equation}
For the sake of clarity, we will take $V$ as a symbolic representation of
$\mathbf{X}$, and a single $\xi$ as an internal variable in many examples.
Then $\mathbf{W}=(V,\xi)$ is the macrowork parameter. In this case, we have
\begin{subequations}
\label{SI-MI-Work}%
\begin{align}
dW  &  =PdV+Ad\xi,d_{\text{e}}W=P_{0}dV,\label{SI-Work}\\
d_{\text{i}}W  &  =(P-P_{0})dV+Ad\xi, \label{MI-Work}%
\end{align}
provided $d_{\text{i}}V=0$.

The microanalogue of $\Delta\mathbf{F}^{\text{w}}$ is the internal microforce
imbalance%
\end{subequations}
\begin{equation}
\Delta\mathbf{F}_{k}^{\text{w}}\doteq(\mathbf{f}_{\text{w}k}-\mathbf{f}%
_{\text{w}0},\mathbf{A}_{k}), \label{thermodynamic microforce}%
\end{equation}
which determines the internal microwork%
\begin{equation}
d_{\text{i}}W_{k}\doteq(\mathbf{f}_{\text{w}k}-\mathbf{f}_{\text{w}0})\cdot
d_{\text{e}}\mathbf{w}+\mathbf{f}_{\text{w}k}\cdot d_{\text{i}}\mathbf{w}%
+\mathbf{A}_{k}\mathbf{\cdot}d\boldsymbol{\xi}, \label{InternalMicrowork}%
\end{equation}
as the exchange microwork is%
\begin{equation}
d_{\text{e}}W_{k}\doteq\mathbf{f}_{\text{w}0}\cdot d_{\text{e}}\mathbf{w}%
=d_{\text{e}}W\mathbf{,\forall}k. \label{ExchangeMicrowork}%
\end{equation}

\section{Internal Variables\label{Sec-InternalVariables}}

We should emphasize that the concept of internal variables and their
usefulness in NEQT has a long history. We refer the reader to an excellent
exposition of this topic in the monograph by Maugin \cite[see Ch. 4]{Maugin}.
We consider a few simple examples to justify why internal variables are needed
to uniquely specify a $\mathcal{M}$, and how to identify them for various systems.

It should be stated that in order to capture a NEQ process, internal variables
are usually \emph{necessary}. Another way to appreciate this fact is to
realize that

\begin{remark}
\label{Remark-IsolatedSystem}For an isolated system, all the observables in
$\mathbf{X}_{0}$ are fixed so if the entropy is a function of $\mathbf{X}_{0}$
only, it \emph{cannot} change
\cite{Gujrati-I,Gujrati-II,Gujrati-Entropy1,Gujrati-Entropy2} even if the
system is out of EQ.
\end{remark}

Thus, we need additional independent variables to ensure the law of increase
of entropy for a NEQ isolated system. A point in $\mathfrak{S}_{\mathbf{X}}$
represents $\mathcal{M}_{\text{eq}}$, but a point $\mathfrak{S}_{\mathbf{Z}}$
represents $\mathcal{M}$. In EQ, internal variables are no longer independent
of the observables. Consequently, their affinities (see later) vanish in EQ.
It is common to define the internal variables so their EQ values vanish. We
now discuss various scenarios where they are needed for a proper consideration.

\subsection{A Two-level System}

Consider a NEQ system of $N$ particles such as Ising spins, each of which can
be in two levels, forming an isolated system $\Sigma_{0}$ of volume $V$. Let
$\rho_{l}$ and $e_{l}(V),l=1,2$ denote the probabilities and energies of the
two levels of a particle in a NEQ macrostate so that $\rho_{1},\rho_{2}$ keep
changing. We have assumed that $e_{l}(V)$ depends on the observable $V$ only,
which happens to be constant for $\Sigma_{0}$. We have $e=\rho_{1}%
e_{1}(V)+\rho_{2}e_{2}(V)$ for the average energy per particle, which is also
a constant for $\Sigma_{0}$, and
\[
d\rho_{1}+d\rho_{2}=0
\]
as a consequence of $\rho_{1}+\rho_{2}=1$. Using $de=0$, we get
\[
d\rho_{1}+d\rho_{2}e_{2}/e_{1}=0,
\]
which, for $e_{1}\neq e_{2}$, is inconsistent with the first equation (unless
$d\rho_{1}=0=d\rho_{2}$, which corresponds to EQ). Thus, $e_{l}(V)$ cannot be
treated as constant in evaluating $de$. In other words, there must be an extra
dependence in $e_{l}$ so that%
\[
e_{1}d\rho_{1}+d\rho_{2}e_{2}+\rho_{1}de_{1}+\rho_{2}de_{2}=0,
\]
and the inconsistency is removed. This extra dependence must be due to
\emph{independent} internal variables that are not controlled from the outside
(isolated system) so they continue to relax in $\Sigma_{0}$ as it approaches
EQ. Let us imagine that there is a single internal variable $\xi$ so that we
can express $e_{l}$ as $e_{l}(V,\xi)$ in which $\xi$ continues to change as
the system comes to equilibrium. The above equation then relates $d\rho_{1}$
and $d\xi$; they both vanish simultaneously as EQ is reached. We also see that
without any $\xi$, the isolated system cannot equilibrate; see Remark
\ref{Remark-IsolatedSystem}.

\subsection{A Many-level System}

The above discussion is easily extended to a $\Sigma$ with many energy levels
of a particle with the same conclusion that at least a single internal
variable is required to express $e_{l}=e_{l}(V,\xi)$ for each level $l$. We
can also visualize the above system in terms of microstates. A microstate
$\mathfrak{m}_{k}$ refers to a particular distribution of the $N$ particles in
any of the levels with energy $E_{k}=%
{\textstyle\sum\nolimits_{l}}
N_{l}e_{l}$, where $N_{l}$ is the number of particles in the $l$th level, and
is obviously a function of $N,V,\xi$ so we will express it as $E_{k}(V,\xi)$;
we suppress the dependence on $N$. This makes the average energy of the system
also a function of $V,\xi$, which we express as $E(V,\xi)$.

\subsection{Disparate Degrees of freedom}

In classical statistical mechanics, the kinetic and potential energies $K$ and
$U$, respectively, are functions of independent variables. Only their sum
$K+U=E$ can be controlled from the outside, but not individually. Thus, one of
them can be treated as an internal variable. In a NEQ macrostates, each term
can have its own temperature. Only in EQ, do they have the same temperature.

This has an important consequence for glasses, where the vibrational degrees
of freedom (dof$_{\text{v}}$) come to EQ with the heat bath at $T_{0}$ faster
than the configurational degrees of freedom (dof$_{\text{c}}$), which have a
different temperature than $T_{0}$. The disparity in dof$_{\text{v}}$ and
dof$_{\text{c}}$ cannot be controlled by the observer so it plays the role of
an internal variable. A well-known equation, the Tool-Narayanaswamy equation
is concerned with this disparity and is discussed in Sect.
\ref{Sec-Tool-Narayan}.

Consider a collection of semiflexible polymers in a solution on a lattice. The
interaction energy $E$ consists of several additive terms as discussed in
\cite[Eq. (40)]{Gujrati-II}: the interaction energy $E_{\text{ps}}$\ between
the polymer and the solvent, the interaction energy $E_{\text{ss}}$\ between
the solvent, the interaction energy $E_{\text{pp}}$\ between polymers. Only
the total $E$ can be controlled from the outside so the remaining terms
determine several internal variables.

In the examples above, the internal variables are not due to spatial
inhomogeneity. An EQ system is uniform. Thus, the presence of $\xi$ suggests
some sort of nonuniformity in the system. To appreciate its physics, we
consider a slightly different situation below as a possible example of
nonuniformity.%
\begin{figure}
[ptb]
\begin{center}
\includegraphics[
trim=0.798533in 0.000000in 0.799189in 0.499309in,
height=1.3033in,
width=3.0035in
]%
{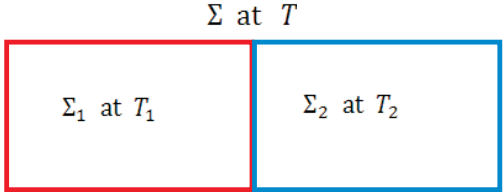}%
\caption{A composite system $\Sigma$ consisting of two identical subsystems
$\Sigma_{1}$ at temperature $T_{1}$ and $\Sigma_{2}$ at temperature $T_{2}$.
It will be seen later in Sec. \ref{Sec-Composite System} that the
thermodynamic temperature of $\Sigma$ can be defined as $T$ given by Eq.
(\ref{beta-A-Composite}). The irreversibility in $\Sigma$ requires one
internal variable $\xi$ given in Eq. (\ref{Internal Variable-1}).}%
\label{Fig-CompositeSystem}%
\end{center}
\end{figure}

\subsection{Nonuniformity \label{Sec-Nonuniformity}}

(a) We consider as a simple NEQ example a composite isolated system $\Sigma$,
see Fig. \ref{Fig-CompositeSystem}, consisting of two subsystems $\Sigma_{1}$
and $\Sigma_{2}$ of identical volumes and numbers of particles but at
different temperatures $T_{1}$ and $T_{2}$ at any time $t<\tau_{\text{eq}}$
before EQ is reached at $t=\tau_{\text{eq}}$ so the subsystems have different
time-dependent energies $E_{1}$ and $E_{2}$, respectively. We assume a
diathermal wall separating $\Sigma_{1}$ and $\Sigma_{2}$. Treating each
subsystem in EQ at each $t$, we write their entropies as $S_{1}(E_{1}%
,V/2,N/2)$ and $S_{2}(E_{2},V/2,N/2)$, which we simply show as $S_{1}(E_{1})$
and $S_{2}(E_{2})$ as we will not let their volumes and particles numbers
change. The entropy $S\doteq S_{1}(E_{1})+S_{2}(E_{2})$ of $\Sigma$ is a
function of $E_{1}$ and $E_{2}$. Obviously, $\Sigma$ is in a NEQ macrostate at
each $t<\tau_{\text{eq}}$. As $E_{1}$ and $E_{2}$ do not refer to $\Sigma$, we
form two independent combinations from $E_{1}$ and $E_{2}$%

\begin{equation}
E=E_{1}+E_{2},\xi=E_{1}-E_{2}, \label{Internal Variable-1}%
\end{equation}
that refer to $\Sigma$ so that we can express the entropy as $S(E,\xi)$ for
$\Sigma$ treated as a blackbox $\Sigma_{\text{B}}$; we do not need to know
about its interior (its inhomogeneity) anymore. Here, $\xi$ plays the role of
an internal variable, which continues to relax\ towards zero as $\Sigma$
approaches EQ. For given $E$ and $\xi$,\ $S(E,\xi)$ has the \ maximum possible
values since both $S_{1}$ and $S_{2}$ have their maximum value. As we will see
below, this is the idea behind the concept of \emph{internal equilibrium} in
which $S(E,\xi)$ is a state function of state variables and continues to
increase as $\xi$ decreases and vanishes in EQ. In this macrostate,
$S(E,\xi=0)$ has the maximum possible value for fixed $E$ so it becomes a
state function; see Definition \ref{Def-InternalEQ}. This case and its various
extensions are investigated in MNEQT in Sec. \ref{Sec-Composite System}.

(b) We can easily extend the model to include four identical subsystems of
fixed and identical volumes and numbers of particles, but of different
energies $E_{1},E_{2},E_{3}$, and $E_{4}$. Instead of using these $4$
independent variables, we can use the following four independent combinations%
\begin{align}
E  &  =E_{1}+E_{2}+E_{3}+E_{4}=\text{constant},\nonumber\\
\xi &  =E_{1}+E_{2}-E_{3}-E_{4},\nonumber\\
\text{ }\xi^{\prime}  &  =E_{1}-E_{2}+E_{3}-E_{4},\nonumber\\
\xi^{\prime\prime}  &  =E_{1}-E_{2}-E_{3}+E_{4}, \label{Internal Variable-2}%
\end{align}
to express the entropy of $\Sigma$ as $S(E,\xi,\xi^{\prime},\xi^{\prime\prime
})$. The pattern of extension for this simple case of energy inhomogeneity. is evident.

(c) We make the model a bit more interesting by allowing the volumes $V_{1}$
and $V_{2}$ to also vary as $\Sigma$ equilibrates. Apart from the internal
variable $\xi$, we require another internal variable $\xi^{\prime}$ to form
two independent combinations
\begin{equation}
V=V_{1}+V_{2}=\text{constant},\text{ }\xi^{\prime}=V_{1}-V_{2}
\label{Internal Variable-3}%
\end{equation}
so that we can use $S(E,V,\xi,\xi^{\prime})\doteq S_{1\text{eq}}(E_{1}%
,V_{1})+S_{2\text{eq}}(E_{2},V_{2})$ for the entropy of $\Sigma$ in terms of
the entropies of $\Sigma_{1}$ and $\Sigma_{2}$.

(d) In the above examples, we have assumed the subsystems to be in EQ. We now
consider when the subsystems are in IEQ. We consider the simple case of
two\ subsystems $\Sigma_{1}$ and $\Sigma_{2}$ of identical volumes and numbers
of particles. Each subsystem is in different IEQ macrostates described by
$E_{1},\xi_{1}$ and $E_{2},\xi_{2}$. We now construct four independent
combinations%
\begin{align}
E  &  =E_{1}+E_{2}=\text{constant},\xi=E_{1}-E_{2},\nonumber\\
\xi^{\prime}  &  =\xi_{1}+\xi_{2},\xi^{\prime\prime}=\xi_{1}-\xi_{2},
\label{Internal Variable-4}%
\end{align}
which can be used to express the entropy of $\Sigma$ as $S(E,\xi,\xi^{\prime
},\xi^{\prime\prime})$.

(e) The example in (a) can be easily extended to the case of expansion and
contraction by replacing $E,E_{1}$, and $E_{2}$ by $N,N_{\text{L}}$, and
$N_{\text{R}}$, see Fig. \ref{Fig_Expansion}, to describe the diffusion of
particles \cite{Vilar-Rubi}. The role of $\beta$ and $E$, etc. are played by
$\beta\mu$ and $N$, etc.

\subsection{Relative Motion in Piston-Gas System \label{Sec-Piston-Spring}}

We now consider the motion of the piston in Fig. \ref{Fig_Piston-Spring}(a)
because of the pressure difference across it. The discussion also shows how
the Hamiltonian becomes dependent on internal variables, and how the system is
maintained \emph{stationary} despite motion of its parts.

Let $\mathbf{P}_{\text{p}}$ denote the momentum of the piston. The gas, the
cylinder and the piston constitute the system $\Sigma$. We have a gas of mass
$M_{\text{g}}$ in the cylindrical volume $V_{\text{g}}$, the piston of mass
$M_{\text{p}}$, and the rigid cylinder (with its end opposite to the piston
closed) of mass $M_{\text{c}}$. However, we will consider the composite
subsystem $\Sigma_{\text{gc}}=\Sigma_{\text{g}}\cup\Sigma_{\text{c}}$ so that
with $\Sigma_{\text{p}}$ it makes up $\Sigma$. The Hamiltonian $\mathcal{H}$
of the system is the sum of $\mathcal{H}_{\text{gc}}$ of the gas and cylinder,
$\mathcal{H}_{\text{p}}$ of the piston, the interaction Hamiltonian
$\mathcal{H}_{\text{int}}$ between the two subsystems $\Sigma_{\text{gc}}%
\ $and $\Sigma_{\text{p}}$, and the interaction Hamiltonian $\mathcal{H}%
_{\text{sm}}$ between $\Sigma$ and $\widetilde{\Sigma}$. As is customary, see
the discussion in Sect. \ref{Sec-Notation}, we will neglect $\mathcal{H}%
_{\text{sm}}$ here. We assume that the centers-of-mass of $\Sigma_{\text{gc}}$
and $\Sigma_{\text{p}}$ are moving with respect to the medium with linear
momentum $\mathbf{P}_{\text{gc}}$ and $\mathbf{P}_{\text{p}}$, respectively.
We do not allow any rotation for simplicity. We assume that
\begin{equation}
\mathbf{P}_{\text{gc}}+\mathbf{P}_{\text{p}}=0,
\label{Stationary_Momentum_Condition}%
\end{equation}
so that $\Sigma$ is at rest with respect to the medium. Thus,\
\begin{equation}
\mathcal{H}(\left.  \mathbf{x}\right\vert V,\mathbf{P}_{\text{gc}}%
,\mathbf{P}_{\text{p}})=%
{\textstyle\sum\nolimits_{\lambda}}
\mathcal{H}_{\lambda}(\left.  \mathbf{x}_{\lambda}\right\vert V_{\lambda
},\mathbf{P}_{\lambda})+\mathcal{H}_{\text{int}}, \label{TotalHamiltonian}%
\end{equation}
where $\lambda=$gc,p, $\mathbf{x}_{\lambda}\mathbf{=(r}_{\lambda}%
\mathbf{,p}_{\lambda}\mathbf{)}$ denotes a point in the phase space
$\Gamma_{\lambda}$ of $\Sigma_{\lambda}$; $V_{\lambda\text{ }}$is the volume
of $\Sigma_{\mathbf{\lambda}}$, and $V=V_{\text{gc}}+V_{\text{p}}$ is the
volume of $\Sigma$. We do not exhibit the number of particles $N_{\text{g}%
},N_{\text{c}},N_{\text{p}}$ as we keep them fixed. We let $\mathbf{x}$
denotes the collection ($\mathbf{x}_{\text{gc}},\mathbf{x}_{\text{p}}$). Thus,
$\mathcal{H}(\left.  \mathbf{x}\right\vert V,\mathbf{P}_{\text{gc}}%
,\mathbf{P}_{\text{p}})$ and the average energy $E$ depend on the parameters
$V,\mathbf{P}_{\text{gc}},\mathbf{P}_{\text{p}}$. As the relative motion
cannot be controlled from the outside, one of the momenta plays the role of an
internal variable.
\begin{figure}
[ptb]
\begin{center}
\includegraphics[
height=1.689in,
width=3.2785in
]%
{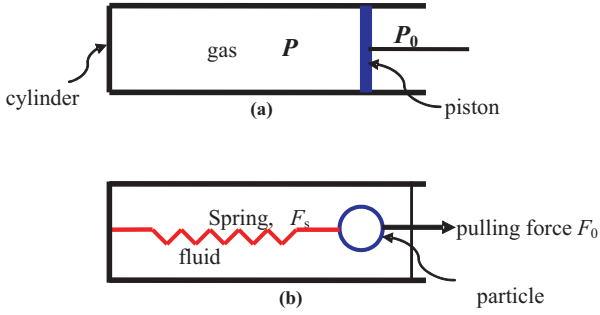}%
\caption{We schematically show a system of (a) gas in a cylinder with a
movable piston under an external pressure $P_{0\text{ }}$controlling the
volume $V$ of the gas, and (b) a particle attached to a spring in a fluid
being pulled by an external force $F_{0}$, which causes the spring to stretch
or compress depending on its direction. In an irreversible process, the
internal pressure $P$ (the spring force $F_{\text{s}}$) is different in
magnitude from the external pressure $P_{0}$ (external force $F_{0}$).}%
\label{Fig_Piston-Spring}%
\end{center}
\end{figure}

We discuss the example of the spring in Fig. \ref{Fig_Piston-Spring}(b) will
be discussed in Sect. \ref{Sec-Friction}.

\subsection{Extended State Space}

It should be clear from above that we can identify the entropy

If we divide $\Sigma$ into many subsystems $\left\{  \Sigma_{i}\right\}  $ so
that they are \emph{all} quasi-independent, then the entropy additivity gives%
\[
S(\mathbf{X}(t)\mathbf{,}t)=%
{\textstyle\sum\nolimits_{i}}
S_{i}(\mathbf{X}_{i}(t)\mathbf{,}t).
\]

As we will be dealing with the Hamiltonian of the system, it is useful to
introduce the notation in Eq. (\ref{w-W-eq}) with $\mathbf{W}=(\mathbf{w,}%
\boldsymbol{\xi})$. Then, $E$ and $E_{k}$\ become a function of $\mathbf{W}$
as we will show in Sec. \ref{Sec-HamiltonianTrajectories}. Here, $\mathbf{W}$
appears as a \emph{parameter} in the Hamiltonian, which we will write as
$\mathcal{H}(\left.  \mathbf{x}\right\vert \mathbf{W})$, where $\mathbf{x}$ is
a point (collection of coordinates and momenta of the particles) in the phase
space $\Gamma(\mathbf{W})$ specified by $\mathbf{W}$. As an example,
$V,\mathbf{P}_{\text{gc}},\mathbf{P}_{\text{p}}$ are the parameters in Sec.
\ref{Sec-Friction}. When the system moves about in the phase space
$\Gamma(\mathbf{W})$, $\mathbf{x}$ changes but $\mathbf{W}$ as a parameter
remains fixed in a state subspace $\mathfrak{S}_{\mathbf{W}}\subset
\mathfrak{S}_{\mathbf{Z}}$; see the discussion of Eq. (\ref{HamiltonianChange}).

It is important to draw attention to the following important distinction
between the Hamiltonian $\mathcal{H}$ and the ensemble average energy $E$; see
Eq. (\ref{Ensemble_Average}). While $E$ accounts for the stochasticity through
microstate probabilities, the use of the Hamiltonian is going to be restricted
to a particular microstate. In other words, the Hamiltonian depends on
$\mathbf{x}$ and $\mathbf{W}$ but the energy depends on the entropy $S$ and
$\mathbf{W}$. The energy $E_{k}$ of $\mathfrak{m}_{k}$, on the other hand,
depends only on $\mathbf{W}$ and denotes the value of $\mathcal{H}$ for
$\mathfrak{m}_{k}$. In the following, we will always treat Hamiltonians and
microstate energies as equivalent description, which does not depend on
knowing $\{p_{k}\}$; the average energies depend on $\{p_{k}\}$ for their
definition; see Eq. (\ref{E-ensemble-average}).

\section{NEQ\ Entropy \label{Sec-NEQ-S}}

\subsection{Determination of $S$}

The uniqueness issue about the NEQ macrostate says nothing about the entropy
of an arbitrary (so it may be nonunique) macrostate $\mathcal{M}:\left\{
\mathfrak{m}_{k},p_{k}\right\}  $, which is \emph{always} given by the Gibbs
entropy in Eq. (\ref{S-Gibbs}); see also \cite{Shannon}. The ensemble
averaging implies that the entropy is a statistical concept, as is the energy
$E=\left\langle E\right\rangle $, Eq. (\ref{E-ensemble-average}).

We now justify the Gibbs' statistical formulation of $S$ for any arbitrary
$\mathcal{M}$ in thermodynamics. The demonstration follows a very simple
combinatorial argument \cite{Gujrati-Entropy2} using Boltzmann concept of
thermodynamic entropy. In the demonstration, $\mathcal{M}$ is not required to
be uniquely identified. This entropy satisfies the \emph{law of increase of
entropy} as is easily seen by the discussion by Landau and Lifshitz
\cite{Landau} for a NEQ ideal gas \cite{Note-Landau} in $\mathfrak{S}%
_{\mathbf{X}}$ to derive the equilibrium distribution. Thus, the form in Eq.
(\ref{S-Gibbs}) is not restricted to only uniquely identified $\mathcal{M}$'s.
Hopefully, this will become clear below.

\begin{proposition}
\label{Prop-SecondLaw}\textbf{The Second Law} The NEQ Gibbs entropy
$S_{0}(\mathbf{X}_{0}\mathbf{,}t)$ of an isolated system $\Sigma_{0}$ is
bounded above by its equilibrium entropy $S_{0}(\mathbf{X}_{0})$ and
continuously increases towards it so that \cite{Landau}
\begin{equation}
dS_{0}(\mathbf{X}_{0}\mathbf{,}t)/dt\geq0. \label{SecondLaw0}%
\end{equation}

\end{proposition}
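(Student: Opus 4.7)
The plan is to establish the two assertions of the proposition separately: the upper bound $S_{0}(\mathbf{X}_{0},t)\leq S_{0}(\mathbf{X}_{0})$, already anticipated in (\ref{EntropyBound}), and the monotonic approach $dS_{0}/dt\geq 0$. Both assertions rest on the Gibbs form (\ref{S-Gibbs}) applied to the countable collection $\{\mathfrak{m}_{k}\}$ of microstates compatible with the fixed observables $\mathbf{X}_{0}$ of the isolated system $\Sigma_{0}$.

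For the upper bound I would use a standard Boltzmann-style combinatorial/variational argument. Among all probability assignments $\{p_{k}\}$ supported on the microstates with $E_{k}=E_{0}$ (and the remaining observables fixed), the functional $-\sum_{k}p_{k}\ln p_{k}$ is maximized by the uniform distribution $p_{k}^{\text{eq}}=1/W(\mathbf{X}_{0})$, where $W(\mathbf{X}_{0})$ is the number of accessible microstates, yielding $S_{0}(\mathbf{X}_{0})=\ln W(\mathbf{X}_{0})$. This maximization is immediate from the concavity of $-x\ln x$, or equivalently from the non-negativity of the relative entropy $\sum_{k}p_{k}\ln(p_{k}/p_{k}^{\text{eq}})\geq 0$. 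Because the argument is instantaneous, it yields $S_{0}(\mathbf{X}_{0},t)\leq S_{0}(\mathbf{X}_{0})$ at every $t$.

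For the monotonicity I would combine the macropartition (\ref{Entropy-Partition}) with isolation. Since $\Sigma_{0}$ exchanges nothing, $d_{\text{e}}S_{0}\equiv 0$, so $dS_{0}=d_{\text{i}}S_{0}$, and the task reduces to $d_{\text{i}}S_{0}\geq 0$. Following the Landau-Lifshitz treatment of the ideal gas, I would introduce a master equation for the reduced probabilities $p_{k}(t)$ of the cells $\{\delta\mathbf{x}_{k}\}$ defined in Sec.~\ref{Sec-Notation}, governed by transition rates $w_{kk'}$ obeying microscopic reversibility $w_{kk'}=w_{k'k}$ between microstates of equal $E_{k}$ allowed by fixed $\mathbf{X}_{0}$. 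Differentiating (\ref{S-Gibbs}), substituting the master equation, and symmetrizing the resulting double sum produces
\begin{equation*}
\frac{dS_{0}}{dt}=\tfrac{1}{2}\sum_{k,k'}w_{kk'}(p_{k'}-p_{k})(\ln p_{k'}-\ln p_{k})\geq 0,
\end{equation*}
since each summand is the product of terms of the same sign, with equality only when $p_{k}=p_{k'}$ for every communicating pair, i.e.\ at equilibrium.

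The hard part is precisely the middle step: the exact Liouville flow on the fine-grained phase space preserves the Gibbs entropy identically, so monotonic increase cannot be proved for the truly microscopic $\{p_{k}\}$ without further input. I would therefore frame the argument at the level of the coarse-grained cells $\delta\mathbf{x}_{k}$ of Sec.~\ref{Sec-Notation}, adopting microscopic reversibility (or equivalently a detailed-balance-respecting Markov approximation) as the single nontrivial dynamical hypothesis; once this is granted, both halves of the proposition reduce to the two convexity inequalities identified above, and the strict inequality in (\ref{SecondLaw0}) holds away from the uniform microcanonical distribution while equality is attained precisely at $\mathcal{M}_{\text{eq}}$.
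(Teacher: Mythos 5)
Your proof is correct under the hypotheses you state, but it follows a genuinely different route from the paper's. The upper bound is handled essentially the same way in both (maximization of the Gibbs form (\ref{S-Gibbs}) by the equiprobable distribution, cf.\ Eqs.\ (\ref{Equiprobable-Z})--(\ref{S_BoltzmannZ})). For monotonicity, however, the paper (Sect.\ \ref{Sec-SecondLawProof}) does not use a master equation at all: it invokes the flat-distribution postulate of Remark \ref{Remark-FlatDistribution} at every instant, so that $S(t)=\ln n(t)$ with $n(t)$ the number of \emph{available} microstates, and then asserts that the sample space only ever enlarges, $\Gamma_{1}\subset\Gamma_{2}\subset\cdots\subset\Gamma$, giving the sequence $S=0,\ln 2,\ldots,\ln W$. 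Your route is the classical master-equation H-theorem: you trade the paper's postulate of monotone accessibility for the postulate of a Markovian coarse-grained dynamics with symmetric rates $w_{kk^{\prime}}=w_{k^{\prime}k}$. Both arguments require a dynamical input beyond Liouville evolution --- you are right that the fine-grained Hamiltonian flow conserves the Gibbs entropy exactly, and the paper's growing-sample-space assertion is no less an extra postulate than your detailed balance. What your version buys is a quantitative rate formula and a sharp equality condition ($p_{k}=p_{k^{\prime}}$ on every communicating pair, recovering $\mathcal{M}_{\text{eq}}$); what the paper's version buys is that it never needs to model the dynamics of the individual $p_{k}(t)$ and stays entirely within its flat-distribution framework, at the cost of an accessibility assumption that is arguably closer to the conclusion itself. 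Your proposal is therefore a valid and, in the monotonicity step, more informative alternative, provided you state the Markov hypothesis as explicitly as you have done.
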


\subsection{General Formulation of the Statistical
Entropy\label{Marker_NonEq-S}\ \ \ }

We focus on a macrostate $\mathcal{M}(t)\ $of some body $\Sigma$\ at a given
instant $t$, which refers to the set $\mathbf{m}=\left\{  \mathfrak{m}%
_{k}\right\}  $\ of microstates and their probabilities $\mathbf{p}=\left\{
p_{k}\right\}  $. The microstates are specified by $(E_{k}(t),\mathbf{W}(t))$,
and may not uniquely specify the macrostate $\mathcal{M}(t)$. Thus, even the
set $\mathbf{m}$ need not be uniquely specified. In the following, we will use
the set $\mathbf{Z}(t)=(E(t),\mathbf{W}(t))$ for the set $\mathbf{m}$ for
simplicity. We will also denote $\mathbf{Z}(t)$ by $\overline{\mathbf{Z}}$ so
that we can separate out the explicit variation due to $t$. For simplicity, we
suppress $t$ in $\mathcal{M}$ in the following. For the computation of
combinatorics, the probabilities are handled in the following abstract way. We
consider a large number $\mathcal{N=C}W(\overline{\mathbf{Z}})$ of independent
\emph{replicas} or \emph{samples} of $\Sigma$, with $\mathcal{C}$\ some large
integer constant and $W(\overline{\mathbf{Z}})$\ the number of distinct
microstates $\mathfrak{m}_{k}$. We will see that $W(\overline{\mathbf{Z}}%
)$\ is determined by $\mathfrak{m}_{k}$'s having nonzero probabilities. We
will call them \emph{available} microstates. The samples should be thought of
as identically prepared experimental samples \cite{Gujrati-Symmetry}.

Let $\Gamma(\overline{\mathbf{Z}})$ denote the sample space spanned by
$\left\{  \mathfrak{m}_{k}\right\}  $, and let $\mathcal{N}_{k}(t)$ denote the
number of $k$th samples (samples specified by $\mathfrak{m}_{k}$) so that%
\begin{equation}
0\leq p_{k}(t)=\mathcal{N}_{k}(t)/\mathcal{N}\leq1;\ \
{\textstyle\sum\limits_{k=1}^{W(\overline{\mathbf{Z}})}}
\mathcal{N}_{k}(t)=\mathcal{N}. \label{sample_probability}%
\end{equation}
The above sample space is a generalization of the \emph{ensemble} introduced
by Gibbs, except that the latter is restricted to an equilibrium body, whereas
$\Gamma(\overline{\mathbf{Z}})$ refers to the body in any arbitrary macrostate
so that $p_{k}$ may be time-dependent, and need not be unique. The\emph{
ensemble average} of some quantity $\mathcal{Z}$\ over these samples is given
by Eq. (\ref{ensemble-average}). Thus,%
\begin{equation}
\left\langle \mathcal{Z}\right\rangle \equiv%
{\textstyle\sum\limits_{k=1}^{W(\overline{\mathbf{Z}})}}
p_{k}(t)\mathcal{Z}_{k},\ \
{\textstyle\sum\limits_{k=1}^{W(\overline{\mathbf{Z}})}}
p_{k}(t)\equiv1, \label{Ensemble_Average}%
\end{equation}
where $\mathcal{Z}_{k}$ is the value of $\mathcal{Z}$ in $\mathfrak{m}_{k}$.

The samples are, by definition, \emph{independent} of each other so that there
are no correlations among them. Because of this, we can treat the samples
$\left\{  \mathfrak{m}_{k}\right\}  $ to be the outcomes of some random
variable, the macrostate $\mathcal{M}(t)$. This independence property of the
outcomes is crucial in the following. They may be equiprobable but not
necessarily. The number of ways $\mathcal{W}$ to arrange the $\mathcal{N}$
samples into $W(\overline{\mathbf{Z}})$ distinct microstates is%
\begin{equation}
\mathcal{W\equiv N}!/%
{\textstyle\prod\limits_{k}}
\mathcal{N}_{k}(t)!. \label{Combinations}%
\end{equation}
Taking its natural log, as proposed by Boltzmann, to obtain an \emph{additive}
quantity per sample as
\begin{equation}
\mathcal{S}\equiv\ln\mathcal{W}/\mathcal{N},
\label{Ensemble_entropy_Formulation}%
\end{equation}
and using Stirling's approximation, we see easily that it can be written as
the average of the negative of Gibbs' index of probability\emph{:}
\begin{equation}
\mathcal{S}(\overline{\mathbf{Z}},t)\equiv-\left\langle \eta(t)\right\rangle
\equiv-%
{\textstyle\sum\limits_{k=1}^{W(\overline{\mathbf{Z}})}}
p_{k}(t)\ln p_{k}(t), \label{Gibbs_Formulation}%
\end{equation}
where we have also shown an explicit time-dependence, which merely reflects
the fact that it is not a state function in $\mathfrak{S}_{\overline
{\mathbf{Z}}}$, a reflection of the fact that $\mathcal{M}$ is not uniquely
specified in $\mathfrak{S}_{\overline{\mathbf{Z}}}$. We have put $t$ back
above for clarity. Thus, Eq. (\ref{Gibbs_Formulation}) is nothing but Eq.
(\ref{S-Gibbs}) in form, and thus justifies it for an arbitrary $\mathcal{M}$.

The above derivation is based on fundamental principles and does not require
the body to be in equilibrium; therefore, it is always applicable for any
arbitrary macrostate $\mathcal{M}(t)$. To the best of our knowledge, even
though such an expression has been extensively used in the literature for
NEQ\ entropy, it has been used by simply appealing to the information entropy
\cite{Shannon}.

The distinction between the Gibbs' statistical entropy $\mathcal{S}$ and the
thermodynamic entropy $S$ should be emphasized. The latter appears in the
Gibbs fundamental relation that relates the energy change $dE$ with the
entropy change $dS$ as is well known in classical thermodynamics, and as we
will also demonstrate below; see also Eq. (\ref{FirstLaw-SI}). The concept of
microstates is irrelevant for this, which is a purely thermodynamic relation.
On the other hand, $\mathcal{S}$ is solely determined by $\left\{
\mathfrak{m}_{k}\right\}  $ so its a statistical quantity. It then becomes
imperative to show their equivalence, mainly because $\mathcal{S}$ is based on
the Boltzmann idea. This equivalence has been justified elsewhere
\cite{Gujrati-Entropy1,Gujrati-Entropy2}, and will be briefly summarized below.

\begin{remark}
Because of this equivalence, we will no longer make any distinction between
the statistical Gibbs entropy and the thermodynamic entropy and will use the
standard notation $S$\ for both of them.
\end{remark}

\begin{remark}
The Gibbs entropy $\mathcal{S}$ appears as an instantaneous ensemble average,
see Definition \ref{Def-EnsembleAverage}. This average should be contrasted
with a \emph{temporal average} in which a macroquantity ${\varphi}$ is
considered as the average over a long period $\tau_{0}$ of time%
\[
{\varphi}=\frac{1}{\tau_{0}}%
{\textstyle\int\nolimits_{0}^{\tau_{0}}}
{\varphi}(t)dt,
\]
where ${\varphi}(t)$ is the value of ${\varphi}$ at time $t$ \cite{Landau}.
For an EQ macrostate $\mathcal{M}_{\text{eq}}$, both definitions give the same
result provided ergodicity holds. The physics of this average is that
$\mathbf{\varphi}(t)$ at $t$ represents a microstate of $\mathcal{M}%
_{\text{eq}}$. As $\mathcal{M}_{\text{eq}}$\ is invariant in time, these
microstates belong to $\mathcal{M}_{\text{eq}}$, and the time average is the
same as the ensemble average if ergodicity holds. However, for a NEQ
macrostate $\mathcal{M}(t)$, which continuously changes with time, the
temporal average is not physically meaningful as the microstate at time $t$
corresponds to $\mathcal{M}(t)$ and not to $\mathcal{M}(t=0)$ in that the
probabilities and $\mathbf{Z}$ are different in the two macrostates. Only the
ensemble average makes any sense at any time $t$ as was first pointed out in
\cite{Gujrati-ResidualEntropy}. Because of this, we only consider ensemble
averages in this review.
\end{remark}

The maximum possible value of $S(t)$ for given $\overline{\mathbf{Z}}%
\in\mathfrak{S}_{\overline{\mathbf{Z}}}$ occurs when $\mathfrak{m}_{k}$ are
\emph{uniquely }specified in $\mathfrak{S}_{\overline{\mathbf{Z}}}$. This
makes $S(t)$ a state function of $\overline{\mathbf{Z}}(t)$ with no explicit
time dependence, which we write as $S(\overline{\mathbf{Z}})$. Thus,
\begin{equation}
\left.  S_{\text{max}}(\overline{\mathbf{Z}},t)\right\vert _{\overline
{\mathbf{Z}}\text{ fixed}}=S(\overline{\mathbf{Z}}). \label{S_Boltzmann0}%
\end{equation}
The simplest way to understand the physical meaning is as follows: Consider
$\overline{\mathbf{Z}}\in\mathfrak{S}_{\mathbf{Z}}$ at some time $t$. As
$S(t)$ may not be a unique function of $\overline{\mathbf{Z}}$, we look at all
possible entropy functions for this $\overline{\mathbf{Z}}$. These entropies
correspond to all possible sets of $\left\{  p_{k}(t)\right\}  $ for a fixed
$\overline{\mathbf{Z}}$, and define different possible macrostates $\left\{
\mathcal{M}\right\}  $. We pick that particular $\overline{\mathcal{M}}%
\in\left\{  \mathcal{M}\right\}  $ among these that has the \emph{maximum
possible value} of the entropy, which we denote by $S(\overline{\mathbf{Z}})$
or $S(\mathbf{Z}(t))$ without any explicit $t$-dependence. This entropy is a
\emph{state function} $S(\overline{\mathbf{Z}})$. For a macroscopic system,
this occurs when the corresponding microstate probabilities for $\overline
{\mathcal{M}}$ are
\begin{subequations}
\begin{equation}
\overline{p}_{k}(t)=1/W(\overline{\mathbf{Z}})>0,\text{~~\ \ }\forall
\overline{m}_{k}\in\Gamma(\overline{\mathbf{Z}}), \label{Equiprobable-Z}%
\end{equation}
so that
\begin{equation}
S(\overline{\mathbf{Z}})=\ln W(\overline{\mathbf{Z}}). \label{S_BoltzmannZ}%
\end{equation}
We wish to point out the presence of nonzero probabilities in Eq.
(\ref{Equiprobable-Z}) that explains the comment above of available
microstates. Including microstates with zero probabilities will not correcting
account for the number of microstates with given $\overline{\mathbf{Z}}$.

There is an alternative to the above picture in which we can imagine the
$\Sigma$ for which $\overline{\mathbf{Z}}$ has been fixed, which essentially
"isolates" $\Sigma$ and converts it into a $\Sigma_{0}$. Then, as $t$ varies,
its entropy increases until it reaches its maximum value $S(\overline
{\mathbf{Z}})$ in accordance with Proposition \ref{Prop-SecondLaw}.
\end{subequations}
\begin{remark}
\label{Remark-FlatDistribution}We emphasize that $\overline{\mathbf{Z}%
}=(E,\mathbf{W})$ so $p_{k}$ above in Eq. (\ref{Equiprobable-Z}) is determined
by the average energy $E$ and not by the microstate energy $E_{k}$ as derived
later in Sect. (\ref{Sec-MicrostateProbabilities}). The $p_{k}$ in Eq.
(\ref{Equiprobable-Z}) basically replaces the actual probability distribution
in Eq. (\ref{microstate probability}) by a \emph{flat distribution} of height
$1/W(\overline{\mathbf{Z}})$ and width $W(\overline{\mathbf{Z}})$, a common
practice in the thermodynamic limit of statistical mechanics \cite{Landau}.
Despite this modification, the entropy has the same value for a macroscopic
system, for which $\beta$ and $\mathbf{F}_{\text{w}}$ are given by Eqs.
(\ref{T-beta}) and (\ref{Fw}), respectively; see also Sect.
\ref{Sec-MicrostateProbabilities}.
\end{remark}

Let us consider a different formulation of the entropy for a macrostate
$\mathcal{\bar{M}}(t)\in$ $\mathfrak{S}_{\overline{\mathbf{X}}}$\ specified by
some $\overline{\mathbf{X}}=\mathbf{X}(t)\mathbf{\subset Z}$ at some instance
$t$. This macrostate provides a more incomplete specification than in
$\mathfrak{S}_{\overline{\mathbf{Z}}}$. Applying the above formulation to
$\mathcal{\bar{M}}\in$ $\mathfrak{S}_{\overline{\mathbf{X}}}$, and consisting
of microstates $\left\{  \overline{m}_{k}\right\}  ,$ forming the set
$\overline{\mathbf{m}}\equiv\mathbf{m}(\overline{\mathbf{X}}),$ with
probabilities $\left\{  \overline{p}_{k}(t)\right\}  $, we find that
\begin{equation}
S(\overline{\mathbf{X}},t)\equiv-%
{\textstyle\sum\limits_{k=1}^{W(\overline{\mathbf{X}})}}
\overline{p}_{k}(t)\ln\overline{p}_{k}(t),%
{\textstyle\sum\limits_{k=1}^{W(\overline{\mathbf{X}})}}
\overline{p}_{k}(t)\equiv1, \label{Conventional_Entropy0}%
\end{equation}
is the entropy of $\mathcal{\bar{M}}$; here $W(\overline{\mathbf{X}})$ is the
number of distinct microstates $\overline{m}_{k}$. It should be obvious that%
\[
W(\overline{\mathbf{X}})\equiv%
{\textstyle\sum\nolimits_{\boldsymbol{\xi}(t)}}
W(\overline{\mathbf{Z}}).
\]
Again, under the equiprobable assumption
\[
\overline{p}_{k}(t)\rightarrow\overline{p}_{k\text{,eq}}=1/W(\overline
{\mathbf{X}}),\text{~~\ \ }\forall\overline{m}_{k}\in\Gamma(\overline
{\mathbf{X}}),
\]
$\Gamma(\overline{\mathbf{X}})$\ denoting the sample space spanned by
$\overline{\mathbf{m}}=\left\{  \overline{m}_{k}\right\}  $, the above entropy
takes its maximum possible value%
\begin{equation}
S_{\text{max}}(\overline{\mathbf{X}},t)=S(\overline{\mathbf{X}})=\ln
W(\overline{\mathbf{X}}), \label{S_Boltzmann}%
\end{equation}
which is the well-known value of the Boltzmann entropy for a body in
equilibrium%
\begin{equation}
S(\overline{\mathbf{X}})=\ln W(\overline{\mathbf{X}}), \label{Boltzmann_S}%
\end{equation}
and provides a statistical definition of, and hence connects it with the,
thermodynamic entropy of the body proposed by Boltzmann. The maximization
again has the same implication as in Eq. (\ref{S_Boltzmann0}):\ For given
$\overline{\mathbf{X}}$, we look for the maximum entropy at all possible
times. It is evident that%
\begin{equation}
S(\overline{\mathbf{Z}},t)\leq S(\overline{\mathbf{Z}})\leq S(\overline
{\mathbf{X}}). \label{Entropy_Inequalities0}%
\end{equation}
Thus, the NEQ entropy $S(\overline{\mathbf{Z}},t)$ as $t\rightarrow
\tau_{\text{eq}}$, the equilibration time, reduces to $S(\overline{\mathbf{X}%
})$ in EQ, as expected. Before equilibration, $S(\overline{\mathbf{Z}})$ in
$\mathfrak{S}_{\overline{\mathbf{Z}}}$ remains a nonstate function
$S(\overline{\mathbf{X}},t)$ in $\mathfrak{S}_{\overline{\mathbf{X}}}$\ where
we do not invoke $\boldsymbol{\xi}$. It is the variation in $\boldsymbol{\xi}$
that is responsible for the time variation in $S(\overline{\mathbf{X}},t)$. A
simple proof of this conclusion is given in Sect. \ref{Sec-M_nieq}; see Remark
\ref{Remark-NonIEQ-S_X} also. We can summarize this conclusion as

\begin{conclusion}
The variation in time in $S(\overline{\mathbf{X}},t)$ in $\mathfrak{S}%
_{\overline{\mathbf{X}}}$ is due to the missing set of internal variables
$\boldsymbol{\xi}$.
\end{conclusion}

We now revert back to the standard use of $\mathbf{X}$, and $\mathbf{Z}$. Let
us consider a body $\Sigma$, which we take to be isolated and out of
equilibrium so that its macrostate $\mathcal{M}$ spontaneously relaxes towards
$\mathcal{M}_{\text{eq}}$ at fixed $\mathbf{X}$. Its entropy $S(\mathbf{X,}t)$
in $\mathfrak{S}_{\mathbf{X}}$\ has an explicit time dependence, which
continue to increase towards $S(\mathbf{X})$. For such NEQ states, the
explicit time dependence in $S(\mathbf{X,}t)$ is explained by introducing
$\boldsymbol{\xi}$ to make their entropies a state function in an
appropriately chosen larger state space $\mathfrak{S}_{\mathbf{Z}}$
\cite{Gujrati-II}. It is also shown there that a NIEQ macrostate with a
nonstate function entropy $S(\mathbf{Z,}t)$ may be converted to an IEQ
macrostate with a state function entropy $S(\mathbf{Z}^{\mathbf{\prime}})$ by
going to an appropriately chosen larger state space $\mathfrak{S}%
_{\mathbf{Z}^{\prime}}$ spanned by $\mathbf{Z}^{\prime}$ with $\mathfrak{S}%
_{\mathbf{Z}}$ its proper subspace. Therefore, in most cases of interest here,
we would be dealing with a state function and usually write it as
$S(\mathbf{Z})$, unless a choice for $\mathbf{Z}$ has been made based on the
experimental setup. In that case, we must deal with a pre-determined state
space $\mathfrak{S}_{\mathbf{Z}}$ so that some NEQ states that lie outside
$\mathfrak{S}_{\mathbf{Z}}$ can become a state function in some $\mathfrak{S}%
_{\mathbf{Z}^{\prime}}\supset\mathfrak{S}_{\mathbf{Z}}$.

We have discussed above that the explicit time dependence in a NEQ macrostate
with a nonstate function entropy $S_{\text{neq}}(t)\doteq S(\mathbf{X,}t)$ is
due to additional state variables in $\mathbf{\xi}$ and that this NEQ
macrostate may be converted into an IEQ macrostate with a macrostate function
entropy $S_{\text{ieq}}(\mathbf{Z})$ by going from $\mathfrak{S}_{\mathbf{X}}%
$\ to an appropriately chosen larger state space $\mathfrak{S}_{\mathbf{Z}}$.
Similarly, it has been shown \cite{Gujrati-II} that a NIEQ macrostate
$\mathcal{M}_{\text{nieq}}$ in $\mathfrak{S}_{\mathbf{Z}}$\ with a nonstate
function entropy $S_{\text{nieq}}(t)\doteq S(\mathbf{Z,}t)$ may be converted
to an IEQ macrostate $\mathcal{M}_{\text{ieq}}^{\prime}$ in an appropriately
chosen larger state space $\mathfrak{S}_{\mathbf{Z}^{\prime}}$ with a state
function entropy $S_{\text{ieq}}(\mathbf{Z}^{\prime})$.The additional internal
variables $\mathbf{\xi}^{\prime}$ that are over and above $\mathbf{\xi}$ in
$\mathbf{Z}^{\mathbf{\prime}}$ give rise to additional entropy generation as
they relax for fixed $\mathbf{Z}$. This results in the following inequality:%
\begin{equation}
S_{\text{ieq}}(\mathbf{Z})\geq S_{\text{ieq}}(\mathbf{Z}^{\prime
})=S_{\text{nieq}}(\mathbf{Z,}t). \label{EntropyBound-IEQ}%
\end{equation}
However, if the choice for $\mathbf{Z}$ has been made based on the
experimental setup and the observation time $\tau_{\text{obs}}$, see Sect.
\ref{Sec-Choice-S_Z}, we must restrict our discussion to $\mathfrak{S}%
_{\mathbf{Z}}$ so that we must consider $\mathcal{M}_{\text{nieq}}$ in
$\mathfrak{S}_{\mathbf{Z}}$\ the following. This will be done in Sect.
\ref{Sec-M_nieq}; see Remarks \ref{Remark-IEQ-ARB-Macrostate} and
\ref{Remark-NonIEQ-S_X}.

\subsection{A Proof of The Second Law\label{Sec-SecondLawProof}}

The second law has been proven so far under different assumptions \cite[among
others]{Tolman,Rice,Jaynes,Gujrati-ResidualEntropy,Gujrati-Symmetry}. Here, we
provide a simple proof of it based on the postulate of the flat distribution;
see Remark \ref{Remark-FlatDistribution}. The current proof is an extension of
the proof given earlier see also \cite[Theorem 4]{Gujrati-Symmetry}. We
consider an isolated system $\Sigma_{0}$ for which the second law is expressed
by Eq. (\ref{SecondLaw0}). However, for simplicity, we will suppress the
subscript $0$ from all the quantities in this section. As the law requires
considering the instantaneous entropy as a function of time, we need to focus
on the sample space at each instant to determine its entropy $S$ as a function
of time. At each instance, it is an ensemble average over the instantaneous
sample space $\Gamma(t)$ formed by the instantaneous set $\mathbf{m}(t)$ of
available microstates, see Eq. (\ref{S-Gibbs}) or (\ref{Gibbs_Formulation}).
We will use the flat distributions for the microstates at each instance, see
Remark \ref{Remark-FlatDistribution}, so that the entropy is given by Eq.
(\ref{S_BoltzmannZ}).

To prove the second law, see Proposition \ref{Prop-SecondLaw}, we proceed in
steps by considering a sequence of sample spaces belonging to $\Gamma$\ as
follows \cite{Gujrati-ResidualEntropy,Gujrati-Symmetry}. At a given instant, a
system happens to be in some microstate. We start at $t=t_{1}=0$, at which
time $\Sigma$ happens to be in a microstate, which we label $\mathfrak{m}_{1}%
$. It forms a sample space $\Gamma_{1}$ containing $\mathfrak{m}_{1}$ with
probability $p_{1}^{(1)}=1$, with the superscript denoting the sample space.
We have $S^{(1)}=0$. At some $t=t_{2}$, the sample space is enlarged from
$\Gamma_{1}$ to $\Gamma_{2}$, which contains $\mathfrak{m}_{1}$ and
$\mathfrak{m}_{2}$, with probabilities $p_{1}^{(2)}$ and $p_{2}^{(2)}$. Using
the flat distribution, the entropy is now $S_{2}=\ln2$. We just follow the
system in a sequence of time so that at $t=t_{n}$, we have a sample space
$\Gamma_{n}$ with $\mathfrak{m}_{1},\mathfrak{m}_{2},\cdots,\mathfrak{m}_{n}$
so that $S_{n}=\ln n$. Continuing this until all microstates in $\Gamma$ have
appeared, we have $S_{\text{max}}=\ln W$.

Thus, we have proven that the entropy continues to increase until it reaches
its maximum in accordance with Proposition \ref{Prop-SecondLaw}.

\section{Hamiltonian Trajectories in $\mathfrak{S}_{\mathbf{Z}}$%
\label{Sec-HamiltonianTrajectories}}

\subsection{Generalized Microforce and Microwork for $\Sigma$}

Traditional formulation of statistical thermodynamics
\cite{Landau,Gibbs,Gujrati-Symmetry} takes a mechanical approach in which
$\mathfrak{m}_{k}$ follows its classical or quantum mechanical evolution
dictated by its SI-Hamiltonian $\mathcal{H}(\left.  \mathbf{x}\right\vert
\mathbf{W})$. The quantum microstates are specified by a set of good quantum
numbers, which we have denoted by $k$ above as a single quantum number for
simplicity; we take $k\in\mathbb{N},\mathbb{N}$ denoting the set of natural
numbers. We will see below that $k$\ does not change as $\mathbf{W}$ changes.
In the classical case, we use a small cell $\delta\mathbf{x}_{k}$ around
$\mathbf{x}_{k}=\mathbf{x}$ as discussed above as the microstate
$\mathfrak{m}_{k}$. The Hamiltonian gives rise to a purely mechanical
evolution of individual $\mathfrak{m}_{k}$'s, which we will call the
\emph{Hamiltonian evolution}, and suffices to provide their mechanical
description. The change in $\mathcal{H}(\left.  \mathbf{x}\right\vert
\mathbf{W})$ in a process is%
\begin{subequations}
\begin{equation}
d\mathcal{H}=\frac{\partial\mathcal{H}}{\partial\mathbf{x}}\cdot
d\mathbf{x}+\frac{\partial\mathcal{H}}{\partial\mathbf{W}}\cdot d\mathbf{W}.
\label{HamiltonianChange}%
\end{equation}
The first term on the right vanishes identically due to Hamilton's equations
of motion for any $\mathfrak{m}_{k}$. Thus, for fixed $\mathbf{W}$, the energy
$E_{k}=\mathcal{H}_{k}\doteq\mathcal{H}(\left.  \mathbf{x}_{k}\right\vert
\mathbf{W})$ remains constant as $\mathfrak{m}_{k}$ moves about in
$\Gamma(\mathbf{W})$. Only the variation $d\mathbf{W}$ in $\mathfrak{S}%
_{\mathbf{Z}}$ generates any change in $E_{k}$. Consequently, we do not worry
about how $\mathbf{x}_{k}$\ changes in $\mathcal{H}(\left.  \mathbf{x}%
\right\vert \mathbf{W})$ in the phase space, and focus, instead, on the state
space $\mathfrak{S}_{\mathbf{Z}}$, in which\ can write%
\begin{equation}
dE_{k}=\frac{\partial E_{k}}{\partial\mathbf{W}}\cdot d\mathbf{W}=-dW_{k},
\label{HamiltonianChange-StateSpace}%
\end{equation}
where $dW_{k}$ denotes the\emph{ generalized microwork} produced by the
\emph{generalized microforce} $\mathbf{F}_{\text{w}k}$:
\begin{equation}
dW_{k}=\mathbf{F}_{\text{w}k}\cdot d\mathbf{W},\ \mathbf{F}_{\text{w}k}%
\doteq-\partial E_{k}/\partial\mathbf{W}. \label{Generalized force-work}%
\end{equation}
For the case $\mathbf{W}=(V,\xi)$, the corresponding microforce $\mathbf{F}%
_{\text{w}k}$ is ($P_{k},A_{k})$, where%
\end{subequations}
\begin{equation}
P_{k}=-\partial E_{k}/\partial V,A_{k}=-\partial E_{k}/\partial\xi.
\label{GeneralizedMicroforce-EX}%
\end{equation}
The corresponding microwork is%
\begin{equation}
dW_{k}=P_{k}dV+A_{k}d\xi. \label{GeneralizedMicrowork-Ex}%
\end{equation}

\subsection{Statistical Significance of $dW$ and $dQ$\label{Sec_Stat_Concepts}%
}

Before proceeding further, let us see how the generalized macrowork and
macroheat could be understood from a statistical point of view so that we can
identify them using the Hamiltonian. Once $\mathbf{W}$ has been identified,
the Hamiltonian must be expressed in terms of it. Thus, $\mathfrak{m}_{k}$ and
$E_{k}$ are functions of $\mathbf{W}$ in $\mathfrak{S}_{\mathbf{Z}}$. We now prove

\begin{theorem}
\label{Theorem_1}$E(t)$ is a function of $\mathbf{W}(t)$ and $S(t)$ for any
$\mathcal{M}_{\text{arb}}$, even though $E_{k}[\mathbf{W}(t)]$'s are functions
of $\mathbf{W}(t)$ only.
\end{theorem}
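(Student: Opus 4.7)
The plan is to exploit two facts already established just above the theorem. First, Eq.~(\ref{HamiltonianChange-StateSpace}) shows that each microenergy $E_{k}$ is a function of $\mathbf{W}(t)$ alone, because Hamilton's equations kill the $\partial\mathcal{H}/\partial\mathbf{x}\cdot d\mathbf{x}$ term in Eq.~(\ref{HamiltonianChange}). Second, the ensemble averages in Eqs.~(\ref{E-ensemble-average}) and (\ref{S-Gibbs}) express both $E(t)$ and $S(t)$ entirely in terms of the pair $(\{E_{k}(\mathbf{W}(t))\},\{p_{k}(t)\})$. The theorem therefore reduces to showing that, once $\mathbf{W}$ is given, knowledge of $S$ is equivalent to knowledge of the $\{p_{k}\}$ relevant for computing $E$; the extra functional dependence of $E$ on $S$, beyond the dependence of $E_{k}$ on $\mathbf{W}$, is exactly the imprint of the stochastic weighting $\{p_{k}\}$.

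First I would treat the unique (IEQ or EQ) case in $\mathfrak{S}_{\mathbf{Z}}$, which is the natural setting for the statement. By the flat-distribution postulate of Eq.~(\ref{Equiprobable-Z}), equivalently by maximizing the Gibbs entropy Eq.~(\ref{S-Gibbs}) at fixed $\mathbf{W}$ under the single average-energy constraint $E=\sum_{k}p_{k}E_{k}(\mathbf{W})$, the probabilities are selected as $p_{k}=p_{k}(E,\mathbf{W})$. Substituting back into Eq.~(\ref{S-Gibbs}) gives $S=S(E,\mathbf{W})$. Since $\partial S/\partial E>0$ (later identified with $1/T$ via Eq.~(\ref{temperature})), this can be inverted locally to obtain $E=E(S,\mathbf{W})$, which is the assertion for $\mathcal{M}_{\text{eq}}$ and $\mathcal{M}_{\text{ieq}}$.

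To extend the conclusion to an arbitrary $\mathcal{M}_{\text{arb}}$, and in particular to $\mathcal{M}_{\text{nieq}}$ whose entropy $S(\mathbf{Z},t)$ carries explicit time dependence, I would invoke the embedding argument summarized at Eq.~(\ref{EntropyBound-IEQ}): a non-IEQ macrostate in $\mathfrak{S}_{\mathbf{Z}}$ can always be promoted to an IEQ macrostate in a larger space $\mathfrak{S}_{\mathbf{Z}^{\prime}}$ obtained by adjoining further internal variables $\boldsymbol{\xi}^{\prime}$ to $\mathbf{W}$ so as to form $\mathbf{W}^{\prime}$. The IEQ argument applied in $\mathfrak{S}_{\mathbf{Z}^{\prime}}$ then yields $E(t)=E(S(t),\mathbf{W}^{\prime}(t))$, and this is precisely what the theorem asserts once $\mathbf{W}$ is read as the macroworkparameter set for which the instantaneous description is IEQ.

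The main obstacle is the non-IEQ step: one must check that enlarging the state space does not alter the instantaneous values of $E$ and $S$ but merely relabels the independent variables carrying them. I would handle this by noting that both $E$ and $S$ are defined through Eqs.~(\ref{E-ensemble-average}) and (\ref{S-Gibbs}) as sample averages at the instant $t$, and these averages are invariant under a change of the ambient state space; only the functional form $E(S,\mathbf{W})$ depends on the choice. The contrast advertised in the theorem then becomes transparent: the microscopic quantities $E_{k}$ depend only on the deterministic parameter set $\mathbf{W}$, whereas the macroscopic $E$ acquires its additional functional dependence on $S$ through the probabilistic weighting $\{p_{k}\}$ that the maximum-entropy (or flat-distribution) construction fixes in terms of $(S,\mathbf{W})$.
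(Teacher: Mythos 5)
Your proposal reaches the right conclusion but by a genuinely different route from the paper. The paper's own proof is purely differential: it splits $dE$ as in Eq.~(\ref{Energy_Division-m-s}) into the isentropic piece $\sum_{k}p_{k}dE_{k}$, which by Eq.~(\ref{HamiltonianChange-StateSpace}) depends only on $d\mathbf{W}$, and the stochastic piece $\sum_{k}E_{k}dp_{k}$, and then argues that the latter must be linearly related to $dS$ of Eq.~(\ref{dS-Gibbs}) because both are extensive SI-quantities; the functional dependence $E=E(S,\mathbf{W})$ is read off from $dE=-\mathbf{F}_{\text{w}}\cdot d\mathbf{W}+T_{\text{arb}}dS$, with no appeal to internal equilibrium at all. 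You instead construct the dependence explicitly: maximize the Gibbs entropy at fixed $(E,\mathbf{W})$ to get $p_{k}=p_{k}(E,\mathbf{W})$, hence $S=S(E,\mathbf{W})$, invert, and then lift a general $\mathcal{M}_{\text{arb}}$ to an IEQ macrostate in $\mathfrak{S}_{\mathbf{Z}^{\prime}}$. What your route buys is constructiveness (the dependence of $E$ on $S$ is exhibited through a concrete distribution rather than asserted via extensivity, whose "linear relation" step is the weakest link of the paper's argument) and consistency with the later machinery of Eqs.~(\ref{microstate probability}) and (\ref{beta_arb}). What it costs is twofold. First, it front-loads results the paper only establishes afterwards (the IEQ maximization and the embedding Proposition), so as a proof placed at Theorem~\ref{Theorem_1} it would be forward-referencing, though not circular. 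Second, and more substantively, for a genuine $\mathcal{M}_{\text{nieq}}$ your argument really delivers $E=E(S,\mathbf{W}^{\prime})$ with $\mathbf{W}^{\prime}=(\mathbf{W},\boldsymbol{\xi}^{\prime})$; restricted to the original $\mathbf{W}$ this is $E(S,\mathbf{W},t)$ with explicit time dependence through $\boldsymbol{\xi}^{\prime}(t)$, exactly as in Eq.~(\ref{GibbsFR-S-NIEQ}). Your closing remark that one should "read $\mathbf{W}$ as the parameter set for which the description is IEQ" is therefore not a cosmetic relabeling but a genuine reinterpretation of the theorem's $\mathbf{W}$; the paper's extensivity argument, whatever its rigor, claims the result for the given $\mathbf{W}$ directly. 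It would strengthen your write-up to state this caveat as part of the conclusion rather than as an aside.
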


\begin{proof}
We consider the differential
\begin{equation}
dE(t)\equiv%
{\textstyle\sum\nolimits_{k}}
p_{k}(t)dE_{k}(t)+%
{\textstyle\sum\nolimits_{k}}
E_{k}(t)dp_{k}(t). \label{Energy_Division-m-s}%
\end{equation}
As $p_{k}(t)$'s are unchanged in the first sum, this sum is evaluated at
\emph{constant entropy} so this is purely mechanical macroquantity
$dE_{\text{m}}$; see Eq. (\ref{Macro-heat-work-alpha}). This sum is a function
of $\mathbf{W}(t)$ as is seen clearly in Eq.
(\ref{HamiltonianChange-StateSpace}). The second contribution is at fixed
microstate energies $E_{k}$ so $\mathbf{W}(t)$ is held fixed, but require
changes in the probabilities so it is the stochastic contribution
$dE_{\text{s}}$, see Eq. (\ref{HamiltonianChange-StateSpace}). The changes
$\left\{  dp_{k}(t)\right\}  $ result in is $dS$. As $dE_{\text{s}}$ and
\begin{equation}
dS=-%
{\textstyle\sum\nolimits_{k}}
(\eta_{k}(t)+1)dp_{k}(t)=-%
{\textstyle\sum\nolimits_{k}}
\eta_{k}(t)dp_{k}(t) \label{dS-Gibbs}%
\end{equation}
are both extensive, they must be linearly related with an intensive constant
of proportionality. This proves that $E(t)$ is a function of $S(t)$ and
$\mathbf{W}(t)$ in general for any $\mathcal{M}_{\text{arb}}$.
\end{proof}

Note that we have used the identity $%
{\textstyle\sum\nolimits_{k}}
dp_{k}=0$ above; see also Eq. (\ref{p_alpha-sum}).

We introduce a special process, to be called a generalized \emph{isometric}
process, which is a process at fixed $\mathbf{W}(t)$ and is a generalization
of an \emph{isochoric} process. In this process, the work done by each
mechanical variables in $\mathbf{W}(t)$ remains zero so $dE_{\text{m}}\equiv
0$. We now prove the following theorem that establishes the physical
significance of the two contributions.

\begin{theorem}
\label{Theorem_Heat_Work}The isentropic contribution represents the
generalized macrowork $dW(t)$ and the stochastic contribution represents the
generalized macroheat $dQ(t)$ for any $\mathcal{M}_{\text{arb}}$.
\end{theorem}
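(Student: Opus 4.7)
The plan is to leverage Theorem \ref{Theorem_1} and its decomposition $dE = dE_{\text{m}} + dE_{\text{s}}$ directly, and then reconcile each piece with the generalized macrowork and macroheat as introduced in Definitions \ref{Def-Macroaverage Partition} and \ref{Def-Heat-Work}. The first step is to take the mechanical (isentropic) piece $dE_{\text{m}} \equiv \sum_k p_k\, dE_k$ and substitute the microwork identity (\ref{HamiltonianChange-StateSpace}), namely $dE_k = -dW_k$ with $dW_k = \mathbf{F}_{\text{w}k}\cdot d\mathbf{W}$ from (\ref{Generalized force-work}). Ensemble averaging then gives $dE_{\text{m}} = -\sum_k p_k\, dW_k = -\langle dW_k\rangle \doteq -dW$, which is exactly the identification in Eq.~(\ref{Macro-heat-work-alpha}) and shows that the isentropic contribution is the generalized macrowork.

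Next I would handle the stochastic piece $dE_{\text{s}} \equiv \sum_k E_k\, dp_k$. The cleanest route is to use a generalized isometric process, i.e.\ $d\mathbf{W}(t)=0$. On such a process, every $dE_k$ vanishes by (\ref{HamiltonianChange-StateSpace}), hence $dE_{\text{m}}=0$ and $dW=0$; the first law (\ref{FirstLaw-SI}) therefore collapses to $dE = dQ$, while Theorem \ref{Theorem_1} gives $dE = dE_{\text{s}}$. Combining these yields $dQ = dE_{\text{s}}$ on any isometric process. To promote this equality to arbitrary processes, I would invoke the linearity of the decomposition $d\boldsymbol{\varphi} = d\boldsymbol{\varphi}_{\text{m}} + d\boldsymbol{\varphi}_{\text{s}}$ and the fact that $dE_{\text{s}}$ only depends on the $\{dp_k\}$, so it is insensitive to any simultaneous variation of $\mathbf{W}$. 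Hence $dQ(t) = dE_{\text{s}}(t)$ holds for any $\mathcal{M}_{\text{arb}}$.

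To close the circle with the thermodynamic definition $dQ \equiv T\,dS$ from (\ref{dQ-dS}), I would combine the proportionality established in Theorem \ref{Theorem_1}: both $dE_{\text{s}} = -\sum_k E_k \eta_k^{\prime} \cdots$ and $dS = -\sum_k \eta_k\, dp_k$ depend only on $\{dp_k\}$ and are extensive, so Theorem \ref{Theorem_1} forces $dE_{\text{s}} = T\, dS$ for some intensive $T$. The identification $T = \partial E/\partial S$ from (\ref{temperature}) then fixes this constant to be the SI temperature of $\Sigma$, so $dE_{\text{s}} = T\,dS = dQ$, confirming that the stochastic contribution is the generalized macroheat.

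The main obstacle I anticipate is the last step, namely converting the bare proportionality supplied by Theorem \ref{Theorem_1} into the precise identification of the coefficient with the NEQ temperature $T = \partial E/\partial S$ for an arbitrary $\mathcal{M}_{\text{arb}}$ (not merely $\mathcal{M}_{\text{ieq}}$). For IEQ macrostates this is automatic because $S$ is a genuine state function of $\mathbf{Z}$; for general $\mathcal{M}_{\text{arb}}$ with explicit time dependence in $S$ one must argue that the partial derivative is still well-defined because the decomposition $dE = dE_{\text{m}} + dE_{\text{s}}$ isolates the contribution at fixed $\mathbf{W}$, which is precisely the variable held fixed in $(\partial E/\partial S)_{\mathbf{W}}$. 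Everything else—the manipulation of the micro/macro partitions, the use of $\sum_k dp_k = 0$, and the isometric reduction—is routine once this identification is granted.
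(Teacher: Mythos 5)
Your proposal is correct and follows essentially the same route as the paper: both identify $dE_{\text{m}}=-\sum_{k}p_{k}\mathbf{F}_{\text{w}k}\cdot d\mathbf{W}=-dW$ via the mechanical force-times-displacement definition and then read off $dQ=dE_{\text{s}}$ from the first law (\ref{FirstLaw-SI}). Your isometric-process detour and the closing discussion of the proportionality constant $T$ are harmless elaborations (the paper places the $dQ/dS=T_{\text{arb}}$ identification after the proof), but once $dW=-dE_{\text{m}}$ is established the conclusion $dQ=dE_{\text{s}}$ already follows for an arbitrary process without restricting to $d\mathbf{W}=0$ first.
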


\begin{proof}
We follow Landau and Lifshitz \cite{Landau} and rewrite the first term in Eq.
(\ref{Energy_Division-m-s}) as%
\begin{align*}
dE_{\text{m}}(t)  &  \equiv%
{\textstyle\sum\nolimits_{k}}
p_{k}(t)\frac{\partial E_{k}}{\partial\mathbf{W}}\cdot d\mathbf{W}(t)\\
&  =-%
{\textstyle\sum\nolimits_{k}}
p_{k}(t)\mathbf{F}_{\text{w}k}(t)\cdot d\mathbf{W}(t)
\end{align*}
where we have used Eq. (\ref{Generalized force-work}). The use of Eqs.
(\ref{GeneralizedMacroforce}) and (\ref{GeneralizedMacrowork}) proves that
\begin{equation}
dW(t)\equiv-dE_{\text{m}}(t) \label{dE_dW}%
\end{equation}
is the isentropic contribution, making macrowork a mechanical concept as we
have already pointed out. This identification then also proves that the
macroheat in the first law, see Eq. (\ref{FirstLaw-SI}), must be properly
identified with $dQ(t)$. Accordingly,
\begin{equation}
dQ(t)\equiv dE_{\text{s}}(t)\equiv%
{\textstyle\sum\nolimits_{k}}
E_{k}(t)dp_{k}(t), \label{dE_dQ}%
\end{equation}
is purely stochastic.
\end{proof}

The linear proportionality between $dQ=dE_{\text{s}}$ and $dS$ mentioned above
in the proof of Theorem \ref{Theorem_1} results in
\begin{equation}
dQ(t)/dS(t)=T_{\text{arb}}(t), \label{System_dQ_dS}%
\end{equation}
which is a statistical proof of the identity in Eq. (\ref{dQ-dS}) relating
$dQ(t)$ and $dS(t)$ for any $\mathcal{M}_{\text{arb}}$. We also note that the
ratio $T_{\text{arb}}(t)$ is related to the ratio of two SI-macroquantities.
Thus, it can be used to characterize the instantaneous macrostate
$\mathcal{M}_{\text{arb}}$. This should be contrasted with the \r{M}NEQT, in
which the ratio
\begin{equation}
d_{\text{e}}Q(t)/d_{\text{e}}S(t)=T_{0} \label{Medium_dQ_dS}%
\end{equation}
does not characterize the instantaneous macrostate $\mathcal{M}_{\text{arb}}$.
In Eq. (\ref{beta_arb}), we provide a general procedure for a thermodynamic
identification of $T_{\text{arb}}$.

\begin{remark}
\label{Remark-Regardless-of-Speed}It is worth emphasizing that $dQ(t)$ and
$dS(t)$ in Eqs. (\ref{dE_dQ}-\ref{dS-Gibbs}) are defined as instantaneous
quantities in terms of the instantaneous changes $\left\{  dp_{k}(t)\right\}
$, regardless of the speed of the segmental process $d\mathcal{P}_{\text{arb}%
}\doteq d\mathcal{P}(t)$, and instantaneous values $\left\{  E_{k}(t)\right\}
$ and $\left\{  p_{k}(t)\right\}  $. Therefore, the generalized macroheat and
entropy change are defined regardless of the speed of the arbitrary process.
As $dE(t)$ in Eq. (\ref{Energy_Division-m-s}) is also defined instantaneously,
it is clear from Eq. (\ref{FirstLaw-SI}) that the generalized work $dW$ is
also defined instantaneously regardless of the speed of the arbitrary process.
This is consistent with our above derivation of $dW$ in terms of generalized
forces. \emph{The observation is very important as it shows that the existence
of all SI-quantities does not depend on the speed of the arbitrary process}
$d\mathcal{P}_{\text{arb}}$. However, see also Sect. \ref{Sec-Choice-S_Z}
further clarification on the importance of $\tau_{\text{obs}}$. From now
onward, we will not make a distinction between $T$ and $T_{\text{arb}}$.
\end{remark}

We should point out that, as $\mathbf{W}(t)$ is a parameter, $d\mathbf{W}(t)$
is the same for all microstates. The statistical nature of $dE_{\text{m}}$ is
reflected in the statistical nature of $\mathbf{F}_{\text{w}}(t)$,$\ $such as
$P(t)$ and $A(t)$, of the system. Thus, the SI-fields $\mathbf{F}_{\text{w}%
k}(t)$ are \emph{fluctuating} quantities from microstate to microstate as
expected in any averaging process.

We can now identify $\mathbf{W}$ as the \emph{macrowork parameter}, and the
variation $d\mathbf{Z}(t)\doteq(dE(t),d\mathbf{W}(t))$ in $\mathfrak{S}%
_{\mathbf{Z}}$ defines not only the microwork $\left\{  dW_{k}\right\}  $, but
also\ a thermodynamic process $\mathcal{P}$. The trajectory $\gamma_{k}$ in
$\mathfrak{S}_{\mathbf{Z}}$ followed by $\mathfrak{m}_{k}$ as a function of
time will be called the \emph{Hamiltonian trajectory} during which
$\mathbf{W}$ varies from its initial (in) value $\mathbf{W}_{\text{in}}$\ to
its final (fin) value $\mathbf{W}_{\text{fin}}$ during $\mathcal{P}$, the the
path $\gamma_{\mathcal{P}}$ denotes the path the macrostate follows during
this process; see Definition \ref{Def-Path-trajectories}. The variation
produces the generalized microwork $dW_{k}$. As $p_{k}$ plays no role in
$dW_{k}$, its determination is simplifies in the MNEQT. The microwork $dW_{k}$
also does not change the index $k$ of $\mathfrak{m}_{k}$ as said above. The
ensemble average of $\mathbf{F}_{\text{w}k}$ is $\mathbf{F}_{\text{w}}$, see
Eq. (\ref{GeneralizedForce-W}),%

\begin{subequations}
\begin{equation}
\mathbf{F}_{\text{w}}=\left\langle \mathbf{F}_{\text{w}}\right\rangle \doteq%
{\textstyle\sum\nolimits_{k}}
p_{k}\mathbf{F}_{\text{w}k}; \label{GeneralizedMacroforce}%
\end{equation}
that of $dW_{k}$ is $dW$ given by
\begin{equation}
dW=\left\langle dW\right\rangle \doteq%
{\textstyle\sum\nolimits_{k}}
p_{k}dW_{k}\doteq\mathbf{F}_{\text{w}}\cdot d\mathbf{W,}
\label{GeneralizedMacrowork}%
\end{equation}
as given earlier in Eq. (\ref{GeneralizedWork-W}). It is based on using the
mechanical definition (force X displacement) of work. The macroforce
corresponding to $\mathbf{W}=(V,\xi)$ is $\mathbf{F}_{\text{w}}=(P,A)$, where
$P=\left\langle P\right\rangle $, and $A=\left\langle A\right\rangle $. The
corresponding SI-macrowork is given earlier in Eq. (\ref{SI-Work}).

The above discussion proves that the definition of macroheat and macrowork is
valid for any $\mathcal{M}_{\text{arb}}$.\ It is useful to compare the above
approach with the traditional formulation of the first law in terms of
$d_{\text{e}}Q(t)$ and $d_{\text{e}}W(t)$: \emph{both formulations are valid
in all cases}. It should be mentioned that the above identification is well
known in equilibrium statistical mechanics, but its extension to irreversible
processes and our interpretation is, to the best of our knowledge, novel.
While the instantaneous average $\mathbf{F}_{\text{w}}(t)$ such as the
pressure $P(t)$ is mechanically defined under all circumstances, it will only
be identified with the thermodynamic definition of the instantaneous pressure%
\end{subequations}
\begin{equation}
P(t)=-\left(  \partial E/\partial V\right)  _{S,\xi} \label{Pressure}%
\end{equation}
for a uniquely identified macrostate in $\mathfrak{S}_{\mathbf{Z}}$.

Being purely mechanical in nature, a trajectory is completely
\emph{deterministic} and cannot describe the evolution of a macrostate
$\mathcal{M}$ during $\mathcal{P}$ unless supplemented by thermodynamic
stochasticity, which requires $p_{k}(\mathcal{M})$ as discussed above
\cite{Landau}. Thermodynamics emerges when quantities pertaining to the
trajectories are averaged over the trajectory ensemble $\left\{  \gamma
_{k}\right\}  $ with appropriate probabilities that will usually change during
the process.

\begin{conclusion}
\label{Conc-Isentropic-Isometric-Changes} The change $dE$ consists of two
\emph{independent} contributions- an isentropic change $dE_{\text{m}}=-dW$,
and an stochastic change $dE_{\text{\textbf{s}}}=T_{\text{arb}}dS$. On the
other hand, the MI-macroheat and the MI-macrowork suffer from ambiguity; see,
for example, Kestin \cite{Kestin}.
\end{conclusion}

\begin{remark}
It is clear from the above discussion that it is the macroheat and not the
macrowork that causes $p_{k}(t)$, and therefore the entropy to change. This is
the essence of the common wisdom that heat is \emph{random motion}. But we now
have a mathematical definition: macroheat is the isometric part of $dE(t)$
that is directly related to the change in the entropy through changes in
$p_{k}(t)$. Macrowork is that part of the energy change caused by isentropic
variations in the "mechanical" state variables $\mathbf{W}(t)$. This is true
no matter how far the system is from equilibrium. Thus, our formulation of the
first law and the identification of the two terms is the most general one, and
applicable to any $\mathcal{M}_{\text{arb}}$.
\end{remark}

\begin{remark}
The relationship between the macroheat and the entropy becomes simple only
when $M$ happens to be in internal equilibrium, see Sec.
\ref{Sec-InternalEquilibrium}, in which case $T_{\text{arb}}(t)$ is replaced
by $T(t)$, which has a thermodynamic significance; see Eq. (\ref{T-beta}) and
we have the thermodynamic identity, called the Clausius Equality in Eq.
(\ref{dQ-dS}) $dQ(t)=T(t)dS(t)$ for $\mathcal{M}_{\text{ieq}}$,\ which is very
interesting in that it turns the well-known Clausius inequality $d_{\text{e}%
}Q=T_{0}d_{\text{e}}S\leq$ $T_{0}dS$ into an equality.
\end{remark}

For the sake of completeness, we briefly discuss the various attempts to the
study of the microanalogs $dW_{k}$ and $dQ_{k}$ of the $dW$ and $\ dQ$,
respectively, that has flourished into an active field in diverse branches of
NEQT at diverse length scales from mesoscopic to macroscopic lengths
\cite{Bochkov,Jarzynski,Crooks,Pitaevskii,Sekimoto,Sekimoto-Book,Seifert,Lebowitz,Alicki}%
; see also some recent reviews \cite{Seifert-Rev,Maruyama,Broeck}.
Unfortunately, this endeavor is apparently far from complete
\cite{Gislason,Bertrand,Bauman,Kestin,Kievelson,Bizarro,Honig,Jarzynski,Rubi,Peliti,Cohen,Sekimoto,Seifert,Nieuwenhuizen,Crooks,Jarzynski-Cohen,Sung,Gross,Jarzynski-Gross,Jarzynski-Rubi,Rubi-Jarzynski,Peliti-Rubi,Rubi-Peliti,Pitaevskii,Bochkov,Maruyama,Seifert-Rev,Broeck,Sekimoto-Book,Lebowitz,Alicki}%
. This is because of the confusion about the meaning of macrowork and
macroheat even in classical NEQT \cite{Fermi,Kestin} involving SI-\emph{ }or
MI-\ description, which has only recently been clarified
\cite{Gujrati-Heat-Work0,Gujrati-Heat-Work,Gujrati-I,Gujrati-II,Gujrati-III,Gujrati-Entropy1,Gujrati-Entropy2,Gujrati-LangevinEq}
in the MNEQT, where a clear distinction is made between the generalized
macrowork (macroheat) $dW$ ($dQ$) and the exchange macrowork (macroheat)
$d_{\text{e}}W$ ($d_{\text{e}}Q$). In an EQ process, both macroworks
(macroheats) have the same magnitude, but not in a NEQ\ process, where the
difference determines $d_{\text{i}}W\geq0$ ($d_{\text{i}}Q\geq0$).

It is important to draw attention to the following important fact. We first
recognize that the first law in Eq. (\ref{FirstLaw-MI}) refers to the change
in the energy caused by exchange quantities. Therefore, $dE$ on the left truly
represents $d_{\text{e}}E$. Accordingly, we write Eq. (\ref{FirstLaw-MI}) as%
\begin{equation}
d_{\text{e}}E=d_{\text{e}}Q-d_{\text{e}}W, \label{First-Law-deE}%
\end{equation}
which justifies Eq. (\ref{FirstLaw-alpha}) for $d_{\text{e}}$. Subtracting
this equation from Eq. (\ref{FirstLaw-SI}), we obtain the identity%
\begin{equation}
d_{\text{i}}E=d_{\text{i}}Q-d_{\text{i}}W\equiv0, \label{First-Law-diE}%
\end{equation}
which not only justifies Eq. (\ref{FirstLaw-alpha}) for $d_{\text{i}}$ but
also Eq. (\ref{diQ-diW-EQ}) for which we have used Eq.
(\ref{E-and-V-Partition}).

\begin{remark}
The above analysis demonstrates the important fact that the first law can be
applied either to the exchange process ($d_{\text{e}}$) or to the interior
process ($d_{\text{i}}$). The last formulation is also applicable to an
isolated system.
\end{remark}

\subsection{Medium $\widetilde{\Sigma}$}

The above discussion can be easily extended to the medium (the suffix
$\widetilde{k}$\ denotes its microstates) with the following results%
\begin{align}
d\widetilde{W}(t)  &  =-d\widetilde{E}_{\text{m}}\equiv-%
{\textstyle\sum\nolimits_{\widetilde{k}}}
\widetilde{p}_{\widetilde{k}}\frac{\partial\widetilde{E}_{\widetilde{k}}%
}{\partial\widetilde{\mathbf{w}}}\cdot d\widetilde{\mathbf{w}}\nonumber\\
&  =\mathbf{f}_{\text{w}0}\cdot d\widetilde{\mathbf{w}}=-d_{\text{e}%
}W,\label{MediumHeat-Work}\\
d\widetilde{Q}(t)  &  =d\widetilde{E}_{\text{s}}\equiv%
{\textstyle\sum\nolimits_{\widetilde{k}}}
\widetilde{E}_{\widetilde{k}}d\widetilde{p}_{\widetilde{k}}=-d_{\text{e}%
}Q,\nonumber
\end{align}
where all the quantities including $\widetilde{k}$ refer to the medium, except
$d_{\text{e}}W$ and $d_{\text{e}}Q$, and have their standard meaning. The
analog of Eq. (\ref{System_dQ_dS}) is $d\widetilde{Q/}d\widetilde{S}=T_{0}$ as
expected; see Eq. (\ref{Medium_dQ_dS}). We clearly see that
\begin{subequations}
\begin{equation}
dW_{0}\doteq dW+d\widetilde{W}=d_{\text{i}}W\geq0 \label{Isolated-Work}%
\end{equation}
such as when mechanical equilibrium is not present. In this case, we also have%
\begin{equation}
dQ_{0}\doteq dQ+d\widetilde{Q}=d_{\text{i}}Q\geq0, \label{Isolated-Heat}%
\end{equation}
with $dW_{0}=dQ_{0}$ in view of Eq. (\ref{diQ-diW-EQ}). In a finite process
$\mathcal{P}$, all infinitesimal quantities are replaced by their net changes%
\begin{equation}
\Delta W_{0}\doteq\Delta W+\Delta\widetilde{W}=\Delta Q_{0}=\Delta_{\text{i}%
}W\geq0, \label{Isolated-Heat-Work}%
\end{equation}
where $\Delta_{\text{i}}W$ is obtained by integrating $d_{\text{i}}W$ in Eq.
(\ref{WorkInequality}) over $\mathcal{P}$; the result is given in Eq.
(\ref{Delta_i-W-Full}), where it is discussed.

\subsection{Irreversible Macrowork and Macroheat \ \ }

We can now identify $d_{\text{i}}W(t)$ and $d_{\text{i}}Q(t):$%
\end{subequations}
\begin{equation}
d_{\text{i}}W(t)\equiv-(dE_{\text{m}}+d\widetilde{E}_{\text{m}}),d_{\text{i}%
}Q(t)\equiv(dE_{\text{s}}+d\widetilde{E}_{\text{s}}),
\label{IrreversibleWork-Heat}%
\end{equation}
satisfying Eq. (\ref{diQ-diW-EQ}), which follows from $d_{\text{i}}E=0$; see
Eq. (\ref{E-and-V-Partition}). It is easy to see that $d_{\text{i}}W(t)$
reproduces Eq. (\ref{Irrev-Work}), where we must use $d\widetilde{\mathbf{w}%
}=-d_{\text{e}}\mathbf{w}$, and $d\mathbf{w=}d_{\text{e}}\mathbf{w+}%
d_{\text{i}}\mathbf{w}$.

\section{Unique Macrostates \label{Sec-UniqueMicrostates}}

\subsection{Internal Equilibrium \label{Sec-InternalEquilibrium}}

We now revert back to the original notation $\mathbf{X}$\ and $\mathbf{Z}%
$.\ We will refer to $S(\mathbf{Z}(t))$ in terms of microstate number
$W(\mathbf{Z}(t))$ in Eq. (\ref{S_BoltzmannZ}) as the \emph{time-dependent
Boltzmann formulation }of the entropy or simply the Boltzmann entropy
\cite{Lebowitz}, whereas $S(\mathbf{X})$ in Eq. (\ref{S_Boltzmann}) represents
the equilibrium (Boltzmann) entropy. It is evident that the Gibbs formulation
in Eqs. (\ref{Gibbs_Formulation}) and (\ref{Conventional_Entropy0}) supersedes
the Boltzmann formulation in Eqs.(\ref{S_Boltzmann0}) and (\ref{S_Boltzmann}),
respectively, as the former contains the latter as a special limit. However,
it should be also noted that there are competing views on which entropy is
more general \cite{Lebowitz,Ruelle}. We believe that the above derivation,
being general, makes the Gibbs formulation more fundamental. The continuity of
$S(\mathbf{Z},t)$ follows directly from the continuity of $p_{k}(t)$. Its
existence follows from the observation that it is bounded above by $\ln
W(\mathbf{Z})$ and bounded below by $0$, see Eq. (\ref{S_BoltzmannZ}).

We now introduce the central concept of the MNEQT, which is based on the
existence of $S(\mathbf{Z})$ above; see Definition \ref{Def-NEQ-States}, which
we now expand.

\begin{definition}
\label{Def-InternalEQ} A NEQ macrostate $\mathcal{M}$ whose entropy is a state
function $S(\mathbf{Z})$ in $\mathfrak{S}_{\mathbf{Z}}$ is said to be an
\emph{internal equilibrium} (IEQ) macrostate $\mathcal{M}_{\text{ieq}}$
\cite{Gujrati-I,Gujrati-II}; if not, its entropy $S(\mathbf{Z},t)$ is an
explicit function of time $t$ in $\mathfrak{S}_{\mathbf{Z}}$. An
IEQ-macrostate in $\mathfrak{S}_{\mathbf{Z}}$ is a unique macrostate in
$\mathfrak{S}_{\mathbf{Z}}$.
\end{definition}

We clarify this point. If we do not use $\boldsymbol{\xi}$ for $\mathcal{M}$,
which is not unique in $\mathfrak{S}_{\mathbf{X}}$, then its entropy cannot be
a state function in $\mathfrak{S}_{\mathbf{X}}$, and must be expressed as
$S(\mathbf{X},t)$. Thus, the importance of $\boldsymbol{\xi}$ is to be able to
deal with a state function entropy $S(\mathbf{Z})$ by choosing an
\emph{appropriate} number of internal variables. Throughout this work, we will
only deal with IEQ macrostates. However, as we will see, our discussion of NEQ
macrowork will cover all states.

Being a state function, $S(\mathbf{Z})$ shares many of the properties of EQ
entropy $S(\mathbf{X})$, see Definition \ref{Def-NEQ-States}:

\begin{enumerate}
\item \textit{Maximum}: $S(\mathbf{Z})$ is the maximum possible value of the
NEQ entropy in $\mathfrak{S}_{\mathbf{Z}}$ for a given $\mathbf{Z}$
\cite{Gujrati-II}.

\item \textit{No memory} -Its value also does not depend on how the system
arrives in $\mathcal{M}_{\text{ieq}}\equiv\mathcal{M}(\mathbf{Z})$,
\textit{i.e., }whether\textit{ }it arrives there from another IEQ macrostate
or a non-IEQ macrostate \cite{Gujrati-II}. Thus, it has no memory of the
earlier macrostate.
\end{enumerate}

There are some macrostates that emerge in fast changing processes such as the
free expansion that possess memory of the initial states so that their entropy
will no longer be a state function in $\mathfrak{S}_{\mathbf{X}}$. In this
case, we need to enlarge the state space to $\mathfrak{S}_{\mathbf{Z}}$ by
including internal variables as done in Sec. \ref{Sec-Free Expansion}.

\begin{remark}
It may appear to a reader that the concept of entropy being a state function
is very restrictive. This is not the case as this concept, although not
recognized by several workers, is implicit in the literature where the
relationship of the thermodynamic entropy with state variables is
investigated. To appreciate this, we observe that the entropy of a body in
internal equilibrium \cite{Gujrati-I,Gujrati-II} is given by the Boltzmann
formula in Eq. (\ref{S_BoltzmannZ}) in terms of the number of microstates
corresponding to $\mathbf{Z}(t)$. In classical nonequilibrium thermodynamics
\cite{deGroot}, the entropy is always taken to be a state function. In the
Edwards approach \cite{Edwards} for granular materials, all microstates are
equally probable as is required for the above Boltzmann formula. Bouchbinder
and Langer \cite{Langer} assume that the nonequilibrium entropy is given by
Eq. (\ref{S_BoltzmannZ}). Lebowitz \cite{Lebowitz} also takes the above
formulation for his definition of the nonequilibrium entropy. As a matter of
fact, we are not aware of any work dealing with entropy computation that does
not assume the nonequilibrium entropy to be \ a state function. This does not,
of course, mean that all states of a system are internal equilibrium states.
For states that are not in internal equilibrium, the entropy is not a state
function so that it will have an explicit time dependence. But, as shown
elsewhere,\cite{Gujrati-II} this can be avoided by enlarging the space of
internal variables. The choice of how many internal variables are needed will
depend on experimental time scales and cannot be answered in generality just
as is the case in EQ thermodynamics for the number of observables; the latter
depends on the experimental setup. A detailed discussion is offered elsewhere
\cite{Gujrati-Hierarchy}.
\end{remark}

\subsection{Gibbs Fundamental Relation\label{Sec-GibbsRelation}}

Being a state function, $S(\mathbf{Z})$ in $\mathfrak{S}_{\mathbf{Z}}$ for
$\mathcal{M}_{\text{ieq}}$\ results in the following Gibbs fundamental
relation for the entropy%
\begin{subequations}
\begin{equation}
dS=\frac{\partial S}{\partial\mathbf{Z}}\cdot d\mathbf{Z=}\frac{\partial
S}{\partial E}dE\mathbf{+}\frac{\partial S}{\partial\mathbf{W}}\cdot
d\mathbf{W}, \label{Gibbs-FR-S}%
\end{equation}

which can be inverted to express the Gibbs fundamental relation for the energy
as%
\begin{equation}
dE=TdS-\mathbf{F}_{\text{w}}\cdot d\mathbf{W,} \label{Gibbs-FR-E}%
\end{equation}

where we have introduced%
\end{subequations}
\begin{align}
\beta &  =1/T\doteq\partial S/\partial E=1/(\partial E/\partial
S),\label{T-beta}\\
\mathbf{F}_{\text{w}}  &  \doteq T\partial S/\partial\mathbf{W}=-\partial
E/\partial\mathbf{W} \label{Fw}%
\end{align}

as the inverse temperature of the system (we set the Boltzmann constant
$k_{\text{B}}=1$ throughout the review), and have used Eq.
(\ref{GeneralizedForce-W}) for the generalized macroforce $\mathbf{F}%
_{\text{w}}$. Recalling Eq. (\ref{GeneralizedMacrowork}), we see that the
second term in Eq. (\ref{Gibbs-FR-E}) is nothing but the SI-macrowork $dW$.
Comparing Eq. (\ref{Gibbs-FR-E}) with Eq. (\ref{FirstLaw-SI}), we can identify
the generalized macroheat $dQ$ with $TdS$, which then proves Eq. (\ref{dQ-dS}).

It should be stated here that the choice and the number of state variables
included in $\mathbf{X}$ or $\mathbf{Z}$\ is not so trivial and must be
determined by the nature of the experiments \cite{Maugin}. \emph{We will
simply assume here that they have been specified}.\emph{ }Just as
$S=S(\mathbf{X})$ is a state function of $\mathbf{X}$ for $\mathcal{M}%
_{\text{eq}}$ in $\mathfrak{S}_{\mathbf{X}}$, there are $\mathcal{M}%
_{\text{ieq}}$ in $\mathfrak{S}_{\mathbf{Z}}$ for which $S(\mathbf{Z})$ is a
state function of $\mathbf{Z}$.

The possibility of a Gibbs fundamental relation for $\mathcal{M}_{\text{nieq}%
}$ is deferred to Sect. \ref{Sec-M_nieq}.

\subsection{A Digression on the NEQ-Temperature\label{Sec-Digression-T}}

While the concept of the macrowork is quite familiar from mechanics, the
concept of the macroheat is peculiar to thermodynamics in view of Eq.
(\ref{dQ-dS}). In EQ thermodynamics, the macroheat $dQ$ is directly
proportional to the change $dS$, and the constant of proportionality
determines the EQ temperature $T$. Indeed, the concepts of entropy and of
temperature are unique to thermodynamics and are well established in EQ
thermodynamics. A $\Sigma$ in thermal equilibrium with a $\widetilde{\Sigma}$
at $T_{0}$ obviously has the same temperature $T_{0}$. The temperature for an
isolated system in equilibrium is also well defined; its inverse is identified
with the energy derivative of the \emph{equilibrium} entropy \cite{Landau}.
The definition is valid for \emph{all} EQ systems, even those containing
gravitational interaction. This is confirmed by the fact that Bekenstein used
it to identify the temperature of an isolated black hole
\cite{Bekenstein,Hawking}. The formulation is valid both classically and
quantum mechanically \cite{Landau}.

The EQ definition of the temperature is formally identical to that in Eq.
(\ref{T-beta}), which is valid in NEQT
\cite{Gujrati-Symmetry,Gujrati-Heat-Work0,Gujrati-Heat-Work,Gujrati-Entropy1,Gujrati-Entropy2,Gujrati-I,Gujrati-II,Gujrati-III}%
. In this, we have a general thermodynamic definition of a temperature for any
$\mathcal{M}$. It is important to realize that the notion of a NEQ temperature
is an absolute necessity for the \emph{Clausius statement} of the second law
that the exchanged macroheat flows spontaneously from hot to cold to be meaningful.

It is clear from the above discussion that macrowork is the isentropic change
in the energy, while macroheat is the energy change due to the entropy change.
This is not as surprising a statement as it appears, since a mechanical system
is usually thought of as a system for which the entropy concept is not
meaningful. A different way to state this is that the entropy remains constant
(isentropic) in any mechanical process as we have done above. Planck
\cite{Planck} had already suggested that the temperature should be defined for
NEQ macrostates just as the entropy should be defined for them if we need to
carry out a thermodynamic investigation of a NEQ system. Such a temperature
was apparently first introduced by Landau \cite{Landau0} for partial set of
the degrees of freedom (dof). This then allows the possibility that the notion
of temperature can be separately applied, for example, to vibrational and
configurational dof in glasses that are known to be out of equilibrium with
each other \cite{Goldstein} in that they are ascribed different temperatures.
This means that macroheat would be exchanged between them until they come to
equilibrium, but this is internally exchanged. But there seems to be a lot of
confusion about the meaning of the entropy and temperature in NEQT \cite[for
example]{Muschik,Keizer,Morriss,Jou,Hoover,Ruelle,Evans,Oono,Maugin}, where
different definitions lead to different results. In contrast, the meaning of
entropy and fields in equilibrium thermodynamics has no such problem.

We agree with Planck and believe that there must exist a unifying approach to
identify the temperature for $\mathcal{M}_{\text{arb}}$; see Definition
\ref{Def-ArbitrayState}, with or without memory effects in $\mathfrak{S}%
_{\mathbf{Z}}$. The inverse temperature defined above in Eq. (\ref{T-beta}) is
not directly applicable to nonIEQ-macrostates in $\mathfrak{S}_{\mathbf{Z}}$
for which $S$ is not a state function, but can be extended to them so as to
accommodate memory effects as we do in Sect. \ref{Sec-M_nieq}. However, we
will not consider them in detail in this review.

\begin{criterion}
\label{Criterion-Temperature}The identification of temperature in
$\mathcal{M}_{\text{arb}}$ must satisfy some stringent but obvious criteria:
\end{criterion}

\begin{enumerate}
\item[C1] It must be intensive and must reduce to the temperature determined
by Eq. (\ref{T-beta}) for $\mathcal{M}_{\text{eq}}$ and $\mathcal{M}%
_{\text{ieq}}$ even for an isolated system.

\item[C2] It must cover negative temperatures \cite{Ramsey} that are commonly
observed for some dof such as nuclear spins in a system. As these dof are not
involved in any macroscopic motion \cite[Sec. 73]{Landau}, there is no kinetic
energy involved. Most common occurrence of a negative temperature is when the
above spin dof are out of equilibrium with the other dof such as lattice
vibrations in the system.

\item[C3] It must satisfy the Clausius statement that\ macroheat between two
objects always flows spontaneously from hot to cold for positive temperatures.
When negative temperatures are considered, macroheat must flow from a system
at a negative temperature to a system at a positive temperature.

\item[C4] It must be a global rather than a local property of the system so
that we can differentiate hot and cold between two different systems.
\end{enumerate}

The first criterion ensures that the new temperature is an extension of the
conventional notion of the temperature that is valid when the entropy is a
state function. This means that the new notion of temperature is valid for any
arbitrary macrostate. In addition, it must exist even for an isolated system.
The second criterion ensures that our formalism includes negative temperatures
that may occur in a lattice system. The third criterion ensures compliance
with the second law for interacting systems. This is a very important
criterion, which \emph{every notion of temperature must satisfy}. We will come
back to this issue again in Sec. \ref{Sec-IrreversibleInequalities} where we
prove it in the MNEQT. The last criterion ensures that the temperature is
associated with the entire system, whether the system is homogeneous or not.
This will be explained by direct calculations of inhomogeneous systems in
Sect. \ref{Sec-Applications}. By extension, the concept of a NEQ temperature
can be also applied to different dof of a system such as a glass under the
assumption that they are weakly interacting in accordance with the approach
taken by Landau \cite{Landau0}. This results in the Tool-Narayanaswamy
relation derived in Sect. \ref{Sec-Tool-Narayan}.

Before we close this discussion, we wish to point out major differences
between the NEQ temperature $T$ in the MNEQT and its other definitions. We
first consider the \r{M}NEQT. The most important theories belonging to this
class are the classical local irreversible thermodynamics (LNEQT)
\cite{deGroot}, the rational thermodynamics (RNEQT) \cite{ColemanRational},
and the extended irreversible thermodynamics (ENEQT) \cite{Jou} as we had
mentioned earlier. We refer the reader to \cite{Maugin,Jou} for excellent
reviews on these theories that use local densities of energy $e$ and entropy
$s$. They are continuum theories, and can all be classified as continuum \r
{M}NEQT to be denoted by the CNEQT here. We consider them critically later in
Sect. \ref{Sec-Conclusions}. They differ in the choice of their state spaces.
Considering the local entropy and energy densities $s$ and $e$, the inverse
local temperature is defined as $\partial s/\partial e$, and differs from the
global temperature in the MNEQT.

\begin{enumerate}
\item In the LNEQT, each local volume element is in EQ so the local
temperature is the EQ temperature of the volume element, and differs from $T$,
which is a global temperature.

\item In the RNEQT, the temperature is taken as a primitive quantity along
with the entropy. Because of the memory effect, the temperature at any time
depends on the entire history. Thus, it is a local analog of the global
temperature of $\mathcal{M}_{\text{nieq}}$ in the MNEQT, but the latter is
defined thermodynamically.

\item In the ENEQT, the fluxes are part of the state variables so the local
temperature also depends on them. Assuming the total entropy to also depend on
the fluxes \cite[see Eq. (5.66), for example]{Jou0}, one can identify the
global analog of the temperature in the ENEQT. However, as fluxes are
MI-quantities, this temperature cannot be compared with the SI-temperature in
the MNEQT.
\end{enumerate}

There is a recent attempt \cite{Lucia} to introduce another NEQ\ temperature
by using fluctuation theorems to determine the entropy generation, which is
then related to the Gouy-Stodola theorem derived later (see Eq.
(\ref{Irrev-Work-Applied-DelF-Del_i_S})). It is limited to a interacting
$\Sigma$ in a medium $\widetilde{\Sigma}$ so does not apply to an isolated
system. In addition, its validity is limited to the situation when the
Gouy-Stodola theorem is valid as seen from the derivation of Eq.
(\ref{Irrev-Work-Applied-DelF-Del_i_S}). \ \ \ \ \ \ \ \ \ \ \ \ \ \ \ \ \ 

\subsection{Uniqueness of $S_{\text{ieq}}(\mathbf{Z})$ and $T$%
\label{Sec-Unique-S-T}}

We now give an alternative demonstration of the uniqueness of the entropy of
$\mathcal{M}_{\text{ieq}}$ in $\mathfrak{S}_{\mathbf{Z}}$, which is based on
the discussion of the internal variables in Sect. \ref{Sec-InternalVariables}.
Let us assume that we divide $\Sigma$ into a finite number of nonoverlapping
EQ subsystems $\left\{  \Sigma_{i}\right\}  $ such that $\cup_{i}\Sigma
_{i}=\Sigma$. Without loss of generality, we assume that the subsystems are
not in EQ with each other (their fields are not identical) so that $\Sigma$ is
in a NEQ macrostate. Let $\lambda_{\text{corr}}^{(i)}$ denote the correlation
length of $\Sigma_{i}$, and we define $\lambda_{\text{corr}}=\max\left\{
\lambda_{\text{corr}}^{(i)}\right\}  $ to denote the maximum correlation
length determining quasi-independence required for entropy additivity as
discussed in Sect. \ref{Sec-Notation}; see Eq. (\ref{Entropy-Additivity}). For
this, we need to take the linear size $\Delta l_{i}\gtrsim\lambda
_{\text{corr}}$ of $\Sigma_{i}$. The EQ microstate $\mathcal{M}_{\text{eq}%
}^{(i)}$\ of $\Sigma_{i}$ is uniquely described in $\mathfrak{S}_{\mathbf{X}}%
$. The additivity of entropy gives $S_{\text{ieq}}$ that must be a function of
$\left\{  \mathbf{X}_{i}\right\}  $. Moreover, since each $\mathcal{M}%
_{\text{eq}}^{(i)}$ has a unique entropy $S_{i}(\mathbf{X}_{i})$,
$S_{\text{ieq}}$ also has a unique value%
\begin{subequations}
\begin{equation}
S_{\text{ieq}}(\left\{  \mathbf{X}_{i}(t)\right\}  )=%
{\textstyle\sum\nolimits_{i}}
S_{i}(\mathbf{X}_{i}(t)). \label{S-Additivity-1}%
\end{equation}
As we need to express $S_{\text{ieq}}$ in terms of $\mathbf{X}(t)=%
{\textstyle\sum\nolimits_{i}}
\mathbf{X}_{i}(t)$, we need additional independent linear combinations
$\boldsymbol{\xi}(t)=\cup_{i}\boldsymbol{\xi}_{i}(t)$ made from the set
$\left\{  \mathbf{X}_{i}\right\}  $ as already discussed in Sect.
\ref{Sec-InternalVariables} to ensure that $S(\mathbf{Z}(t))$ depends on the
\emph{same} number $n^{\ast}$ of state variables as there are in
$S_{\text{ieq}}(\left\{  \mathbf{X}_{i}(t)\right\}  )$. This \emph{uniquely}
defines%
\begin{equation}
S(\mathbf{Z}(t))=%
{\textstyle\sum\nolimits_{i}}
S_{i}(\mathbf{X}_{i}(t)) \label{S-Additivity-2}%
\end{equation}
in $\mathfrak{S}_{\mathbf{Z}}$ in terms of the unique values$\ S_{i}%
(\mathbf{X}_{i}(t))$. It is a mathematical identity between the left side for
$\Sigma_{\text{B}}$ and the right side for $\Sigma_{\text{C}}$. We can also
take $\Sigma_{i}$'s to be in $\mathcal{M}_{i\text{ieq}}$'s in $\mathfrak{S}%
_{\mathbf{Z}_{i}}$'s so that $S_{i}(\mathbf{Z}_{i}(t))$ are also uniquely
defined. Then, the same reasoning as above also proves that
\begin{equation}
S(\mathbf{Z}(t))=%
{\textstyle\sum\nolimits_{i}}
S_{i}(\mathbf{Z}_{i}(t)) \label{S-Additivity-3}%
\end{equation}
is unique by ensuring that the number of arguments $n^{\ast}$ are the same on
both sides between the entropies for $\Sigma_{\text{B}}$ and $\Sigma
_{\text{C}}$.

We now prove the following central theorem on the existence of NEQ entropy for
any $\mathcal{M}_{\text{ieq}}$ with $n^{\ast}$ independent state variables.
\end{subequations}
\begin{theorem}
\label{Theorem-Exixtence-S} Existence: The SI-entropy $S(\mathbf{Z}(t))$ for
any $\mathcal{M}_{\text{ieq}}$ exists, has a unique thermodynamic temperature
$T$, and is additive in $\mathfrak{S}_{\mathbf{Z}}$.
\end{theorem}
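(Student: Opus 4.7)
The plan is to establish the three assertions -- existence, unique temperature, and additivity -- separately, drawing on machinery already assembled in the excerpt. For \emph{existence}, I would start from the Gibbs formulation in Eq.~(\ref{Gibbs_Formulation}), which has already been justified by the combinatorial argument for any $\mathcal{M}_{\text{arb}}$ without requiring uniqueness. The bounds $0\le S(\mathbf{Z}(t))\le \ln W(\mathbf{Z}(t))$ noted after Eq.~(\ref{S_BoltzmannZ}) guarantee that the quantity is well-defined, and continuity of $p_k(t)$ gives continuity of $S$. Specializing to $\mathcal{M}_{\text{ieq}}$ by Definition~\ref{Def-InternalEQ} removes the explicit $t$-dependence, so $S(\mathbf{Z}(t))\equiv S(\mathbf{Z})$ is a bona fide state function in $\mathfrak{S}_{\mathbf{Z}}$. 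Equivalently, under the flat-distribution postulate (Remark~\ref{Remark-FlatDistribution}) the entropy attains its maximum $\ln W(\mathbf{Z})$, Eq.~(\ref{S_BoltzmannZ}), which is manifestly a state function of $\mathbf{Z}$.

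For \emph{additivity}, I would invoke the partition argument of Sect.~\ref{Sec-Unique-S-T}: decompose $\Sigma=\cup_i \Sigma_i$ into subsystems with $\Delta l_i\gtrsim\lambda_{\text{corr}}$, so that the correlation contribution $S_{\text{corr}}$ of Eq.~(\ref{Entropy-Additivity}) is negligible and each $\Sigma_i$ is in EQ (or in IEQ in a subspace $\mathfrak{S}_{\mathbf{Z}_i}$) with a unique entropy $S_i(\mathbf{X}_i)$ (resp.\ $S_i(\mathbf{Z}_i)$). Summing yields Eqs.~(\ref{S-Additivity-1})--(\ref{S-Additivity-3}). The decisive step is then the change of description from $\{\mathbf{X}_i\}$ (or $\{\mathbf{Z}_i\}$) to $\mathbf{Z}=\mathbf{X}\cup\boldsymbol{\xi}$ for $\Sigma$ as a whole: one forms $\boldsymbol{\xi}=\cup_i\boldsymbol{\xi}_i$ out of independent linear combinations of the $\mathbf{X}_i$ (or $\mathbf{Z}_i$) so that the number of independent arguments $n^{\ast}$ is preserved across the equality. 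The identification between the ``black-box'' entropy $S(\mathbf{Z})$ and the composite sum is then a mathematical identity, which gives additivity in $\mathfrak{S}_{\mathbf{Z}}$.

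For the \emph{unique thermodynamic temperature}, once $S(\mathbf{Z})$ is established as a state function of $\mathbf{Z}=(E,\mathbf{W})$, the Gibbs relation Eq.~(\ref{Gibbs-FR-S}) applies and the partial derivative $\partial S/\partial E$ at fixed $\mathbf{W}$ exists and is single-valued, defining $\beta=1/T$ uniquely by Eq.~(\ref{T-beta}). Uniqueness is inherited from uniqueness of $S(\mathbf{Z})$ itself: evaluating $\partial S/\partial E$ from the composite side $\sum_i S_i(\mathbf{Z}_i)$ and from the black-box side $S(\mathbf{Z})$ must agree because the two sides are literally equal, so the temperature is a single global attribute of $\Sigma$ even when the $T_i$ of the subsystems differ. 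This already delivers C1 of Criterion~\ref{Criterion-Temperature}; C2--C4 are structural features to be checked afterwards (C3, in particular, is the content of the later Clausius-inequality analysis).

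The principal obstacle is the dimension-matching in the additivity step: one must verify that the internal variables introduced to describe $\Sigma$ in $\mathfrak{S}_{\mathbf{Z}}$ are precisely enough to capture all independent $\mathbf{X}_i$ (or $\mathbf{Z}_i$) without over- or under-counting, so that the equality in Eq.~(\ref{S-Additivity-2}) or (\ref{S-Additivity-3}) is a genuine identity rather than an inequality of the form~(\ref{EntropyBound-IEQ}). A secondary delicate point is the quasi-independence hypothesis required to drop $S_{\text{corr}}$; this is non-trivial only near criticality, where $\lambda_{\text{corr}}$ diverges, and away from such regimes it is the standard additivity assumption already discussed in Sect.~\ref{Sec-Notation}.
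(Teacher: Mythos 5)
Your proposal is correct in substance, and for the additivity and temperature claims it follows the paper's own route exactly: the subsystem decomposition of Sect.~\ref{Sec-Unique-S-T} with the $n^{\ast}$ dimension-matching yielding Eqs.~(\ref{S-Additivity-1})--(\ref{S-Additivity-3}), and the single-valued derivative $\partial S/\partial E$ in the Gibbs relation~(\ref{Gibbs-FR-S}) giving the unique $T$ of Eq.~(\ref{T-beta}). Where you genuinely diverge is the existence step. The paper's proof is deliberately \emph{thermodynamic}: it appeals to the classical postulate that EQ entropies $S_i(\mathbf{X}_i)$ exist and are continuous in $\mathfrak{S}_{\mathbf{X}}$, and then \emph{constructs} $S(\mathbf{Z}(t))$ as their sum via Eq.~(\ref{S-Additivity-1}); existence is thus inherited from EQ thermodynamics rather than from statistical mechanics. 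Your route instead goes through the Gibbs/Boltzmann statistical entropy---boundedness $0\le S\le\ln W(\mathbf{Z})$ and continuity of $p_k(t)$---which the paper does state (just before Definition~\ref{Def-InternalEQ}) but keeps outside this theorem, precisely because the theorem is meant to stand on thermodynamic postulates alone. One caution about your version: invoking Definition~\ref{Def-InternalEQ} to ``remove the explicit $t$-dependence'' is close to circular, since IEQ is \emph{defined} as the case where $S$ is a state function; the constructive content (that such state functions actually arise) is supplied either by your flat-distribution alternative $S=\ln W(\mathbf{Z})$ or, as the paper prefers, by the explicit sum over EQ subsystems. You correctly identify the $n^{\ast}$ matching as the delicate point that turns Eq.~(\ref{S-Additivity-2}) into an identity rather than the inequality~(\ref{EntropyBound-IEQ}); that is indeed the load-bearing step of the whole section.
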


\begin{proof}
According to the postulates of classical thermodynamics, EQ entropies exist in
$\mathfrak{S}_{\mathbf{X}}$ and are continuous. Therefore, Eq.
(\ref{S-Additivity-1}) proves the existence of the entropy $S(\mathbf{Z}(t))$
for any $\mathcal{M}_{\text{ieq}}~$in the state space $\mathfrak{S}%
_{\mathbf{Z}}$, and is continuous. It follows from the existence of
$S(\mathbf{Z}(t))$ that $\mathcal{M}_{\text{ieq}}~$has a unique thermodynamic
temperature $T$. In addition, $S(\mathbf{Z}(t))$ is also additive as follows
from Eq. (\ref{S-Additivity-3}). This proves the theorem.
\end{proof}

\begin{corollary}
\label{Corollary-ProperSubspaces} The state space $\mathfrak{S}_{\mathbf{Z}}$
contains $\mathfrak{S}_{\mathbf{X}}$ as a proper subspace because of the
presence of the internal variables, except when $\Sigma$ is in EQ, when they
become the same.
\end{corollary}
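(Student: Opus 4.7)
The plan is to dispatch the corollary in two moves: establish the inclusion $\mathfrak{S}_{\mathbf{X}} \subseteq \mathfrak{S}_{\mathbf{Z}}$ as a general fact, then argue that the inclusion is strict for NEQ macrostates and becomes an equality exactly at EQ. Both parts are essentially bookkeeping with the definitions already in place, using $\mathbf{Z} = \mathbf{X} \cup \boldsymbol{\xi}$ from Definition \ref{Def-Observables-InternalVariables-StateVariables} together with the examples of Sec. \ref{Sec-InternalVariables}.

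For the inclusion, I would observe that any point of $\mathfrak{S}_{\mathbf{X}}$ is determined by specifying $\mathbf{X}$ alone, and hence lifts canonically to the point of $\mathfrak{S}_{\mathbf{Z}}$ obtained by appending the EQ values $\boldsymbol{\xi}_{\text{eq}}$ of the internal variables (which by the convention stated after Remark \ref{Remark-IsolatedSystem} may be taken to vanish). This identifies $\mathfrak{S}_{\mathbf{X}}$ with the slice $\{\boldsymbol{\xi} = \boldsymbol{\xi}_{\text{eq}}\} \subset \mathfrak{S}_{\mathbf{Z}}$, establishing the embedding.

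For strictness in the NEQ regime, I would appeal to the concrete constructions of Sec. \ref{Sec-InternalVariables}: for instance, in the composite system of Fig. \ref{Fig-CompositeSystem}, a NEQ macrostate is specified by $\mathbf{X} = (E, V, \dots)$ together with $\xi = E_1 - E_2 \neq 0$, and one can realize many distinct $\xi$-values for the same $E$. These produce points of $\mathfrak{S}_{\mathbf{Z}}$ lying outside the embedded copy of $\mathfrak{S}_{\mathbf{X}}$, so the containment is proper whenever $\boldsymbol{\xi}$ carries independent information. Conversely, in EQ the affinities $\mathbf{A}_{\text{eq}} = 0$ vanish, so $\boldsymbol{\xi}$ is no longer independent of $\mathbf{X}$ (its equilibrium value is fixed by $\mathbf{X}$). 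Hence the $\boldsymbol{\xi}$-direction of $\mathfrak{S}_{\mathbf{Z}}$ collapses onto $\mathfrak{S}_{\mathbf{X}}$, and the two state spaces coincide.

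The main (and essentially only) obstacle is interpretive rather than technical: one must be careful to distinguish the geometric state space, coordinatized by $\mathbf{Z}$, from the subset of it populated by physically realizable macrostates of a given $\Sigma$. Strictness of the inclusion only requires the \emph{existence} of at least one NEQ $\mathcal{M}$ with $\boldsymbol{\xi} \neq \boldsymbol{\xi}_{\text{eq}}$ for given $\mathbf{X}$, and this is precisely what the examples of Sec. \ref{Sec-InternalVariables} guarantee. No independent analytic step is needed beyond invoking those examples and the vanishing of affinities in EQ.
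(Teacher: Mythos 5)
Your proposal is correct and follows essentially the same route as the paper's own proof: the paper also argues that $\mathfrak{S}_{\mathbf{Z}}$ contains the independent linear combinations $\boldsymbol{\xi}$ built from the subsystem observables $\{\mathbf{X}_i\}$ (the same composite-system construction you invoke), hence contains $\mathfrak{S}_{\mathbf{X}}$ properly, and that in EQ the internal variables lose their independence (their affinities vanish) so the two spaces coincide. Your explicit identification of $\mathfrak{S}_{\mathbf{X}}$ with the slice $\{\boldsymbol{\xi}=\boldsymbol{\xi}_{\text{eq}}\}$ is a slightly more careful rendering of the embedding, but it is not a different argument.
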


\begin{proof}
For any $\mathcal{M}_{\text{ieq}}$, $\mathfrak{S}_{\mathbf{Z}}$ contains all
possible linear combinations of $\boldsymbol{\xi}(t)$ made from the set
$\left\{  \mathbf{X}_{i}\right\}  $. Hence, it contains $\mathfrak{S}%
_{\mathbf{X}}$ as a proper subspace. In EQ, internal variables become
superfluous as they are no longer independent so their affinity vanishes.
Thus, $\mathfrak{S}_{\mathbf{Z}}$ reduces to $\mathfrak{S}_{\mathbf{X}}$ in EQ.
\end{proof}

\subsection{Irreversibility Inequalities in $\mathcal{M}_{\text{ieq}}%
$\label{Sec-IrreversibleInequalities}}

We consider the Hamiltonian $\mathcal{H}(\left.  \mathbf{x}\right\vert
\mathbf{w},\boldsymbol{\xi})$ in $\mathfrak{S}_{\mathbf{Z}}$. We only consider
the case of extensive macrowork parameters. As $\mathfrak{m}_{k}$ evolves
under the variation in $\mathbf{W}$, its energy $E_{k}$ changes by
$dE_{k}=-dW_{k}\ $without changing $p_{k}$; see Eq.
(\ref{HamiltonianChange-StateSpace}). The change determines the isentropic
generalized macrowork $dW=\mathbf{F}_{\text{w}}\cdot d\mathbf{W}%
=-dE_{\text{m}}$. The stochasticity appears from the generalized macroheat
$dQ=dE_{\text{s}}=TdS$. Recalling that for $\widetilde{\Sigma}$,
$T=T_{0},\mathbf{f}_{\text{w}0}=(P_{0},\cdots),\mathbf{A}_{0}=0$, we have in
general,
\begin{subequations}
\label{d-exch}%
\begin{align}
d_{\text{e}}W  &  =-d\widetilde{W}=\mathbf{f}_{\text{w}0}\cdot d_{\text{e}%
}\mathbf{w=}P_{0}dV+\cdots,\label{deW}\\
d_{\text{e}}Q  &  =-d\widetilde{Q}=T_{0}d_{\text{e}}S, \label{deQ}%
\end{align}
where the missing terms in the top equation refires to other elements in
$\mathbf{w}$. The irreversible macrowork $d_{\text{i}}W\doteq dW-d_{\text{e}%
}W$ due to the thermodynamic macroforce\emph{ }$\Delta\mathbf{F}^{\text{w}}$
has been given in Eq. (\ref{Irrev-Work}).

Using $d_{\text{e}}Q\ $in $dQ$, we find
\end{subequations}
\begin{subequations}
\begin{equation}
d_{\text{i}}Q=TdS-T_{0}d_{\text{i}}S=\left\{
\begin{array}
[c]{c}%
(T-T_{0})d_{\text{e}}S+Td_{\text{i}}S\\
(T-T_{0})dS+T_{0}d_{\text{i}}S
\end{array}
\right.  \geq0. \label{diQ}%
\end{equation}
Equating this with $d_{\text{i}}W$ from Eq. (\ref{Irrev-Work}), we obtain for
the irreversible entropy generation
\begin{equation}
d_{\text{i}}S=\left\{
\begin{array}
[c]{c}%
\left\{  (T_{0}-T)d_{\text{e}}S+\Delta\mathbf{F}^{\text{w}}\cdot
d\mathbf{W}\right\}  /T\\
\left\{  (T_{0}-T)dS+\Delta\mathbf{F}^{\text{w}}\cdot d\mathbf{W}\right\}
/T_{0}%
\end{array}
\right.  \geq0; \label{diS}%
\end{equation}
see Eq. (\ref{Irrev-Work}) for $\Delta\mathbf{F}^{\text{w}}\cdot d\mathbf{W}$.
Each term on the right side must be \emph{nonnegative} for the second law to
be valid. Thus, in terms of $\Delta F^{\text{h}}=T_{0}-T$, we see that the
first term%
\end{subequations}
\begin{subequations}
\begin{equation}
\Delta F^{\text{h}}d_{\text{e}}S\geq0, \label{ClausiusStatement}%
\end{equation}
in the first equation, which proves the Clausius statement of the macroheat
flow from "hot" to "cold," thus making sure that $T$ indeed can be thought of
as a "thermodynamic temperature" of the entire system, even if the latter is
inhomogeneous. This is the requirement C4 for a thermodynamic temperature.
Another important consequence of the second law comes from the first term in
the second equation \cite{Gujrati-I}:%
\begin{equation}
\Delta F^{\text{h}}dS\geq0. \label{EntropyInequality}%
\end{equation}
Similarly, the second term results in the inequality%
\begin{equation}
d_{\text{i}}W\equiv\Delta\mathbf{F}^{\text{w}}\cdot d\mathbf{W}\geq0
\label{WorkInequality}%
\end{equation}
due to macroforce imbalance, and consists of three separate inequalities%
\begin{equation}
(\mathbf{f}_{\text{w}}-\mathbf{f}_{\text{w}0})\cdot d_{\text{e}}\mathbf{w}%
\geq0,\mathbf{f}_{\text{w}}\cdot d_{\text{i}}\mathbf{w}\geq0,\mathbf{A\cdot
}d\boldsymbol{\xi}\geq0. \label{WorkInequality123}%
\end{equation}
This thus proves the inequality for $d_{\text{i}}W$ in Eq. (\ref{Irrev-Work}).
Using the inequality for $d_{\text{i}}W$ in $d_{\text{i}}Q=d_{\text{i}}W$ also
proves the inequality for $d_{\text{i}}Q$ in Eq. (\ref{diQ}). All these
inequalities help drive the system towards EQ in accordance with the second
law. We summarize the result in the following corollary.
\end{subequations}
\begin{corollary}
\label{Cor-PositiveIrreversibleWork}The irreversible macrowork $d_{\text{i}%
}W(t)$ or macroheat $d_{\text{i}}Q(t)$ is \emph{nonnegative}.
\end{corollary}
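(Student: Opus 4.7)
The plan is to extract the corollary directly from the irreversibility analysis already assembled in Eqs.~(\ref{diQ})--(\ref{WorkInequality123}), and then to close the argument by invoking the identity $d_{\text{i}}Q\equiv d_{\text{i}}W$ from Eq.~(\ref{diQ-diW-EQ}). First I would invoke the second law in the form $d_{\text{i}}S\geq 0$ from Proposition~\ref{Prop-SecondLaw}, applied to the combined isolated system $\Sigma_{0}=\Sigma\cup\widetilde{\Sigma}$, since the medium $\widetilde{\Sigma}$ is in EQ and generates no entropy of its own.

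Next I would substitute the expression for $d_{\text{i}}S$ given in the second line of Eq.~(\ref{diS}), namely $T_{0}\,d_{\text{i}}S=(T_{0}-T)\,dS+\Delta\mathbf{F}^{\text{w}}\!\cdot d\mathbf{W}$. Since $T_{0}>0$, the nonnegativity of $d_{\text{i}}S$ translates into the nonnegativity of the right-hand side. The key step here is to argue that the thermal piece $(T_{0}-T)\,dS$ and the mechanical/chemical piece $\Delta\mathbf{F}^{\text{w}}\!\cdot d\mathbf{W}$ are independently controllable: the thermal imbalance $\Delta F^{\text{h}}=T_{0}-T$ can be switched off by matching the system and medium temperatures while keeping a force imbalance, and vice versa. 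Consequently the second law forces each contribution to be separately nonnegative, which yields the Clausius statement (\ref{ClausiusStatement}), the entropy inequality (\ref{EntropyInequality}), and the work inequality (\ref{WorkInequality}).

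A completely parallel argument applied to the three subcomponents of $\Delta\mathbf{F}^{\text{w}}\!\cdot d\mathbf{W}$ in Eq.~(\ref{Irrev-Work}) (the force imbalance acting on $d_{\text{e}}\mathbf{w}$, the pressure acting on $d_{\text{i}}\mathbf{w}$, and the affinity acting on $d\boldsymbol{\xi}$) gives the three separate inequalities in Eq.~(\ref{WorkInequality123}), provided one accepts that these three contributions can be varied independently in principle. Summing them recovers $d_{\text{i}}W\geq 0$. Finally, the thermodynamic identity $d_{\text{i}}Q\equiv d_{\text{i}}W$ from Eq.~(\ref{diQ-diW-EQ}), which follows from the conservation law $d_{\text{i}}E\equiv 0$ in Eq.~(\ref{E-and-V-Partition}), transfers the inequality immediately to $d_{\text{i}}Q\geq 0$, completing the corollary.

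The main obstacle, and the only nontrivial step, is the independence argument used to split the second-law inequality into its separate pieces. Naively $d_{\text{i}}S\geq 0$ only constrains the \emph{sum} of the contributions, not each one. The justification must appeal to the fact that one can construct processes (conceptually, families of $\mathcal{M}_{\text{ieq}}$'s) in which all but one thermodynamic macroforce $\Delta\mathbf{F}$ in Eq.~(\ref{thermodynamic force}) vanishes, and the corresponding conjugate displacement is driven with either sign; the second law then selects the sign of the remaining term. Once this independence is granted, the rest of the proof is essentially algebraic rearrangement of what has already been derived, so the corollary follows as a clean consequence of Eqs.~(\ref{WorkInequality}) and (\ref{diQ-diW-EQ}).
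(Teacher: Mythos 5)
Your proposal is correct and follows essentially the same route as the paper: the second law $d_{\text{i}}S\geq0$ applied to the decomposition in Eq.~(\ref{diS}), the splitting into independently nonnegative contributions that yields Eq.~(\ref{WorkInequality}), and the identity $d_{\text{i}}Q\equiv d_{\text{i}}W$ from Eq.~(\ref{diQ-diW-EQ}) to transfer the inequality to the macroheat. The only difference is that you make explicit the independence argument that the paper asserts without elaboration, which is a welcome clarification rather than a deviation.
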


\begin{proof}
From Eq. (\ref{WorkInequality}), we find that
\begin{equation}
d_{\text{i}}W(t)=d_{\text{i}}Q(t)>0\text{ \ for }T>0 \label{diW_positive}%
\end{equation}
in accordance with the second law. \ 
\end{proof}

For example, if $d_{\text{i}}W$ corresponds to the irreversible macrowork done
by pressure imbalance only (so that we omit the last term in $d_{\text{i}}W$
in Eq. (\ref{MI-Work})), then%
\begin{equation}
d_{\text{i}}W(t)=(P(t)-P_{0})dV(t)>0\text{.}
\label{Irreversible_Pressure-Work}%
\end{equation}
If the system's pressure $P(t)>P_{0}$, the pressure of the medium, the volume
of the system increases so that $dV(t)>0$. In the opposite case, $dV(t)<0$. In
both cases, $d_{\text{i}}W(t)>0$ out of equilibrium. When $d_{\text{i}}W(t)$
consists of several independent contributions, each contribution must be
nonnegative in accordance with the second law and Corollary
\ref{Cor-PositiveIrreversibleWork}. The significance of the irreversible
macrowork in Eq. (\ref{Irreversible_Pressure-Work}) has been discussed in
Refs. \cite{Gujrati-Heat-Work0,Gujrati-Heat-Work}, where it is shown that this
macrowork results in raising the kinetic energy of the center-of-mass of the
surface separating $\Sigma$ and $\widetilde{\Sigma}$ by $dK_{\text{S}}$ and
overcoming macrowork $dW_{\text{fr}}$ done by all sorts of viscous or
frictional drag. Because of the stochasticity associated with any statistical
system, both energies dissipate among the particles in the system and appear
in the form of macroheat $d_{\text{i}}Q(t)$.

In the absence of any heat exchange ($d_{\text{e}}S=0$) or for an isothermal
system ($T=T_{0}$),$\ $we have
\begin{equation}
d_{\text{i}}Q=Td_{\text{i}}S=d_{\text{i}}W,
\label{ClosedIsothermalSystem-InternalHeat}%
\end{equation}
where $d_{\text{i}}W$ is given by Eq. (\ref{Irrev-Work}).

\subsection{Internal Variables and the Isolated System}

The above formulation of MNEQT is perfectly suited for considering an isolated
system $\Sigma$ ($d_{\text{e}}W=d_{\text{e}}Q\equiv0$) so that Eq.
(\ref{diQ-diW-EQ}) or $d_{\text{i}}E=0$ in Eq. (\ref{E-and-V-Partition})
becomes the most important thermodynamic equality. For an isolated system,
$d\mathbf{X}=0$ so that $d_{\text{i}}W=\mathbf{A}\cdot d\boldsymbol{\xi}$.

\begin{theorem}
\label{Theorem-diW-diS-Isolated}The irreversible entropy generated within an
isolated system is still related to the dissipated macrowork performed by the
internal variables.
\end{theorem}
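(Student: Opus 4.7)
The plan is to combine three results already established in the excerpt: the Clausius equality $dQ=TdS$ (Eq.~(\ref{dQ-dS})), the thermodynamic identity $d_{\text{i}}Q\equiv d_{\text{i}}W$ (Eq.~(\ref{diQ-diW-EQ})) which follows from $d_{\text{i}}E\equiv 0$, and the reduction of $d_{\text{i}}W$ for an isolated system, namely $d_{\text{i}}W=\mathbf{A}\cdot d\boldsymbol{\xi}$, as noted immediately before the theorem.

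First I would observe that for an isolated system the exchange quantities all vanish: $d_{\text{e}}W\equiv 0$, $d_{\text{e}}Q\equiv 0$, and hence (from $d_{\text{e}}Q=T_{0}d_{\text{e}}S$ when a medium exists, or directly from the absence of exchange) $d_{\text{e}}S=0$. Consequently $dS=d_{\text{i}}S$ and $dQ=d_{\text{i}}Q$. Next, I would apply the Clausius equality to this isolated system, using the system's own thermodynamic temperature $T$ defined by Eq.~(\ref{T-beta}); the existence and uniqueness of $T$ for an IEQ macrostate was guaranteed by Theorem~\ref{Theorem-Exixtence-S}. This yields
\begin{equation}
d_{\text{i}}Q=dQ=TdS=Td_{\text{i}}S.
\end{equation}

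Combining this with $d_{\text{i}}Q=d_{\text{i}}W$ gives $Td_{\text{i}}S=d_{\text{i}}W$. Since the system is isolated, $d\mathbf{X}=0$ (so $d_{\text{e}}\mathbf{w}=d_{\text{i}}\mathbf{w}=0$), and the general expression for $d_{\text{i}}W$ in Eq.~(\ref{Irrev-Work}) collapses to $d_{\text{i}}W=\mathbf{A}\cdot d\boldsymbol{\xi}$. Therefore
\begin{equation}
d_{\text{i}}S=\frac{\mathbf{A}\cdot d\boldsymbol{\xi}}{T}\geq 0,
\end{equation}
where the nonnegativity follows from the third inequality in Eq.~(\ref{WorkInequality123}) together with $T>0$. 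This explicitly exhibits the irreversible entropy generation as the dissipated macrowork performed by the internal variables divided by the system's temperature, which is precisely the content of the theorem.

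I do not expect any real obstacle here, since all the needed machinery has already been assembled. The only point that requires a bit of care is that the Clausius equality was originally introduced in a form that references the medium temperature $T_{0}$; one must emphasize that in the MNEQT framework $dQ=TdS$ is intrinsic to $\Sigma$ (Remark following Eq.~(\ref{System_dQ_dS})) and so remains meaningful in the isolated case where no $T_{0}$ exists. Once that is made explicit, the derivation reduces to a two-line substitution chain.
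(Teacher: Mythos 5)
Your proposal is correct and follows essentially the same route as the paper's own proof: both use the vanishing of all exchange quantities for an isolated system, the Clausius equality $dQ=TdS$ together with the first law $dE=dQ-dW=0$ to obtain $Td_{\text{i}}S=d_{\text{i}}W$, and the collapse of $d_{\text{i}}W$ to $\mathbf{A}\cdot d\boldsymbol{\xi}$ when $d\mathbf{X}=0$. Your version merely spells out the intermediate steps (and the caveat about $T$ versus $T_{0}$) more explicitly than the paper's two-line argument.
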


\begin{proof}
As $E$ remains fixed for an isolated system ($dQ=Td_{\text{i}}S$), we have
from Eq. (\ref{FirstLaw-SI})%
\begin{equation}
d_{\text{i}}Q=Td_{\text{i}}S=d_{\text{i}}W=\mathbf{A}\cdot d\boldsymbol{\xi
}\geq0 \label{EntropyDiff-Isolated}%
\end{equation}
in accordance with the second law.
\end{proof}

Note that the above equation, though it is identical to Eq.
(\ref{ClosedIsothermalSystem-InternalHeat}) in form, is very different in that
$d_{\text{i}}W$ here is simply $\mathbf{A}\cdot d\boldsymbol{\xi}$. Same
conclusion is also obtained when we apply Eq. (\ref{diS}) to an isolated system.

\begin{corollary}
\label{Corollary-IsolatedSystem}Neither the entropy can increase nor will
there be any dissipated work unless some internal variables are present in an
isolated system. If no internal variables are used to describe an isolated
system, then thermodynamics requires it to be in EQ.
\end{corollary}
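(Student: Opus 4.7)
The plan is to read off both assertions directly from the identity established in Theorem~\ref{Theorem-diW-diS-Isolated}, namely
\begin{equation*}
T\,d_{\text{i}}S \;=\; d_{\text{i}}W \;=\; \mathbf{A}\cdot d\boldsymbol{\xi} \;\geq\; 0,
\end{equation*}
and then to combine the resulting triviality with the second law (Proposition~\ref{Prop-SecondLaw}) and Remark~\ref{Remark-IsolatedSystem} to force the system into EQ.

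First I would dispose of the ``no dissipation / no entropy gain'' clause. If the internal-variable set is empty, then the sum $\mathbf{A}\cdot d\boldsymbol{\xi}$ has no terms and vanishes identically, so the displayed identity gives $d_{\text{i}}W \equiv 0$ and $d_{\text{i}}S \equiv 0$ along every segment $d\mathcal{P}$. Since for an isolated system $d_{\text{e}}W \equiv 0$ and $d_{\text{e}}S \equiv 0$, the partitions in Eqs.~(\ref{Macro-work-heat-partition}) and (\ref{Entropy-Partition}) then collapse to $dW = 0$ and $dS = 0$ over the entire process. This settles the first sentence.

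Next I would establish the second sentence by contradiction. Suppose no internal variables are used, so $\mathbf{Z} = \mathbf{X}$, and suppose the isolated system were in some NEQ macrostate at time $t_0$. By Proposition~\ref{Prop-SecondLaw}, the Gibbs entropy must then satisfy $dS/dt > 0$ strictly until $\mathcal{M}_{\text{eq}}$ is reached. However, for an isolated system $\mathbf{X}_0$ is constant, and by Remark~\ref{Remark-IsolatedSystem} any state-function entropy $S(\mathbf{X})$ on $\mathfrak{S}_{\mathbf{X}}$ is then also constant; this is consistent with the calculation of the previous paragraph giving $dS = 0$. The two demands $dS/dt > 0$ and $dS/dt = 0$ can be simultaneously satisfied only at the macrostate where the second law no longer requires increase, i.e.\ at $\mathcal{M}_{\text{eq}}$ where $S(\mathbf{X})$ already attains its maximum $\ln W(\mathbf{X})$.

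The main obstacle, as I see it, is interpretive rather than computational: one must be careful not to overclaim. The corollary is not asserting that NEQ isolated systems cannot physically exist; it is asserting that within a state space containing only observables, thermodynamic consistency (state-function entropy plus the second law) has no room for them. I would close by flagging explicitly that the way out is exactly the enlargement $\mathfrak{S}_{\mathbf{X}} \subset \mathfrak{S}_{\mathbf{Z}}$ via $\boldsymbol{\xi}$ of Corollary~\ref{Corollary-ProperSubspaces}, which restores a nontrivial $d_{\text{i}}S = \mathbf{A}\cdot d\boldsymbol{\xi}/T \geq 0$ and with it the entire machinery of the MNEQT for isolated systems.
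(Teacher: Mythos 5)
Your proposal is correct and takes essentially the same route as the paper, which simply declares the result to follow "trivially" from the identity $Td_{\text{i}}S=d_{\text{i}}W=\mathbf{A}\cdot d\boldsymbol{\xi}\geq0$ of Theorem \ref{Theorem-diW-diS-Isolated}; your first paragraph is exactly that reading, and your contradiction argument for the second sentence (via Proposition \ref{Prop-SecondLaw} and Remark \ref{Remark-IsolatedSystem}) is just a careful unpacking of what the paper leaves implicit. No gaps.
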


\begin{proof}
The proof follows trivially from Eq. (\ref{EntropyDiff-Isolated}).
\end{proof}

\subsection{Dissipation and Thermodynamic Forces}

As the inequality $d_{\text{i}}W\geq0$, see Eqs. (\ref{Irrev-Work}) and
(\ref{WorkInequality}), or $\Delta_{\text{i}}W\geq0$, see
(\ref{Isolated-Heat-Work}), for the irreversible macrowork for $\mathcal{M}%
_{\text{ieq}}$ in $\mathfrak{S}_{\mathbf{Z}}$ follows from the second law, it
is natural to identify it as the \emph{dissipation} or the \emph{dissipated
work}; recall that $\Delta_{\text{i}}W$ is obtained by integrating
$d_{\text{i}}W$ in Eq. (\ref{Irrev-Work}) over $\mathcal{P}$%
\begin{equation}
\Delta_{\text{i}}W=%
{\textstyle\int\nolimits_{\mathcal{P}}}
\left[  (\mathbf{f}_{\text{w}}-\mathbf{f}_{\text{w}0})\cdot d_{\text{e}%
}\mathbf{w}+\mathbf{f}_{\text{w}}\cdot d_{\text{i}}\mathbf{w}+\mathbf{A\cdot
}d\boldsymbol{\xi}\right]  . \label{Delta_i-W-Full}%
\end{equation}

\begin{definition}
\label{Def-DissipatedWork}The irreversible macrowork $d_{\text{i}}W\geq0$ or
$\Delta_{\text{i}}W\geq0$ for $\mathcal{M}_{\text{ieq}}$ belonging to
$\mathfrak{S}_{\mathbf{Z}}$ along $\mathcal{P}$, is identified as the
\emph{dissipation} or the \emph{dissipated work }in the MNEQT\emph{. }
\end{definition}

The definition is applicable regardless of $\mathbf{Z}$, and has contributions
from macroforce imbalance in $\mathfrak{S}_{\mathbf{Z}}$ as given in Eq.
(\ref{work macroforce}) at each point in $\mathcal{P}$. All microstates along
the path $\gamma_{\mathcal{P}}$ of $\mathcal{P}$ denote IEQ-macrostates
$\mathcal{M}_{\text{ieq}}$ belonging to $\mathfrak{S}_{\mathbf{Z}}$. In this
sense, the definition is a generalization of the definition of the \emph{lost
work} $\Delta W_{\text{lost}}$ in the \r{M}NEQT
\cite{Kestin,Woods,Prigogine,Bejan,Landau} in an irreversible process
$\overline{\mathcal{P}}$\ between EQ macrostates $\mathcal{A}_{\text{eq}}$ and
$\mathcal{B}_{\text{eq}}$ to any process $\mathcal{P}$ containing
$\mathcal{M}_{\text{ieq}}$ between $\mathcal{A}_{\text{ieq}}$ and
$\mathcal{B}_{\text{ieq}}$. The overbar in $\overline{\mathcal{P}}$\ is for EQ
macrostates $\mathcal{A}_{\text{eq}}$ and $\mathcal{B}_{\text{eq}}$. The lost
work is well known in the \r{M}NEQT; see for example, p. 12 in Woods
\cite{Woods} or Sect. 20 in Landau and Lifshitz \cite{Landau}. As
$\Delta_{\text{i}}S$ does not directly appear in the \r{M}NEQT, $\Delta
W_{\text{lost}}$, which is given by%
\begin{equation}
\Delta W_{\text{lost}}=\left(  \Delta_{\text{e}}W\right)  _{\text{rev}}%
-\Delta_{\text{e}}W, \label{Irrev-Work-Applied}%
\end{equation}
where $\left(  \Delta_{\text{e}}W\right)  _{\text{rev}}$ is the exchange work
during the reversible process $\overline{\mathcal{P}}_{\text{rev}}$ associated
with $\overline{\mathcal{P}}$, is used to determine $\Delta_{\text{i}}S$
indirectly as we now explain. We take $\widetilde{\Sigma}$ $=\widetilde
{\Sigma}^{\prime}\cup\widetilde{\Sigma}_{\text{w}}^{\prime\prime}$, where
$\widetilde{\Sigma}^{\prime}$ at constant $T_{0},P_{0}$ is thermally insulated
from another working medium $\widetilde{\Sigma}_{\text{w}}^{\prime\prime}$,
with $\Sigma_{0}=\Sigma\cup\widetilde{\Sigma}$. Let $\Delta_{\text{e}%
}Q^{\prime}$ and $\Delta_{\text{e}}W^{\prime}$ be the exchange macroquantities
from $\widetilde{\Sigma}^{\prime}$, which are well defined, and $\Delta
_{\text{e}}W^{\prime\prime}=-\Delta_{\text{e}}\widetilde{W}^{\prime\prime}$
the exchange macrowork from $\widetilde{\Sigma}_{\text{w}}^{\prime\prime}$. We
will closely follow Landau and Lifshitz \cite[where $\Delta_{\text{e}%
}\widetilde{W}^{\prime\prime}$ is denoted by $R$ and $d_{\text{e}}%
\widetilde{W}^{\prime\prime}$ by $dR$]{Landau}. We first consider an
infinitesimal process $\delta\overline{\mathcal{P}}$. In the \r{M}NEQT,%
\[
dE=d_{\text{e}}Q^{\prime}-d_{\text{e}}W^{\prime}-d_{\text{e}}W^{\prime\prime
},
\]
so that $d_{\text{e}}\widetilde{W}^{\prime\prime}=dE-T_{0}d_{\text{e}%
}S^{\prime}+P_{0}dV=dE-T_{0}dS+P_{0}dV+T_{0}d_{\text{i}}S$, where we have used
$dS=d_{\text{e}}S^{\prime}+d_{\text{i}}S$. We thus have%
\[
d_{\text{e}}\widetilde{W}^{\prime\prime}=dG-SdT_{0}+VdP_{0}+T_{0}d_{\text{i}%
}S,
\]
from which we obtain for $\overline{\mathcal{P}}$%
\[
\Delta_{\text{e}}\widetilde{W}^{\prime\prime}=\Delta G-%
{\textstyle\int\nolimits_{\overline{\mathcal{P}}}}
SdT_{0}+%
{\textstyle\int\nolimits_{\overline{\mathcal{P}}}}
VdP_{0}+%
{\textstyle\int\nolimits_{\overline{\mathcal{P}}}}
T_{0}d_{\text{i}}S,
\]
which is the generalization of the known result in \cite{Landau} to an
arbitrary process in $\mathfrak{S}_{\mathbf{X}}$. For fixed and constant
$T_{0}$ and $P_{0}$, the first two integrals vanish and the third integral
reduces to $T_{0}\Delta_{\text{i}}S$ over $\overline{\mathcal{P}}$. Thus, as
the minimum of $\Delta_{\text{e}}\widetilde{W}^{\prime\prime}$\ is given by
$\Delta G$, we derive the result in \cite{Landau}:%
\[
\Delta_{\text{e}}\widetilde{W}^{\prime\prime}-\Delta G=T_{0}\Delta_{\text{i}%
}S~\text{for fixed }T_{0},P_{0}.
\]
Using $\Delta_{\text{e}}W=-(\Delta_{\text{e}}\widetilde{W}^{\prime}%
+\Delta_{\text{e}}\widetilde{W}^{\prime\prime})$, and recognizing that
$\left(  \Delta_{\text{e}}W\right)  _{\text{min}}=\left(  \Delta_{\text{e}%
}W\right)  _{\text{rev}}=-\Delta F$, we have proved not only Eq.
(\ref{Irrev-Work-Applied}) but also%
\begin{subequations}
\begin{equation}
\Delta W_{\text{lost}}=-\Delta F-\Delta_{\text{e}}W.
\label{Irrev-Work-Applied-DelF}%
\end{equation}
We now show that we obtain the same result in the MNEQT, where we assume that
the temperature of $\Sigma$\ remains equal to $T_{0}$\ as assumed by Landau
and Lifshitz \cite{Landau}, use $dE=T_{0}dS-dW$, and recognize that
$d_{\text{e}}W=d_{\text{e}}W^{\prime}+d_{\text{e}}W^{\prime\prime}$. Comparing
it with the $dE$ in the \r{M}NEQT above, we immediately obtain%
\begin{equation}
\Delta_{\text{i}}W=\Delta W-\Delta_{\text{e}}W=T_{0}\Delta_{\text{i}}S\text{.}
\label{Irrev-Work-Applied-DelF-Del_i_S}%
\end{equation}
Thus, both theories give the same result in this simple example. But our
general expression for $d_{\text{i}}W$ or $\Delta_{\text{i}}W$\ is not
restricted to EQ terminal macrostates $\mathcal{A}_{\text{eq}}$ and
$\mathcal{B}_{\text{eq}}$ of $\overline{\mathcal{P}}$; they refer to any two
end macrostates $\mathcal{A}_{\text{ieq}}$ and $\mathcal{B}_{\text{ieq}}$ of
any arbitrary process $\mathcal{P}$. The procedure described by Landau and
Lifshitz \cite{Landau} or by Woods \cite{Woods}\ is not general enough to make
$\Delta W_{\text{lost}}$ useful in all cases.

We now turn to our approach and relate dissipation with the entropy generation
$d_{\text{i}}S$ for $\Sigma$ in Eq. (\ref{diS}). The strategy is simple. We
use Eq. (\ref{diQ-diW-EQ}) and express $d_{\text{i}}Q$ using Eq. (\ref{diQ}).
Let us use the top equation, which gives%
\end{subequations}
\begin{subequations}
\begin{equation}
Td_{\text{i}}S=\frac{(T_{0}-T)}{T_{0}}d_{\text{e}}Q+d_{\text{i}}W\geq0.
\label{Irreversible EntropyGeneration-Complete0}%
\end{equation}
For $\mathbf{W}=(V,\xi)$, it reduces to%
\begin{equation}
Td_{\text{i}}S=\frac{(T_{0}-T)}{T_{0}}d_{\text{e}}Q+(P-P_{0})dV+Ad\xi\geq0;
\label{Irreversible EntropyGeneration-Complete}%
\end{equation}
see, for example, Ref. \cite{Prigogine}. The first term in both equations is
due to macroheat exchange $d_{\text{e}}Q$ with $\widetilde{\Sigma}$ at
different temperatures, which is not considered part of dissipation as we have
defined above.

It is clear that the root cause of dissipation is a "\emph{force imbalance}"
$P(t)-P_{0},A(t)-A_{0}\equiv A(t)$, etc.
\cite{Kestin,Woods,Gujrati-Heat-Work0,Gujrati-Heat-Work,Gujrati-I,Gujrati-II,Gujrati-III,Gujrati-Entropy1,Gujrati-Entropy2,Gujrati-GeneralizedWork}
between the external and the internal forces performing macrowork, giving rise
to an internal macrowork $d_{\text{i}}W$ due to all kinds of force imbalances
in $\Delta\mathbf{F}^{\text{w}}$, which is not properly captured by
$d\widetilde{W}-dF$ in the \r{M}NEQT in all cases as discussed above. The
force imbalance are commonly known as \emph{thermodynamic forces} driving the
system towards equilibrium.

The irreversible macrowork is present even if there is no temperature
difference such as in an isothermal process as long as there exists some
nonzero thermodynamic force. The resulting irreversible entropy generation is
then given by $Td_{\text{i}}S=d_{\text{i}}W\geq0$; see Eq. (\ref{diS}). We
summarize this as a conclusion \cite{Prigogine}:
\end{subequations}
\begin{conclusion}
\label{Conclusion-ThermodynamicForce-Irreversibility}To have dissipation, it
is necessary and sufficient to have a nonzero thermodynamic force. In its
absence, there can be no dissipation.
\end{conclusion}

We now prove one of the central results in the MNEQT in the following theorem.

\begin{theorem}
\label{Th-TemperatureCriteria}The proportionality parameter $T$ in Eq.
(\ref{dQ-dS}) or (\ref{System_dQ_dS}) satisfies all the criteria (C1-C4) of a
sensible temperature. Therefore, we identify $T$ as the temperature of the
system in any arbitrary macrostate $\mathcal{M}_{\text{arb}}$.
\end{theorem}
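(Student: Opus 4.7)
The plan is to verify each of the four criteria C1--C4 in turn, leaning on results already established in Sections \ref{Sec-NEQ-S}--\ref{Sec-IrreversibleInequalities}, rather than attempting a single omnibus argument. The unifying observation is that the $T$ appearing in Eq.~(\ref{System_dQ_dS}) is, by Theorem~\ref{Theorem_1} together with the Gibbs fundamental relation (\ref{Gibbs-FR-E}), nothing but $(\partial E/\partial S)_{\mathbf{W}}$ whenever the macrostate is unique in $\mathfrak{S}_{\mathbf{Z}}$, so each criterion can be read off from a property of this partial derivative or of the entropy-generation inequalities that $T$ controls.

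For C1 I would first note intensivity: $T$ is the ratio of the extensive SI-quantities $dQ=dE_{\text{s}}$ and $dS$ (Eqs.~(\ref{dE_dQ}) and (\ref{dS-Gibbs})), hence intensive. The reduction to Eq.~(\ref{T-beta}) for $\mathcal{M}_{\text{eq}}$ and $\mathcal{M}_{\text{ieq}}$ follows because, by Theorem~\ref{Theorem-Exixtence-S}, $S(\mathbf{Z})$ is then a state function and the Gibbs relation~(\ref{Gibbs-FR-E}) identifies the coefficient of $dS$ with $T=\partial E/\partial S$; for an isolated system the same identification persists since Theorem~\ref{Theorem-diW-diS-Isolated} still writes $dQ=T\,d_{\text{i}}S$. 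For C2, I would point out that Eq.~(\ref{T-beta}) defines $T$ via $1/T=\partial S/\partial E$ without any a~priori sign constraint; for subsystems of dof with bounded spectra (e.g.~nuclear spins decoupled from the lattice, as in the Landau partial-dof picture invoked in Sect.~\ref{Sec-Digression-T}), $S(E)$ can decrease with $E$, so $T<0$ is automatically accommodated, and the identity $dQ=T\,dS$ remains formally intact.

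For C3 I would invoke the entropy-generation decomposition~(\ref{diS}) and in particular its consequence~(\ref{ClausiusStatement}), which is $(T_{0}-T)\,d_{\text{e}}S\ge 0$. This inequality, rearranged using $d_{\text{e}}Q=T_{0}d_{\text{e}}S$, shows that $d_{\text{e}}Q>0$ iff $T_{0}>T$: the exchanged macroheat flows from the hotter body ($\widetilde{\Sigma}$ at $T_{0}$) to the colder one ($\Sigma$ at $T$), realizing Clausius' statement. The negative-temperature extension requested in C2 is consistent with C3 because one checks that $(T_{0}-T)d_{\text{e}}S\ge 0$ still implies heat flows from a body at $T<0$ to one at $T>0$ when the standard hot/cold ordering $-0^{-}>+\infty>\cdots>0^{+}$ of absolute temperatures is used; I would verify this ordering explicitly in one line from Eq.~(\ref{diS}).

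The main obstacle, and the point needing most care, is C4: that this single $T$ is genuinely global even when $\Sigma$ is inhomogeneous. My plan is to use the uniqueness argument of Sect.~\ref{Sec-Unique-S-T}. Partitioning $\Sigma$ into quasi-independent EQ (or IEQ) subsystems $\{\Sigma_{i}\}$, Eqs.~(\ref{S-Additivity-1})--(\ref{S-Additivity-3}) give a single, unique $S_{\text{ieq}}(\mathbf{Z}(t))$ on the whole system, and hence a single $T=(\partial E/\partial S)_{\mathbf{W}}$ defined on $\Sigma$ as a black box $\Sigma_{\text{B}}$, regardless of the local temperatures $T_{i}$ of the $\Sigma_{i}$. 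This $T$ is therefore a property of $\Sigma$ as a whole, not of any volume element, and by Theorem~\ref{Theorem-Exixtence-S} it exists and is uniquely determined. With C1--C4 verified, the identification of $T$ as the thermodynamic temperature of an arbitrary $\mathcal{M}_{\text{arb}}$ follows, which is what the theorem asserts.
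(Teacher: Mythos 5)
Your proposal is correct and follows the same criterion-by-criterion strategy as the paper: intensivity plus the Gibbs-relation identification for C1, the absence of sign constraints for C2, the decomposition of $d_{\text{i}}S$ in Eq.~(\ref{diS}) for C3, and globality for C4. Two points differ in substance, both to your credit. First, for C3 the paper works with $d_{\text{i}}S^{\text{Q}}=(1/T-1/T_{0})d_{\text{e}}Q\geq0$ and only treats the case of positive temperatures explicitly, whereas you also verify the negative-temperature half of C3 (heat flowing from $T<0$ to $T>0$), which the paper's proof silently omits even though C3 demands it; your one-line check via the sign of $1/T-1/T_{0}$ does close that gap. Second, for C4 the paper's proof is essentially an assertion --- ``as $dQ$ and $dS$ are global quantities, $T$ is a global parameter'' --- with the real evidence deferred to the worked examples of Sect.~\ref{Sec-Applications} ($\Sigma_{\text{B}}$ versus $\Sigma_{\text{C}}$); you instead derive globality from the uniqueness and additivity construction of Sect.~\ref{Sec-Unique-S-T} and Theorem~\ref{Theorem-Exixtence-S}, which makes C4 a consequence of an already-proved theorem rather than of illustrative computations. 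The trade-off is that your C4 argument, like Theorem~\ref{Theorem-Exixtence-S} itself, is really established for $\mathcal{M}_{\text{ieq}}$ and must be carried over to $\mathcal{M}_{\text{nieq}}$ via the enlarged space $\mathfrak{S}_{\mathbf{Z}^{\prime}}$ (Eq.~(\ref{beta_arb})) to cover the full claim for $\mathcal{M}_{\text{arb}}$; you should state that extension explicitly, as you already do for C1.
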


\begin{proof}
As $dQ$ and $dS$ scale the same way with the size of $\Sigma$, $T$ is an
intensive quantity. When the entropy is a state function in $\mathfrak{S}%
_{\mathbf{Z}}$ or $\mathfrak{S}_{\mathbf{Z}^{\prime}}\supset\mathfrak{S}%
_{\mathbf{Z}}$, we have a Gibbs fundamental relation given in Eq.
(\ref{Gibbs-FR-S}). So the temperature is defined by a derivative in
$\mathfrak{S}_{\mathbf{Z}}$ or $\mathfrak{S}_{\mathbf{Z}^{\prime}}$, the
latter giving $T_{\text{arb}}$. This shows that C1 is satisfied for any
$\mathcal{M}_{\text{arb}}$. As we have not imposed any restrictions on the
signs of $dQ(t)$ and $dS(t)$, the parameter $T(t)$ can be of any sign, which
shows that C2 is satisfied. To demonstrate consistency with the second law, we
rewrite the top equation in Eq. (\ref{diS}) to express $d_{\text{i}}S$ as a
sum of two independent contributions
\begin{align}
d_{\text{i}}S^{\text{Q}}(t)  &  \doteq(1/T-1/T_{0})d_{\text{e}}%
Q(t),\label{diS-Qe}\\
d_{\text{i}}S^{\text{W}}(t)  &  \doteq d_{\text{i}}W(t)/T\equiv d_{\text{i}%
}Q(t)/T,\nonumber
\end{align}
so that
\begin{equation}
d_{\text{i}}S(t)=d_{\text{i}}S^{\text{Q}}(t)+d_{\text{i}}S^{\text{W}}(t);
\label{IrreversibleEntropy}%
\end{equation}
here, $d_{\text{i}}S^{\text{Q}}(t)$ is generated solely by exchange macroheat
$d_{\text{e}}Q(t)$ at different temperatures, and $d_{\text{i}}S^{\text{W}%
}(t)$ by the irreversible macrowork or macroheat $d_{\text{i}}W(t)\equiv
d_{\text{i}}Q(t)$. The two contributions are \emph{independent} of each other.
Accordingly, both contributions individually must be nonnegative in accordance
with the second law. In particular, the inequality%
\begin{equation}
d_{\text{i}}S^{\text{Q}}(t)\geq0. \label{HeatFlowDirection}%
\end{equation}
For $T(t)>T_{0}$, $d_{\text{e}}Q(t)<0$ so that the macroheat flows from the
system to the medium. For $T(t)<T_{0}$, $d_{\text{e}}Q(t)>0$ so that the
macroheat flows from the medium to the system. This establishes that C3 is
satisfied. As $dQ(t)$ and $dS(t)$ are global quantities, the parameter $T(t)$
is also a global parameter, which means that C4 is also satisfied. This proves
the theorem.
\end{proof}

Because of the importance of C4, we give many example in Sect.
\ref{Sec-Applications} to justify that $T$ acts as a global temperature of the
system even if it is composite with different temperatures. These examples
leave no doubt that C4 is satisfied.

We conclude this subsection by considering a special case, also studied by
Landau and Lifshitz \cite[Sect. 13 and specifically Eq. (13.4)]{Landau}: it
deals with the irreversibility generated \emph{only by macroheat exchange}
\emph{at different temperatures but no internal (macrowork)\ dissipation}. It
follows from Eq. (\ref{IrreversibleEntropy}) that $d_{\text{i}}W=0$ in this
case, even though there is irreversibility ($d_{\text{i}}S(t)>0$) due to the
macroheat exchange. If there are internal variables also, then $\mathbf{A=A}%
_{0}=0$ to ensure $d_{\text{i}}W=0$. This example is important in that it
shows that just because there is irreversibility in the system, we do not have
$d_{\text{i}}W(t)\equiv d_{\text{i}}Q(t)\neq0$. We see from Eq.
(\ref{IrreversibleEntropy}) that $d_{\text{i}}S(t)=d_{\text{i}}S^{\text{Q}%
}(t)$, which can be rewritten, using $d_{\text{e}}Q(t)=T_{0}d_{\text{e}}S(t)$,
as
\begin{equation}
d_{\text{e}}Q(t)=T(t)dS(t)=T_{0}d_{\text{e}}S(t), \label{deQ-dS}%
\end{equation}
a result also derived by Landau and Lifshitz; note that they use $dQ$ for
$d_{\text{e}}Q$. For $d_{\text{i}}Q=0$, we have $dQ(t)=d_{\text{e}}Q(t)$ so
that Eq. (\ref{deQ-dS}) is consistent with Eq. (\ref{dQ-dS}), as it must. In
the presence of nonzero $d_{\text{i}}Q$, Eq. (\ref{deQ-dS}) gets
modified:\ one must subtract $d_{\text{i}}Q$ from the right side.

\subsection{Cyclic Process \label{Sec-CyclicProcess}}

For a general body that is not isolated, the concept of its internal
equilibrium state plays a very important role in that the body can come back
to this macrostate $\mathcal{M}_{\text{ieq}}$ several times in a
nonequilibrium process. In a cyclic nonequilibrium process, such a macrostate
can repeat itself in time after some cycle time $\tau_{\text{c}}$ so that all
state variables and functions including the entropy repeat themselves:%
\[
\mathbf{Z}(t+\tau_{\text{c}})=\mathbf{Z}(t),~\mathcal{M}(t+\tau_{\text{c}%
})=\mathcal{M}(t),~S(t+\tau_{\text{c}})=S(t).
\]
This ensures
\[
\Delta_{\text{c}}S\equiv S(t+\tau_{\text{c}})-S(t)=0
\]
in a cyclic process. All that is required for the cyclic process to occur is
that the body must start and end in the same internal equilibrium state;
however, during the remainder of the cycle, the body need not be in internal equilibrium.\ 

The same argument also applies to a cyclic process that starts and returns to
$\mathcal{M}_{\text{eq}}$ after some cycle time $\tau_{\text{c}}$. However,
the body need not be in EQ macrostates during the rest of the cycle. We will
consider such a case when we consider a NEQ\ Carnot cycle in Sec.
\ref{Sec-CarnotCycle}.

\subsection{Steady State\label{Sec-SteadyState}}

Consider a system between two different media as shown in Fig.
\ref{Fig-Sys-TwoSources}. For example, we can consider $\Sigma$ between two
heat baths $\widetilde{\Sigma}_{\text{h}}$ and $\widetilde{\Sigma}_{\text{h}}$
replacing the two media in the figure. We will study this example in the MNEQT
in Sec. \ref{Sec-Composite System} using a composite system $\Sigma$ between
the two heat sources. In the presence of two media, it is possible for
$\Sigma$ to reach a steady state, in which it satisfies conditions similar to
that for a cyclic process above in terms of MI-macroquantities:%
\begin{subequations}
\begin{equation}
d\mathbf{Z}=0,~dS=0, \label{SteadyState-Conditions}%
\end{equation}
where the changes are over the system $\Sigma$. The above conditions in the
MNEQT lead to important relations between exchange and irreversible
macroquantities:%
\begin{equation}
d_{\text{i}}\mathbf{Z}=-d_{\text{e}}\mathbf{Z},~d_{\text{i}}S=-d_{\text{e}}S;
\label{SteadyState-Relations}%
\end{equation}
as usual, the irreversible contributions satisfy the second law inequalities.
For $E$, we have from Eq. (\ref{FirstLaw-SI})%
\end{subequations}
\[
dQ=dW=0,
\]
which follows from $dQ=TdS=0$ for $\mathcal{M}_{\text{ieq}}$ in $\mathfrak{S}%
_{\mathbf{Z}}$ or for $\mathcal{M}_{\text{nieq}}$ in $\mathfrak{S}%
_{\mathbf{Z}^{\prime}}$. Therefore, in the MNEQT,%
\[
d_{\text{i}}Q=-d_{\text{e}}Q\geq0,d_{\text{i}}W=-d_{\text{e}}W\geq0.
\]
As a consequence,
\[
d_{\text{e}}Q=d_{\text{e}}W\leq0,
\]
a result that cannot be derived in the \r{M}NEQT by using the first law in Eq.
(\ref{FirstLaw-MI}).

It should be noted, as said earlier in Sect. \ref{Sec-Introduction}, that the
steady state occurs only over a short period $\tau\sim\tau_{\text{st}}$
compared to the time $\tau_{\text{EQ}}$ required for the two media to
equilibrate with each other. The latter time period is extremely large
compared to $\tau_{\text{st}}$ because of their extreme sizes. For a time
period longer than $\tau_{\text{st}}$, the steady state cannot be treated as
steady as $\Sigma$ will begin the equilibrium process between them so that
eventually at $\tau_{\text{EQ}}$, $d_{\text{i}}S$\ will vanish as
$d_{\text{e}}Q\rightarrow0$. We will not consider this possibility here, but
can be studied in the MNEQT.

\subsection{Intrinsic Adiabaticity Theorem}

We now have a clear statement of the generalization of the adiabatic theorem
\cite{Landau} for nonequilibrium processes going on in a body in an arbitrary
macrostate in terms of the intrinsic quantity $dS$. We will call it the
\emph{intrinsic adiabatic theorem}.

\begin{definition}
\label{Intrinsic Adiabatic Process}Intrinsic Adiabatic Process: \emph{An
intrinsic adiabatic process is an isentropic process (}$dS(t)=0$\emph{ and not
necessarily }$d_{\text{e}}S(t)=0$\emph{). }
\end{definition}

Such a process also includes the stationary limit, i.e. the steady macrostate
of a non-equilibrium process discussed in the previous section. However, the
extension goes beyond the conventional notion of an adiabatic process commonly
dealt with in the \r{M}NEQT, according to which an adiabatic process
\cite{Landau} is one for which $d_{\text{e}}Q(t)=0$, which is equivalent to
$d_{\text{e}}S(t)=0$. If $d_{\text{i}}S(t)=0$, it also represents a reversible
process in a thermally isolated system so that $d_{\text{e}}Q(t)=0$. One can
also have $dS(t)=0$ in an irreversible process during which%
\begin{equation}
d_{\text{i}}S(t)=-d_{\text{e}}S(t)>0; \label{Stationary_S}%
\end{equation}
as usual, Eq. (\ref{diQ-diW-EQ}) always remains satisfied. If the system is in
a $\mathcal{M}_{\text{arb}}$, then we must also have%
\begin{equation}
d_{\text{i}}Q(t)=-d_{\text{e}}Q(t)=T_{0}d_{\text{i}}S(t)>0;
\label{Stationary _Q}%
\end{equation}
recall Eq. (\ref{System_dQ_dS}) for a $\mathcal{M}_{\text{arb}}$.

\begin{theorem}
\label{Adiabatic Theorem} In an intrinsic adiabatic process, the sets of
microstates and of their probabilities $p_{k}$ do not change, but
$d_{\text{e}}p_{k}=-d_{\text{i}}p_{k}\neq0$ for all $k$.
\end{theorem}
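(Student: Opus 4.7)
The plan is to unpack the intrinsic adiabatic condition $dS(t)=0$ through the statistical decomposition of Sect. \ref{Sec-HamiltonianTrajectories} together with the micropartition of Definition \ref{Def-Micropartition}, and then read off both assertions of the theorem.

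First I would establish that the set $\{\mathfrak{m}_{k}\}$ is invariant. Under Hamiltonian evolution the labels $k$ are preserved: quantum numbers are good in the quantum case, and in the classical case the phase-space cells $\delta\mathbf{x}_{k}$ flow into themselves along characteristics by Eq. (\ref{HamiltonianChange}), with their probability content conserved by a Liouville-type argument. The only parameters that can change during a process are the microstate energies $E_{k}(\mathbf{W})$ and the probabilities $p_{k}$, never the catalogue of microstates itself.

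Next I would argue that $dp_{k}=0$ for every $k$. By Theorem \ref{Theorem_Heat_Work} the energy change splits uniquely into an isentropic/mechanical piece $dE_{\text{m}}=-dW$ evaluated at fixed $\{p_{k}\}$ and a stochastic piece $dE_{\text{s}}=dQ$ entirely encoded in $\{dp_{k}\}$ via Eq. (\ref{dE_dQ}); the same $\{dp_{k}\}$ generate $dS$ via Eq. (\ref{dS-Gibbs}). The Clausius equality $dQ=TdS$ of Eq. (\ref{dQ-dS}) then identifies the two contributions: $dS=0$ is equivalent to the vanishing of the stochastic component of the process. Since in the IEQ framework the index of probability $\eta_{k}$ is a strictly monotonic function of $E_{k}$ (see Remark \ref{Remark-FlatDistribution} and the microstate probabilities of Sect.~\ref{Sec-MicrostateProbabilities}), the linear functional $dS=-\sum_{k}\eta_{k}dp_{k}$ has no nontrivial zeros on the constraint $\sum_{k}dp_{k}=0$ compatible with $\mathcal{M}_{\text{ieq}}$; hence $dS=0$ forces $dp_{k}=0$ for each $k$ separately.

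Combining $dp_{k}=0$ with the micropartition Eq. (\ref{dpk-Partition}) immediately gives $d_{\text{e}}p_{k}=-d_{\text{i}}p_{k}$ for every $k$. To justify the strict $\neq 0$ clause, I would appeal to the steady-state analysis of Sect. \ref{Sec-SteadyState}: a driven steady state is an intrinsic adiabatic process with $dS=0$ but with $d_{\text{i}}S>0$ compensated by $d_{\text{e}}S<0$, as recorded in Eq. (\ref{Stationary_S}) and Eq. (\ref{Stationary _Q}). This macroscopic imbalance cannot be generated unless the microscopic exchange contributions $d_{\text{e}}p_{k}$ and the relaxational contributions $d_{\text{i}}p_{k}$ are themselves nonzero, so their cancellation is nontrivial.

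The step I expect to be the main obstacle is the middle one: upgrading the scalar identity $dS=0$ to the pointwise statement $dp_{k}=0$ for every $k$, rather than merely the two weighted sums $\sum_{k}\eta_{k}dp_{k}=0$ and $\sum_{k}E_{k}dp_{k}=0$. The clean resolution is to invoke the one-to-one correspondence between the stochastic component of any macrovariable change and the set $\{dp_{k}\}$ that is hard-wired into Theorems \ref{Theorem_1} and \ref{Theorem_Heat_Work}; entropy-preserving reshufflings among microstates of equal $\eta_{k}$ are ruled out in IEQ because distinct $E_{k}$ carry distinct $\eta_{k}$, and arbitrary (non-IEQ) reshufflings would move the macrostate outside $\mathfrak{S}_{\mathbf{Z}}$, violating the premise that we are following $\mathcal{M}_{\text{ieq}}(t)$.
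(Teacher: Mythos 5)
Your proposal follows essentially the same route as the paper's proof: translate $dS=0$ and $dQ=0$ into statements about $\{dp_{k}\}$ via Eqs. (\ref{dS-Gibbs}) and (\ref{dE_dQ}), conclude $dp_{k}=0$ for all $k$ because the remaining energy change is pure macrowork carried out at fixed $\{p_{k}\}$, and then use $d_{\text{i}}S>0$ in an irreversible process to infer $d_{\text{i}}p_{k}\neq0$ and hence $d_{\text{e}}p_{k}=-d_{\text{i}}p_{k}\neq0$ from Eq. (\ref{dpk-Partition}).

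The one place you depart from the paper is your attempted repair of the middle step, and that repair does not hold as stated. The conditions $\sum_{k}\eta_{k}\,dp_{k}=0$, $\sum_{k}E_{k}\,dp_{k}=0$ and $\sum_{k}dp_{k}=0$ collapse to only two independent linear constraints precisely when $\eta_{k}$ is a monotonic (indeed affine) function of $E_{k}$, so for more than two microstates they admit many nontrivial solutions; monotonicity makes the constraint system \emph{more} degenerate, not less, so "no nontrivial zeros" is false as a linear-algebra claim. The paper does not attempt such an argument: it simply asserts that since only macrowork contributes to $dE$ and macrowork by construction proceeds at fixed $\{p_{k}\}$ (the mechanical/stochastic split of Theorem \ref{Theorem_Heat_Work}), one must have $dp_{k}=0$ pointwise. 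In other words, the pointwise vanishing is built into the framework's definition of an isentropic (purely mechanical) change rather than derived from the two vanishing moments. Your instinct that this is the weak point of the argument is correct; your proposed fix is not the right one, but your overall structure and conclusion coincide with the paper's.
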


\begin{proof}
In terms $d_{\text{e}}p_{k}$ and $d_{\text{i}}p_{k}$, Eqs. (\ref{Stationary_S}%
) and (\ref{Stationary _Q}) become
\begin{subequations}
\label{Adiabatic_entropy_heat}%
\begin{align}
\sum_{k}\eta_{k}d_{\text{i}}p_{k}  &  =-\sum_{k}\eta_{k}d_{\text{e}}%
p_{k},\label{adiabatic_entropy}\\
\sum_{k}E_{k}d_{\text{i}}p_{k}  &  =-\sum_{k}E_{k}d_{\text{e}}p_{k}.
\label{adiabatic_heat}%
\end{align}
Recognizing that there is only macrowork in $dE$, which requires $p_{k}$ not
to change, we conclude that
\end{subequations}
\[
dp_{k}=0\text{ \ for }\forall k
\]
in an adiabatic process. As $d_{\text{i}}S(t)$ does not vanish in an
irreversible process, $d_{\text{i}}p_{k}(t)$ cannot vanish. Accordingly,
$d_{\text{e}}p_{k}=-d_{\text{i}}p_{k}\neq0,\forall k$ for an irreversible
adiabatic process. The conditions in Eq. (\ref{p_alpha-sum}) remain valid as
expected. As $p_{k}$'s do not change, no microstate can appear or disappear.
This proves the theorem.
\end{proof}

\section{Clausius Equality \label{Sec_Clausius_Equality}}

We recall that Eq. (\ref{dQ-dS}), which we call the \emph{Clausius equality},
follows from the Gibbs fundamental equation for $\mathcal{M}_{\text{ieq}}$ in
$\mathfrak{S}_{\mathbf{Z}}$ or $\mathcal{M}_{\text{nieq}}$ in $\mathfrak{S}%
_{\mathbf{Z}^{\prime}}$, see Eq. (\ref{Gibbs-FR-E}). It is merely is a
consequence of the state function $S$ for a $\mathcal{M}_{\text{ieq}}$ or
$\mathcal{M}_{\text{nieq}}$ in respective state spaces so the equality is also
valid for any\ $\mathcal{M}_{\text{arb}}$, see Eq. (\ref{System_dQ_dS}). Here,
we are only concerned with some $\mathcal{M}_{\text{ieq}}$. The equality is
very interesting, and should be contrasted with the \emph{Clausius inequality}%
\begin{equation}
d_{\text{e}}Q\leq T_{0}dS. \label{ClausiusInequality}%
\end{equation}
First, it follows from Eq. (\ref{dQ-dS}) that $dQ/T$ is nothing but the
\emph{exact differential }$dS$ for $\mathcal{M}_{\text{ieq}}$\ so that
\begin{equation}
\oint dQ(t)/T(t)\equiv0 \label{Clausius_Equality_0}%
\end{equation}
for$\ $any cyclic process; here we have added the time argument for clarity.
It is only because of the use of $dQ(t)$ in place of $d_{\text{e}}Q(t)$\ that
the Clausius inequality has become an equality. The equality should not be
interpreted as the absence of irreversibility ($\Delta_{\text{i}}S>0$) as is
clear from Eq. (\ref{Irreversible_entropy_Cycle}) obtained by using
$d_{\text{i}}S(t)\equiv dS(t)-d_{\text{e}}S(t)$for a cyclic process taking
time $\tau$:
\begin{equation}
N(t,\tau)\equiv%
{\textstyle\oint}
d_{\text{i}}S(t)=-%
{\textstyle\oint}
d_{\text{e}}Q(t)/T_{0}\geq0, \label{Irreversible_entropy_Cycle}%
\end{equation}
which is the second law for a cyclic process, and represents the irreversible
entropy generated in a cycle. This is the original Clausius inequality. The
quantity $N(t)$ is the \emph{uncompensated transformation} of Clausius
\cite{Prigogine} that is directly related to $d_{\text{i}}S(t)$ \cite{Eu}; in
contrast, $N_{0}(t,\tau)$
\begin{equation}
N_{0}(t,\tau)\equiv%
{\textstyle\oint}
d_{\text{i}}Q(t)/T(t)\equiv%
{\textstyle\oint}
d_{\text{i}}W(t)/T(t)\geq0, \label{Irreversible_entropy_Cycle-0}%
\end{equation}
where we have used the fundamental identity in Eq. (\ref{diQ-diW-EQ}), is
determined by the irreversible\emph{ }macroheat $d_{\text{i}}Q(t)$ or the
irreversible\emph{ }macrowork $d_{\text{i}}W(t)$, and represents a different
quantity as is evident. In terms of the two macroheats, we have
\begin{equation}%
{\textstyle\oint}
d_{\text{i}}Q(t)/T(t)=-%
{\textstyle\oint}
d_{\text{e}}Q(t)/T(t)\geq0, \label{Clausius_Inequality}%
\end{equation}
which results in a new Clausius inequality\emph{ }%
\begin{equation}%
{\textstyle\oint}
d_{\text{e}}Q(t)/T(t)\leq0; \label{Clausius_Inequality_0}%
\end{equation}
compare with the original Clausius inequality in Eq.
(\ref{Irreversible_entropy_Cycle}).

\section{Extended State Space and $\mathcal{M}_{\text{nieq}}$}

\subsection{Choice of $\mathbf{Z}$\label{Sec-Choice-S_Z}}

We first discuss how to choose a particular state space\ for a unique
description of a macrostate $\mathcal{M}$ depending on the experimental setup.
To understand the procedure for this, we begin by considering a set
$\boldsymbol{\xi}_{n}$ of internal variables\emph{ }$\left(  \xi_{1},\xi
_{2},\cdots,\xi_{n}\right)  $ and $\mathbf{Z}_{n}\doteq\mathbf{X}%
\cup\boldsymbol{\xi}_{n}$ to form a sequence of state spaces $\mathfrak{S}%
_{\mathbf{Z}}^{(n)}$. In general, one may need many internal variables, with
the value of $n$ increasing as $\mathcal{M}$ is more and more out of EQ
\cite{Gujrati-Hierarchy} relative to $\mathcal{M}_{\text{eq}}$. We will take
$n^{\ast}$ to be the maximum $n$\ in this study, even though $n$ $<<n^{\ast}$
needed for $\mathfrak{S}_{\mathbf{Z}}^{(n)}$ will usually be a small number in
most cases. We refer to Sect. \ref{Sec-Unique-S-T}, where the choice of
$n^{\ast}$ is determined by the mathematical identity in Eq.
(\ref{S-Additivity-2}) in $\mathfrak{S}_{\mathbf{Z}}^{(n^{\ast})}$. The two
most important but distinct time scales are $\tau_{\text{obs}}$, the time to
make observations, and $\tau_{\text{eq}}$, the equilibration time for a
macrostate $\mathcal{M}$ to turn into $\mathcal{M}_{\text{eq}}$. For
$\tau_{\text{obs}}<\tau_{\text{eq}}$, the system will be in a NEQ macrostate.
Let $\tau_{i}\ $denote the relaxation time of $\xi_{i}$ needed to come to its
equilibrium value so that its affinity $A_{i}\rightarrow0$
\cite{deGroot,Prigogine,Gujrati-Hierarchy,Prigogine0,Gutzow,Nemilov}. For
convenience, we order $\xi_{i}$ so that
\[
\tau_{1}>\tau_{2}>\cdots;
\]
we assume distinct $\tau_{i}$'s for simplicity without affecting our
conclusions. For $\tau_{1}<\tau_{\text{obs}}$, all internal variables have
equilibrated so they play no role in equilibration except thermodynamic forces
$T-T_{0},P-P_{0}$, etc. associated with $\mathbf{X}$ that still drive the
system towards EQ. We choose $n$ satisfying $\tau_{n}>\tau_{\text{obs}}%
>\tau_{n+1}$ so that all of $\xi_{1},\xi_{2},\cdots,\xi_{n}$ have not
equilibrated (their affinities are nonzero). They play an important role in
the NEQT, while $\xi_{n+1},\xi_{n+2},\cdots$ need not be considered as they
have all equilibrated. This specify $\mathcal{M}$ \emph{uniquely} in
$\mathfrak{S}_{\mathbf{Z}}^{(n)}$, which was earlier identified as in IEQ.

Note that NEQ macrostates with $\tau_{n+1}>\tau_{\text{obs}}>\tau_{n+2}$ are
not uniquely identifiable in $\mathfrak{S}_{\mathbf{Z}}^{(n)}$, even though
they are uniquely identifiable in $\mathfrak{S}_{\mathbf{Z}}^{(n+1)}$. Thus,
there are many NEQ macrostates that are not unique in $\mathfrak{S}%
_{\mathbf{Z}}^{(n)}$. The unique macrostates $\mathcal{M}_{\text{ieq}}$\ are
special in that its Gibbs entropy $S(\mathbf{Z}_{n})$\ is a state function of
$\mathbf{Z}_{n}$ in $\mathfrak{S}_{\mathbf{Z}}^{(n)}$. Thus, given
$\tau_{\text{obs}}$, we look for the window $\tau_{n}>\tau_{\text{obs}}%
>\tau_{n+1}$ to choose the particular value of $n$. This then determines
$\mathfrak{S}_{\mathbf{Z}}^{(n)}$ in which the macrostates are in IEQ. From
now onward, we assume that $n$ has been found and $\mathfrak{S}_{\mathbf{Z}%
}^{(n)}$ has been identified. We now suppress $n$ and simply use
$\mathfrak{S}_{\mathbf{Z}}$ below.

\begin{remark}
\label{Remark-Small number-Subsystems} The linear sizes of various subsystems
introduced in Sect. \ref{Sec-Unique-S-T} must be larger than the correlation
length $\lambda_{\text{corr}}$ as discussed elsewhere \cite{Gujrati-II} for
the first time,\ and briefly revisited in Sect. \ref{Sec-Notation} to ensure
entropy additivity; see also Sect. \ref{Sec-Conclusions}. Therefore, it is
usually sufficient to take the linear size of $\Sigma$ to be a small multiple
(for example, $10$ to $20$) of the correlation length to obtain a proper
thermodynamics, which is extensive. This means that we will usually need a
theoretically manageable but small number of internal variables $n$.
\end{remark}

\subsection{Microstate probabilities for $\mathcal{M}_{\text{ieq}}%
$\label{Sec-MicrostateProbabilities}}

As $\mathcal{M}_{\text{ieq}}$ is unique in $\mathfrak{S}_{\mathbf{Z}}$, we
need to identify the unique set $\left\{  p_{k}\right\}  $. If we keep
$\mathbf{W}$ fixed in $\mathcal{M}_{\text{ieq}}$ as the parameter, then
$\mathbf{F}_{\text{w}k}$ are fluctuating microforces in $\mathfrak{S}%
_{\mathbf{Z}}$ as we have seen in Sect. (\ref{Sec_Stat_Concepts}). In
additions, we have microstate energies $E_{k}$ also fluctuating. We need to
maximize the entropy $S(\mathbf{Z})$ at fixed
\[
E=\sum_{k}E_{k}p_{k},\mathbf{F}_{\text{w}}=\sum_{k}\mathbf{F}_{\text{w}k}p_{k}%
\]
by varying $p_{k}$ without changing $\left\{  \mathfrak{m}_{k}\right\}  $ ,
i.e. $E_{k}$ and $\mathbf{F}_{\text{w}k}$. This variation has nothing to do
with $dp_{k}$\ in a physical process. Using the Lagrange multiplier technique,
it is easy to show that the condition for this in terms of three Lagrange
multipliers with obvious definitions is
\begin{equation}
\eta_{k}=\lambda_{1}+\lambda_{2}E_{k}+\boldsymbol{\lambda}_{3}\cdot
\mathbf{F}_{\text{w}k}, \label{index_i}%
\end{equation}
from which follows the statistical entropy $\mathcal{S}=-(\lambda_{1}%
+\lambda_{2}E+\boldsymbol{\lambda}_{3}\cdot\mathbf{F}_{\text{w}})$; we have
reverted back to the original symbol for the statistical entropy here. It is
now easy to identify $\lambda_{2}=-\beta,\boldsymbol{\lambda}_{3}%
=-\beta\mathbf{W}$ by comparing $d\mathcal{S}$ with $dS$ in Eq.
(\ref{Gibbs-FR-S}) by varying $E$ and $\mathbf{W}$ so we finally have
\begin{equation}
p_{k}=\exp[\beta(\widehat{G}-E_{k}-\mathbf{W}\cdot\mathbf{F}_{\text{w}k})],
\label{microstate probability}%
\end{equation}
where $\lambda_{1}=\beta\widehat{G}$\ with $\widehat{G}(t)$ is a normalization
constant and defines a NEQ partition function
\begin{subequations}
\begin{equation}
\exp(-\beta\widehat{G})\equiv\sum_{k}\exp[-\beta(E_{k}+\mathbf{W}%
\cdot\mathbf{F}_{\text{w}k})]. \label{NEQ-PF}%
\end{equation}
It is easy to verify that
\begin{equation}
\widehat{G}(T,\mathbf{W})=E+\mathbf{W}\cdot\mathbf{F}_{\text{w}}-TS,
\label{NEQ-Potential}%
\end{equation}
so that if we neglect the fluctuations $E_{k}-E$ and $\mathbf{F}_{\text{w}%
k}-\mathbf{F}_{\text{w}}$ and replace $E_{k}$ by $E$ and $\mathbf{F}%
_{\text{w}k}$ by $\mathbf{F}_{\text{w}}$, then $p_{k}$ reduces to the flat
distribution $p_{k}=1/W(E,\mathbf{W})=\exp[\beta(\widehat{G}-E-\mathbf{W}%
\cdot\mathbf{F}_{\text{w}})]=\exp(-S)$ in Remark \ref{Remark-FlatDistribution}%
, which can be identified as the microstate probability in the NEQ
microcanonical ensemble. It should be stressed that this is consistent with
the well-known fact that thermodynamics does not describe fluctuations; the
latter require using statistical mechanics \cite{Landau}.

It should be remarked that the Lagrange multipliers in $p_{k}$ are determined
by comparing the resulting entropy to match exactly the Gibbs fundamental
relation, a thermodynamic relation. This then proves that $\mathcal{S}$ is the
same as the thermodynamic entropy $S$ up to a constant \cite{Gujrati-Entropy2}%
, which can be fixed by appeals to the third law, according to which $S$
vanishes at absolute zero. We avoid considering here the issue of a residual
entropy, which is discussed elsewhere
\cite{Gujrati-ResidualEntropy,Gujrati-Hierarchy}. The $p_{k}$ above clearly
shows the effect of irreversibility and is very different from its equilibrium
analog $p_{k}^{\text{eq}}$
\end{subequations}
\[
p_{k}^{\text{eq}}=\exp[\beta_{0}(\widehat{G}(T_{0},\mathbf{w})-E_{k}%
-\mathbf{w}\cdot\mathbf{f}_{\text{w}k})],
\]
see Eq. (\ref{GeneralizedForce-W}), obtained by replacing $\mathbf{W}$ by
$\mathbf{w}$, $\mathbf{F}_{\text{w}k}$ by $\mathbf{f}_{\text{w}0k}$,\ and
$\beta$ by $\beta_{0}$. The fluctuating $E_{k},\mathbf{f}_{\text{w}k}$ satisfy%
\[
E=\sum_{k}E_{k}p_{k}^{\text{eq}},\mathbf{f}_{\text{w}0}=\sum_{k}%
\mathbf{f}_{\text{w}0k}p_{k}^{\text{eq}}.
\]

The observation time $\tau_{\text{obs}}$ is determined by the way $T$ and
$\mathbf{W}$ are changed during a process. Thus, during each change,
$\tau_{\text{obs}}$ must be compared with the time needed for $\Sigma$ to come
to the next IEQ macrostate, and for the microstate probabilities to be given
by Eq. (\ref{microstate probability}) with the new values of $T$ and
$\mathbf{W}$.

\subsection{$\mathcal{M}_{\text{nieq}}$ in $\mathfrak{S}_{\mathbf{Z}}$
\label{Sec-M_nieq}}

We now focus on a non-unique macrostate $\mathcal{M}_{\text{nieq}}$ in
$\mathfrak{S}_{\mathbf{Z}}$. This will be needed if $\tau_{\text{obs}}$ is
reduced to make the process faster so that instead of falling in the window
($\tau_{n},\tau_{n+1}$), it now falls in a higher window such as ($\tau
_{n+1},\tau_{n+2}$)$.$ As said above, $\mathcal{M}$ can now be treated as a
unique macrostate in a larger state space $\mathfrak{S}_{\mathbf{Z}^{\prime}%
}\supset\mathfrak{S}_{\mathbf{Z}}$. Let $\boldsymbol{\xi}^{\prime}(t)$ denote
the set of additional internal variables needed over $\mathfrak{S}%
_{\mathbf{Z}}$ so that
\[
\mathbf{Z}^{\prime}(t)=(\mathbf{Z}(t),\boldsymbol{\xi}^{\prime}(t)).
\]
The entropy $S(\mathbf{Z}^{\prime}(t))=S(\mathbf{Z}(t),t)$ for $\mathcal{M}%
_{\text{ieq}}(t)$\ in $\mathfrak{S}_{\mathbf{Z}^{\prime}}$ satisfies the Gibbs
fundamental relation%
\[
dS(\mathbf{Z}^{\prime}(t))\mathbf{=}\frac{\partial S}{\partial E}%
dE\mathbf{+}\frac{\partial S}{\partial\mathbf{W}}\cdot d\mathbf{W+}%
\frac{\partial S}{\partial\boldsymbol{\xi}^{\prime}}\cdot d\boldsymbol{\xi
}^{\prime},
\]
where $\mathbf{W}$ is the work variable in $\mathfrak{S}_{\mathbf{Z}}$.
Expressing the last term as%
\[
\frac{\partial S}{\partial\boldsymbol{\xi}^{\prime}}\cdot\frac
{d\boldsymbol{\xi}^{\prime}}{dt}dt,
\]
we obtain the following generalization of the Gibbs fundamental relation for
$\mathcal{M}_{\text{nieq}}(t)$\ in $\mathfrak{S}_{\mathbf{Z}}$:%
\begin{subequations}
\begin{equation}
dS(\mathbf{Z}(t),t)=\frac{\partial S}{\partial E}dE\mathbf{+}\frac{\partial
S}{\partial\mathbf{W}}\cdot d\mathbf{W+}\frac{\partial S}{\partial t}dt,
\label{GibbsFR-S-NIEQ}%
\end{equation}
where%
\begin{equation}
\frac{\partial S}{\partial t}\doteq\frac{\partial S}{\partial\boldsymbol{\xi
}^{\prime}}\cdot\frac{d\boldsymbol{\xi}^{\prime}}{dt}\geq0. \label{dS/dt-NIEQ}%
\end{equation}
In $\mathfrak{S}_{\mathbf{Z}^{\prime}}$, we can identify the temperature $T$
as the thermodynamic temperature in $\mathfrak{S}_{\mathbf{Z}^{\prime}}$ by
the standard definition. But, it is clear from the above discussion that
$\partial S(\mathbf{Z}^{\prime}(t))/\partial E$ in $\mathfrak{S}%
_{\mathbf{Z}^{\prime}}$ has the same value as $\partial S(\mathbf{Z}%
(t),t)/\partial E$ in $\mathfrak{S}_{\mathbf{Z}}$. Therefore, we are now set
to identify $T_{\text{arb}}$ in Eq. (\ref{System_dQ_dS}) as a thermodynamic temperature.
\end{subequations}
\begin{remark}
$T_{\text{arb}}$ in Eq. (\ref{System_dQ_dS}) in $\mathfrak{S}_{\mathbf{Z}}%
$\ is identified by the same derivative in the Gibbs fundamental relation in
$\mathfrak{S}_{\mathbf{Z}^{\prime}}$ as follows
\begin{equation}
\beta_{\text{arb}}=1/T_{\text{arb}}=\partial S(\mathbf{Z}^{\prime
}(t))/\partial E=\partial S(\mathbf{Z}(t),t)/\partial E. \label{beta_arb}%
\end{equation}

\end{remark}

\begin{definition}
\label{Def-HiddenEntropy}As the presence of $\partial S/\partial t$ above in
$\mathfrak{S}_{\mathbf{Z}}$ is due to "hidden" internal variables in
$\boldsymbol{\xi}^{\prime}$, we will call it the \emph{hidden entropy
generation rate}, and
\begin{subequations}
\begin{equation}
d_{\text{i}}S^{\text{hid}}(t)=\frac{\partial S}{\partial t}dt=\frac{\partial
S}{\partial\boldsymbol{\xi}^{\prime}}\cdot d\boldsymbol{\xi}^{\prime}\geq0,
\label{hiddenn-entropy-generation}%
\end{equation}
the \emph{hidden entropy generation}. It results in a \emph{hidden
irreversible macrowork}%
\begin{equation}
d_{\text{i}}W^{\text{hid}}\doteq Td_{\text{i}}S^{\text{hid}}=\mathbf{A}%
^{\prime}\cdot d\boldsymbol{\xi}^{\prime}, \label{hiddenn-irreversible-work}%
\end{equation}
in $\mathfrak{S}_{\mathbf{Z}}$ due to the hidden internal variable with
affinity $\mathbf{A}^{\prime}$.
\end{subequations}
\end{definition}

\begin{remark}
\label{Remark-IEQ-ARB-Macrostate}A macrostate $\mathcal{M}_{\text{nieq}}(t)$
with $\mathcal{S}(\mathbf{Z}(t),t)$ can be converted to $\mathcal{M}%
_{\text{ieq}}(t)$ with a state function $\mathcal{S}(\mathbf{Z}^{\prime}(t))$
in an appropriately chosen state space $\mathfrak{S}_{\mathbf{Z}^{\prime}%
}\supset\mathfrak{S}_{\mathbf{Z}}$ by finding the appropriate window in which
$\tau_{\text{obs}}$ lies. The needed additional internal variable
$\boldsymbol{\xi}^{\prime}$\ determines the hidden entropy generation rate
$\partial S/\partial t$ in Eq. (\ref{dS/dt-NIEQ}) due to the non-IEQ nature of
$\mathcal{M}_{\text{nieq}}(t)$\ in $\mathfrak{S}_{\mathbf{Z}}$, and ensures
validity of the Gibbs relation in Eq. (\ref{GibbsFR-S-NIEQ}) for it, thereby
providing not only a new interpretation of the temporal variation of the
entropy due to hidden variables but also extends the MNEQT to $\mathcal{M}%
_{\text{nieq}}(t)$\ in $\mathfrak{S}_{\mathbf{Z}}$.
\end{remark}

The above discussion strongly points towards the possible

\begin{proposition}
\label{Proposition-General-MNEQT}The MNEQT provides a very general framework
to study any $\mathcal{M}_{\text{nieq}}(t)$ in $\mathfrak{S}_{\mathbf{Z}}$,
since it can be converted into a $\mathcal{M}_{\text{ieq}}(t)$ in an
appropriately chosen state space $\mathfrak{S}_{\mathbf{Z}^{\prime}}$, with
$d_{\text{i}}S^{\text{hid}}(t)$ originating from hidden internal variable
$\boldsymbol{\xi}^{\prime}$.
\end{proposition}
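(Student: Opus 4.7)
The plan is to proceed constructively by invoking the hierarchy of relaxation times and the strategy for choosing the state space developed in Sect.~\ref{Sec-Choice-S_Z}. Starting from a macrostate $\mathcal{M}_{\text{nieq}}(t)$ in $\mathfrak{S}_{\mathbf{Z}}$ whose entropy $S(\mathbf{Z}(t),t)$ is not a state function, I would first identify why: the chosen set $\boldsymbol{\xi}$ produces a window $\tau_{n}>\tau_{\text{obs}}>\tau_{n+1}$ that is too narrow to capture all slow internal modes whose affinities remain non-vanishing on the observation time scale. The explicit time dependence in $S(\mathbf{Z}(t),t)$ is then the signature of additional slow modes $\boldsymbol{\xi}^{\prime}(t)$ that have been integrated out.

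The next step is to adjoin precisely those modes. Using Theorem~\ref{Theorem-Exixtence-S} and the counting argument of Sect.~\ref{Sec-Unique-S-T}, I would enlarge the state space to $\mathfrak{S}_{\mathbf{Z}^{\prime}}$ with $\mathbf{Z}^{\prime}=(\mathbf{Z},\boldsymbol{\xi}^{\prime})$, where $\boldsymbol{\xi}^{\prime}$ is chosen so that the enlarged window $\tau_{n^{\prime}}>\tau_{\text{obs}}>\tau_{n^{\prime}+1}$ contains all currently relaxing modes. By Definition~\ref{Def-InternalEQ} and Corollary~\ref{Corollary-ProperSubspaces}, $\mathcal{M}(t)$ is then unique in $\mathfrak{S}_{\mathbf{Z}^{\prime}}$ and its entropy is a state function $S(\mathbf{Z}^{\prime}(t))$, so that by construction $\mathcal{M}(t)=\mathcal{M}_{\text{ieq}}(t)\in\mathfrak{S}_{\mathbf{Z}^{\prime}}$ while being $\mathcal{M}_{\text{nieq}}(t)\in\mathfrak{S}_{\mathbf{Z}}$. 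This realizes the first claim.

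To establish the second claim, I would apply the Gibbs fundamental relation (\ref{Gibbs-FR-S}) in $\mathfrak{S}_{\mathbf{Z}^{\prime}}$ and split the resulting differential into pieces along $\mathbf{Z}$ and along $\boldsymbol{\xi}^{\prime}$. Re-expressing the $\boldsymbol{\xi}^{\prime}$-piece as $(\partial S/\partial\boldsymbol{\xi}^{\prime})\cdot(d\boldsymbol{\xi}^{\prime}/dt)\,dt$ and identifying it with $\partial S/\partial t$ exactly recovers the generalized relation (\ref{GibbsFR-S-NIEQ})--(\ref{dS/dt-NIEQ}) for $\mathcal{M}_{\text{nieq}}(t)$ in $\mathfrak{S}_{\mathbf{Z}}$. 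Definition~\ref{Def-HiddenEntropy} then yields the hidden entropy generation $d_{\text{i}}S^{\text{hid}}(t)=(\partial S/\partial\boldsymbol{\xi}^{\prime})\cdot d\boldsymbol{\xi}^{\prime}$ and the associated hidden irreversible macrowork $d_{\text{i}}W^{\text{hid}}=\mathbf{A}^{\prime}\cdot d\boldsymbol{\xi}^{\prime}$, whose non-negativity follows from the second-law inequalities already proved in Sect.~\ref{Sec-IrreversibleInequalities} applied in $\mathfrak{S}_{\mathbf{Z}^{\prime}}$. Finally, using Remark~\ref{Remark-IEQ-ARB-Macrostate} and (\ref{beta_arb}), one shows that $\beta_{\text{arb}}$ read off the NIEQ macrostate in $\mathfrak{S}_{\mathbf{Z}}$ coincides with the thermodynamic temperature in $\mathfrak{S}_{\mathbf{Z}^{\prime}}$, so that the entire MNEQT apparatus (Clausius equality, irreversibility inequalities, identification of $dQ$ and $dW$) transfers verbatim to the original description.

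The main obstacle I anticipate is not in the formal manipulations, which are essentially bookkeeping once the state spaces are in place, but in justifying that a suitable $\boldsymbol{\xi}^{\prime}$ always exists and is finite-dimensional for the physical system at hand. This must be argued either by appealing to the finiteness of $n^{*}$ in Sect.~\ref{Sec-Unique-S-T} for systems of bounded correlation length (cf.~Remark~\ref{Remark-Small number-Subsystems}), or by allowing $\boldsymbol{\xi}^{\prime}$ to be chosen as a coarse-graining of the full set of microstate populations, which always renders $S$ a state function. A careful statement should therefore either restrict to systems admitting a finite hierarchy of relaxation times or accept that $\boldsymbol{\xi}^{\prime}$ may need to be interpreted as an abstract completion of the set of slow modes.
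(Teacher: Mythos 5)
Your proposal is correct and follows essentially the same route as the paper: the paper presents this proposition not with a formal proof but as a summary of the immediately preceding discussion in Sect.~\ref{Sec-M_nieq}, which is exactly the chain you reconstruct (the relaxation-time window of Sect.~\ref{Sec-Choice-S_Z} to select $\boldsymbol{\xi}^{\prime}$, the Gibbs relation in $\mathfrak{S}_{\mathbf{Z}^{\prime}}$ split into a $\mathbf{Z}$-part and a $\boldsymbol{\xi}^{\prime}$-part yielding Eqs.~(\ref{GibbsFR-S-NIEQ})--(\ref{dS/dt-NIEQ}), Definition~\ref{Def-HiddenEntropy} for $d_{\text{i}}S^{\text{hid}}$, and the temperature identification (\ref{beta_arb})). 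Your closing caveat about the existence and finiteness of $\boldsymbol{\xi}^{\prime}$ is a reasonable sharpening; the paper addresses it only implicitly through the $n^{\ast}$ counting of Sect.~\ref{Sec-Unique-S-T} and Remark~\ref{Remark-Small number-Subsystems}.
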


\begin{remark}
In a process $\mathcal{P}$ resulting in $\mathcal{M}_{\text{nieq}}(t)$ in
$\mathfrak{S}_{\mathbf{Z}}$, it is natural to assume that the terminal
macrostates in $\mathcal{P}$ are $\mathcal{M}_{\text{ieq}}$ so the affinity
corresponding to $\boldsymbol{\xi}^{\prime}$ must vanish in them.
\end{remark}

\begin{remark}
\label{Remark-NonIEQ-S_X} By replacing $\mathbf{Z}$ by $\mathbf{X}$, and
$\mathbf{Z}^{\prime}$ by $\mathbf{Z}$, we can also express the Gibbs
fundamental relation for any NEQ macrostate in $\mathfrak{S}_{\mathbf{X}}$ as%
\begin{equation}
dS(\mathbf{X}(t),t)=\frac{\partial S}{\partial E}dE\mathbf{+}\frac{\partial
S}{\partial\mathbf{w}}\cdot d\mathbf{w+}\frac{\partial S}{\partial t}dt,
\label{Gibbs-FR-NonIEQ-S_X}%
\end{equation}
by treating $\mathcal{M}$ as $\mathcal{M}_{\text{ieq}}$ in $\mathfrak{S}%
_{\mathbf{Z}}$. In a NEQ process $\overline{\mathcal{P}}$ between two
EQ\ macrostates but resulting in $\mathcal{M}_{\text{ieq}}(t)$ between them in
$\mathfrak{S}_{\mathbf{Z}}$, the affinity corresponding to $\boldsymbol{\xi}$
must vanish in the terminal EQ macrostates of $\overline{\mathcal{P}}$.
\end{remark}

Eq. (\ref{Gibbs-FR-NonIEQ-S_X}) proves extremely useful to describe
$\mathcal{M}$ in $\mathfrak{S}_{\mathbf{X}}$ as it may not be easy to identify
$\boldsymbol{\xi}$ in all cases.

\begin{remark}
\label{Remark-dQ-dS-GeneralRelation} The explicit time dependence in the
entropy for $\mathcal{M}_{\text{neq}}$ in $\mathfrak{S}_{\mathbf{X}}$ or
$\mathcal{M}_{\text{nieq}}(t)$\ in $\mathfrak{S}_{\mathbf{Z}}$ is solely due
to the internal variables, which do not affect $dQ=dE_{\text{s}}$, Eq.
(\ref{dQ-dS}) remains valid, with $T$ defined as the inverse of $\partial
S/\partial E$ at fixed $\mathbf{w},t$\ or $\mathbf{W},t$ in the two state
spaces, respectively; see Eq. (\ref{T-beta}).
\end{remark}

\subsection{External and Internal Variations of $dp_{k}(t)$%
\label{Sec_External_Internal_Variations}}

We focus on $\mathcal{N}_{k}$ in Sect. \ref{Marker_NonEq-S}, and partition its
change $d\mathcal{N}_{k}$ in accordance with the micropartition rule; see
Definition \ref{Def-Macropartition}. We take $\mathcal{N}$ fixed. Then, the
macropartition results in the partition for $dp_{k}$ given in Eq.
(\ref{dpk-Partition}), where $d_{\text{e}}p_{k}$ is the change due to
exchanges with the medium and $d_{\text{i}}p_{k}$ the change due to internal
processes. It follows from the partition that%
\begin{equation}
d_{\text{e}}Q(t)\equiv%
{\textstyle\sum\nolimits_{k}}
E_{k}d_{\text{e}}p_{k}(t),\ d_{\text{i}}Q(t)\equiv%
{\textstyle\sum\nolimits_{k}}
E_{k}d_{\text{i}}p_{k}(t), \label{Statistical_Heat_Components}%
\end{equation}
where we have replaced $d$ by $d_{\alpha}$ in Eq. (\ref{dE_dQ}). As
$d_{\alpha}Q(t)$ are thermodynamic quantities, they must not change their
values if we change $E_{k}$ by adding a constant to $\mathcal{H}$. This
requires
\begin{equation}%
{\textstyle\sum\nolimits_{k}}
d_{\alpha}p_{k}(t)=0,\forall\alpha, \label{p_alpha-sum}%
\end{equation}
and put a limitation on the possible variations $d_{\alpha}p_{k}$. We will
assume this to be true. Using this fact, we similarly have%
\begin{equation}
d_{\text{e}}S(t)\equiv-%
{\textstyle\sum\nolimits_{k}}
\eta_{k}d_{\text{e}}p_{k}(t),\ d_{\text{i}}S(t)\equiv-%
{\textstyle\sum\nolimits_{k}}
\eta_{k}d_{\text{i}}p_{k}(t).\ \label{Statistical_Entropy_Components}%
\end{equation}
The relation $d_{\text{e}}Q(t)=T_{0}d_{\text{e}}S(t)$ can be expressed in
terms of $d_{\text{e}}p_{k}(t)$%
\[%
{\textstyle\sum\nolimits_{k}}
(\eta_{k}-\beta_{0}E_{k})d_{\text{e}}p_{k}=0.
\]
Similarly, the relation $d_{\text{i}}Q(t)=T(t)dS(t)-T_{0}d_{\text{e}}S(t)$ can
be written as
\[%
{\textstyle\sum\nolimits_{k}}
(E_{k}-T_{0}\eta_{k})d_{\text{i}}p_{k}=(T(t)-T_{0})%
{\textstyle\sum\nolimits_{k}}
\eta_{k}dp_{k},
\]
which acts as a constraint on possible variations $d_{\text{i}}p_{k}$, given
that $dp_{k}$ can be directly obtained from Eq. (\ref{microstate probability}).\ 

\section{A Model Entropy Calculation\label{Sec-EntropyCalculation}}

We consider a gas of non-interacting identical structureless particles with no
spin, each of mass $m$, in a fixed region confined by impenetrable walls
(infinite potential well). Initially, the gas is in a NEQ macrostate, and is
isolated in that region. In time, the gas will equilibrate and the microstate
probabilities change in a way that the entropy increases. We wish to
understand how the increase happens.

\subsection{1-dimensional ideal Gas:\ }

In order to be able to carry out an \emph{exact calculation}, we consider the
gas in a $1$-dimensional box of initial size $L_{\text{in}}$. As there are no
interactions between the particles, the wavefunction $\Psi$ for the gas is a
product of individual particle wavefunctions $\psi$. Thus, we can focus on a
single particle to study the nonequilibrium behavior of the gas
\cite{Gujrati-JensenInequality,Gujrati-QuantumHeat,GujTyler,GujBoyko,Wu}. The
simple model of a particle in a box has been extensively studied in the
literature but with a very different emphasis. \cite{Bender,Doescher,Stutz}
The particle only has non-degenerate eigenstates whose energies are determined
by $L$, and \ a quantum number $k$. We use the energy scale $\varepsilon
_{1}=\pi^{2}\hbar^{2}/2mL_{\text{in}}^{2}$ to measure the eigenstate energies,
and $\alpha=L/L_{\text{in}}$ so that%
\begin{equation}
\varepsilon_{k}(L)=k^{2}/\alpha^{2}; \label{particle-microstate energy}%
\end{equation}
the corresponding eigenfunctions are given by%
\begin{equation}
\psi_{k}(x)=\sqrt{2/L}\sin(k\pi x/L),\ \ k=1,2,3,\cdots.
\label{eigenfunctions}%
\end{equation}
The pressure generated by the eigenstate on the walls is given by
\cite{Landau-QM}
\begin{equation}
P_{k}(L)\equiv-\partial\varepsilon_{k}/\partial L=2\varepsilon_{k}(L)/L.
\label{particle-microstate presure}%
\end{equation}
In terms of the eigenstate probability $p_{k}(t)$, the average energy and
pressure are given by
\begin{subequations}
\label{particle energy pressure}%
\begin{align}
\varepsilon(t,L)  &  \equiv%
{\textstyle\sum\nolimits_{k}}
p_{k}(t)\varepsilon_{k}(L),\label{particle energy}\\
P(t,L)  &  \equiv%
{\textstyle\sum\nolimits_{k}}
p_{k}(t)P_{k}(L)=2\varepsilon(t,L)/L. \label{particle pressure0}%
\end{align}
The entropy follows from Eq. (\ref{Gibbs_Formulation}) and is given for the
single particle case by%
\end{subequations}
\[
s(t,L)\equiv-%
{\textstyle\sum\nolimits_{k}}
p_{k}(t)\ln p_{k}(t).
\]
The time dependence in $\varepsilon(t)$\ or $P(t)$ is due to the time
dependence in $p_{k}$ and $\varepsilon_{k}(L)$. Even for an isolated system,
for which $\varepsilon$ remains constant, $p_{k}$ cannot remain constant as
follows directly from the second law \cite{Gujrati-Symmetry} and creates a
conceptual problem because the eigenstates are mutually orthogonal and there
can be no transitions among them to allow for a change in $p_{k}$.\ 

As the gas is isolated, its energy, volume and the number of particles remain
constant. As it is originally not in equilibrium, it will eventually reach
equilibrium in which its entropy must increase. This requires the introduction
of some internal variables even in this system whose variation will give rise
to entropy generation by causing internal variations $d_{\text{i}}p_{k}(t)$ in
$p_{k}(t)$. Here, we will assume a single internal variable $\xi(t)$. What is
relevant is that the variation in $\xi(t)$ is accompanied by changes
$dp_{k}(t)$ occurring within the isolated system. According to our
identification of heat with changes in $p_{k}(t)$, these variations must be
associated with heat, which in this case will be associated with irreversible
heat $d_{\text{i}}Q(t)$.

\subsection{Chemical Reaction Approach\label{Sec-ChemReaction}}

A way to change $p_{k}$ in an isolated system is to require the presence of
some stochastic interactions, whose presence allows for transitions among
eigenstates \cite{Gujrati-Entropy1}. As these transitions are happening within
the system, we can treat them as "chemical reactions" between different
eigenstates \cite{DeDonder,deGroot,Prigogine} by treating each eigenstate $k$
as a chemical species. During the transition, these species undergo chemical
reactions to allow for the changes in their probabilities.

We follow this analogy further and extend the traditional approach
\cite{DeDonder,deGroot,Prigogine} to the present case. For the sake of
simplicity, our discussion will be limited to the ideal gas in a box; the
extension to any general system is trivial. Therefore, we will use microstates
$\left\{  \mathfrak{m}_{k}\right\}  $ instead of eigenstates in the following
to keep the discussion general. Let there be $N_{k}(t)$ particles in
$\mathfrak{m}_{k}$ at some instant $t$ so that
\[
N=%
{\textstyle\sum\nolimits_{k}}
N_{k}(t)
\]
at all times, and $p_{k}(t)=N_{k}(t)/N$. We will consider the general case
that also includes the case in which final microstates refer to a box size
$L^{\prime}$ different from its initial value $L$. Let us use $A_{k}$ to
denote the reactants (initial microstates) and $A_{k}^{\prime}$ to denote the
products (final microstates). For the sake of simplicity of argument, we will
assume that transitions between microstates is described by a single chemical
reaction, which is expressed in stoichiometry form as
\begin{equation}%
{\textstyle\sum\nolimits_{k}}
a_{k}A_{k}\longrightarrow%
{\textstyle\sum\nolimits_{k}}
a_{k}^{\prime}A_{k}^{\prime}. \label{General Reaction}%
\end{equation}
Let $N_{k}$ and $N_{k}^{\prime}$ denote the population of $A_{k}$ and
$A_{k}^{\prime}$, respectively, so that $N=%
{\textstyle\sum\nolimits_{k}}
N_{k}=%
{\textstyle\sum\nolimits_{k}}
N_{k}^{\prime}$. Accordingly, $p_{k}(t)=N_{k}(t)/N$ for the reactant and
$p_{k}(t+dt)=N_{k}^{\prime}(t)/N$ for the product. The single reaction is
described by a single extent of reaction $\xi$ and we have
\[
d\xi(t)\equiv-dN_{k}(t)/a_{k}(t)\equiv dN_{k^{\prime}}^{\prime}%
(t)/a_{k^{\prime}}^{\prime}(t)\text{ \ \ for all }k,k^{\prime}.
\]
It is easy to see that the coefficients satisfy an important relation%
\[%
{\textstyle\sum\nolimits_{k}}
a_{k}(t)=%
{\textstyle\sum\nolimits_{k}}
a_{k}^{\prime}(t),
\]
which reflects the fact that the change $\left\vert dN\right\vert $ in the
reactant microstates is the same as in the product microstates. The
\emph{affinity} in terms of the chemical potentials $\mu$ is given by%
\[
A(t)=%
{\textstyle\sum}
a_{k}(t)\mu_{A_{k}}(t)-%
{\textstyle\sum}
a_{k}^{\prime}(t)\mu_{A_{k}^{\prime}}(t),
\]
and will vanish only in "equilibrium," i.e. only when $p_{k}$' s attain their
equilibrium values. Otherwise, $A(t)$ will remain non-zero. It acts as the
thermodynamic force in driving the chemical reaction
\cite{DeDonder,deGroot,Prigogine}. But we must wait long enough for the
reaction to come to completion, which happens when $A(t)$ and $d\xi/dt$ both
vanish. The extent of reaction $\xi$ is an example of an internal variable.
There may be other internal variables depending on the initial NEQ macrostate.
This will be discussed in the following section.

\section{Simple Applications\label{Sec-Applications}}

\subsection{Isothermal Expansion}

Let us first consider an isothermal expansion of an ideal gas in which the
temperature $T$ of the gas remains constant and equal to that of the medium
$T_{0}$. During an irreversible isothermal expansion, energy is pumped into
the gas isothermally from outside so $E(t)$ remains constant. The pumping of
energy will result in the change $d_{\text{e}}p_{k}(t)$. This will determine
$d_{\text{e}}S(t)=d_{\text{e}}Q(t)/T_{0}$. In addition, the gas may undergo
transitions among various energy levels, as discussed in Sect.
\ref{Sec-ChemReaction}, without any external energy input, which will
determine the change $d_{\text{i}}p_{k}(t)$. From Eq. (\ref{diQ}), we
determine $d_{\text{i}}Q(t)=T_{0}d_{\text{i}}S(t)$, and consequently
$d_{\text{i}}W(t)$. Thus,%
\[
\lbrack P(t)-P_{0}]dV(t)+A(t)d\xi(t)=T_{0}dS(t)-d_{\text{e}}Q(t).
\]
Such a calculation will not be possible using the first law in Eq.
(\ref{FirstLaw-MI}) in the \r{M}NEQT.

\subsection{Intrinsic Adiabatic Expansion}

In a nonequilibrium intrinsic adiabatic process, we have $d_{\text{i}}W(t)=-$
$d_{\text{e}}Q(t)$ so the heat exchange $\left\vert d_{\text{e}}%
Q(t)\right\vert =$ $T_{0}\left\vert d_{\text{e}}S(t)\right\vert $ is converted
into the irreversible work. We can use this to determine the work
$d_{\text{i}}W_{\xi}(t)$ due to the single internal variable%
\[
A(t)d\xi(t)=-d_{\text{e}}Q(t)-(P(t)-P_{0})dV>0.
\]
The identification $d_{\text{i}}W(t)=-$ $d_{\text{e}}Q(t)$ and the calculation
of $A(t)d\xi(t)$ and of $d_{\text{i}}S(t)$\ cannot be done in the traditional
formulation of the first law in the \r{M}NEQT, in which $d_{\text{e}}Q(t)=0$
for the traditional adiabatic process so that $dE=-d_{\text{e}}W(t)$.

\subsection{Composite $\Sigma$ with Temperature Inhomogeneity.
\label{Sec-Composite System}}

Here, we will show by examples that the thermodynamic temperature $T$ of
$\Sigma$ allows us to treat it as a "black box" $\Sigma_{\text{B}}$ without
knowing its detailed internal structure such as its composition in terms of
two subsystems $\Sigma_{1}$ and $\Sigma_{2}$. Alternatively, we can treat
$\Sigma$ as a combination $\Sigma_{\text{C}}$ of $\Sigma_{1}$ at $T_{1}$\ and
$\Sigma_{2}$ at at $T_{2}<T_{1}$, and obtain same thermodynamics. Thus, both
approaches are equivalent, which justifies the usefulness and uniqueness (see
Theorem \ref{Theorem-Exixtence-S}) of $T$ as a thermodynamically appropriate
global temperature.

In the following, we will consider various cases that can be obtained as
special cases of the following general situation (\ref{Fig-Sys-TwoSources}):
$\Sigma_{1}$ in thermal contact with the medium $\widetilde{\Sigma}%
_{\text{h1}}$ at temperature $T_{01}$, and $\Sigma_{2}$ in thermal contact
with the medium $\widetilde{\Sigma}_{\text{h2}}$ at temperature $T_{02}$, with
the two media having no mutual interaction.

We will consider the two realizations for $\Sigma$: $\Sigma_{\text{B}}$ and
$\Sigma_{\text{C}}$ to compare their predictions. As discussed for the case
(b) in Sec. \ref{Sec-InternalVariables}, $\Sigma_{1}$ and $\Sigma_{2}$ are
always taken to be in EQ, but $\Sigma$ in IEQ. The entropies in the two
realizations are%
\begin{equation}
S_{\text{B}}(t)=S(E(t),\xi(t));S_{\text{C}}=S_{1}(E_{1}(t))+S_{2}(E_{2}(t)),
\label{S_BC}%
\end{equation}
and have the same value; recall that $E(t)=E_{1}(t)+E_{2}(t)$, and
$\xi(t)=E_{1}(t)-E_{2}(t)$ for $\Sigma(t)$; see Eq. (\ref{Internal Variable-1}%
). For clarity, we will often use the argument $t$ to emphasize the variations
in time $t$ in this section. In general, the irreversible entropy generation
is given by%
\begin{equation}
d_{\text{i}}S(t)=d\widetilde{S}_{1}(t)+d\widetilde{S}_{2}(t)+dS(t),
\label{Irreversible EntropyGeneration-Inhomogeneity}%
\end{equation}
where $dS$ should be replaced by $dS_{\text{B}}$ or $dS_{\text{C}}$ as the
case may be:%
\begin{equation}%
\begin{tabular}
[c]{l}%
$dS_{\text{B}}(t)=\beta(t)dE(t)+\beta(t)A(t)d\xi(t),$\\
$dS_{\text{C}}(t)=\beta_{1}(t)dE_{1}(t)+\beta_{2}(t)dE_{2}(t),$%
\end{tabular}
\ \ \ \ \ \label{dSbc}%
\end{equation}
where we are using the inverse temperatures for various bodies. Let
$d_{\text{e}}Q_{l}(t),l=1,2$ be the energy or macroheat transferred to
$\Sigma_{l}(t)$ from $\widetilde{\Sigma}_{\text{h}}^{(l)}$, and $dE_{\text{in}%
}(t)=d_{\text{e}}Q_{\text{in}}(t)$ the energy or macroheat transferred from
$\Sigma_{1}(t)$ to $\Sigma_{2}(t)$. We have, using $\delta_{1}=+1$ and
$\delta_{2}=-1,$
\begin{subequations}
\begin{align}
dE_{l}(t)  &  =d_{\text{e}}Q_{l}(t)+\delta_{l}dE_{\text{in}}(t),\nonumber\\
dE(t)  &  =d_{\text{e}}Q_{1}(t)+d_{\text{e}}Q_{2}(t),\label{dEl-dE-dSl}\\
d\widetilde{S}_{l}(t)  &  =-d_{\text{e}}S_{l}(t)=-\beta_{0l}d_{\text{e}}%
Q_{l}(t).\nonumber
\end{align}
We see that $dE(t)$ is unaffected by the internal energy transfer
$dE_{\text{in}}(t)$, while
\begin{equation}
d\xi(t)=d_{\text{e}}Q_{1}(t)-d_{\text{e}}Q_{2}(t)+2dE_{\text{in}}(t),
\label{General-dXi}%
\end{equation}
is affected by the macroheat exchange disparity $d_{\text{e}}Q_{1}%
(t)-d_{\text{e}}Q_{2}(t)$ along with $dE_{\text{in}}(t)$.

We finally have%
\end{subequations}
\begin{equation}
d_{\text{i}}S(t)=-%
{\textstyle\sum\nolimits_{l}}
\beta_{0l}d_{\text{e}}Q_{l}(t)+dS. \label{diS-Media}%
\end{equation}
We now consider various cases to make our point.

\subsubsection{Isolated $\Sigma$}

We first consider the realization $\Sigma_{\text{B}}$. Using $dE(t)=dE_{1}%
(t)+dE_{2}(t),d\xi(t)=dE_{1}(t)-dE_{2}(t)$, see Eqs.
(\ref{Internal Variable-1}), and (\ref{dSbc}) for $dS_{\text{B}}(t)$ above, we
obtain
\begin{subequations}
\begin{equation}
\beta(t)=\frac{\beta_{1}(t)+\beta_{2}(t)}{2},\beta(t)A(t)=\frac{\beta
_{1}(t)-\beta_{2}(t)}{2}. \label{beta-A-Composite}%
\end{equation}
This identifies $T\left(  t\right)  $ in terms of $T_{1}(t)$ and $T_{2}(t)$.
As EQ is attained, $T(t)\rightarrow T_{0}$, the EQ temperature between
$\Sigma_{1}$ and $\Sigma_{2}$, and $A(t)\rightarrow A_{0}=0$ as expected. In
the following, we will use $A^{\prime}(t)$ for $\beta(t)A(t)$ for simplicity.
In terms of $\beta$ and $A^{\prime}$, we also have%
\begin{equation}
\beta_{1}=\beta+A^{\prime},\beta_{2}=\beta-A^{\prime}. \label{beta1-beta2}%
\end{equation}

We now justify that in this simple example, $A^{\prime}(t)d\xi(t)$ determines
$d_{\text{i}}S(t)$ due to irreversibilty in $\Sigma(t)$; see Eq.
(\ref{ClosedIsothermalSystem-InternalHeat}). Setting $dE(t)=0$ in
$dS_{\text{B}}(t)$, we have by direct evaluation,
\end{subequations}
\begin{equation}
d_{\text{i}}S(t)=A^{\prime}(t)d\xi(t)=\beta(t)d_{\text{i}}W(t).
\label{diS-Isolated-Composite}%
\end{equation}
The last equation follows from the general result in Eq.
(\ref{EntropyDiff-Isolated}). It should be emphasized that the existence of
$d_{\text{i}}S(t)\geq0$ due to $\xi$ in $\mathcal{M}_{\text{ieq}}$ is
consistent with $\mathcal{M}_{\text{ieq}}$ as a NEQ macrostate, even though
its entropy is a state function in the extended state space.

We now consider $\Sigma_{\text{C}}$, which is also very instructive to
understand the origin of $d_{\text{i}}S(t)$ in a different way. Considering
internal energy or macroheat transfer $dE_{\text{in}}(t)=d_{\text{e}%
}Q_{\text{in}}(t)$ between $\Sigma_{1}(t)$ and $\Sigma_{2}(t)$ at some instant
$t$, we have
\begin{subequations}
\begin{equation}
dS_{1}(t)=\frac{dE_{\text{in}}(t)}{T_{1}(t)},dS_{2}(t)=-\frac{dE_{\text{in}%
}(t)}{T_{2}(t)}, \label{dS1-dS2-dEin}%
\end{equation}
due to this transfer. This results in
\begin{equation}
d_{\text{i}}S(t)=\left[  \beta_{1}(t)-\beta_{2}(t)\right]  dE_{\text{in}%
}(t)=A^{\prime}d\xi(t), \label{diS-dEin}%
\end{equation}
since $d\xi(t)=dE_{1}(t)-dE_{2}(t)=2dE_{\text{in}}(t)$. Thus, the physical
origin of $d_{\text{i}}S(t)$ is the internal entropy change of the subsystems.

\subsubsection{$\Sigma$ Interacting with $\widetilde{\Sigma}_{\text{h}}$}

To further appreciate the physical significance of the NEQ $T(t)$ of the above
composite system $\Sigma(t)$, we allow it to interact with $\widetilde{\Sigma
}_{\text{h}}$, a heat bath, at the EQ temperature $T_{0}$. For this, we take
$\widetilde{\Sigma}_{\text{h1}}$ and $\widetilde{\Sigma}_{\text{h2}}$ at the
same common temperature $T_{0}=T_{01}=T_{02}$ so that we can treat them as a
single medium $\widetilde{\Sigma}_{\text{h}}$ with macroheat exchange
$d_{\text{e}}Q(t)$. We thus obtain from Eq. (\ref{diS-Media})
\end{subequations}
\[
d_{\text{i}}S(t)=-\beta_{0}d_{\text{e}}Q(t)+dS.
\]
We will consider two different kinds of interaction below:

\qquad(i) We first consider $\Sigma_{\text{B}}(t)$ in $\mathcal{M}%
_{\text{ieq}}$ at $T(t)$ so we use $dS_{\text{B}}(t)$ above. We thus have
(using the identity $d_{\text{e}}S(t)=\beta_{0}d_{\text{e}}Q(t)$)
\begin{equation}
d_{\text{i}}S(t)=[\beta(t)-\beta_{0}]d_{\text{e}}Q(t)+A^{\prime}(t)d\xi(t),
\label{Single-Interacting diS}%
\end{equation}
which is consistent with the general identity given by the top equation in Eq.
(\ref{diQ}), a result which was derived for a single system at temperature
$T\left(  t\right)  $. This confirms that the composite $\Sigma_{\text{C}}$
here can be treated as a noncomposite $\Sigma_{\text{B}}$\ at $T\left(
t\right)  $. To be convinced that the above $d_{\text{i}}S(t)$ includes the
internally generated irreversibility in Eq. (\ref{diS-Isolated-Composite}) due
to macroheat transfer between $\Sigma_{1}(t)$ and $\Sigma_{2}(t)$, we only
have to set $d_{\text{e}}S(t)=0$ to ensure the isolation of $\Sigma$. We
reproduce Eq. (\ref{diS-Isolated-Composite}) as $d_{\text{i}}Q(t)=d_{\text{i}%
}W(t)$. The remaining source of irreversibility $T(t)d_{\text{i}}S^{\text{Q}%
}(t)$ given by the first term above is due to external macroheat
exchange\ between $\Sigma$ and $\widetilde{\Sigma}_{\text{h}}$%
\begin{subequations}
\begin{equation}
d_{\text{i}}S^{\text{Q}}(t)=[T_{0}\beta(t)-1]d_{\text{e}}S(t),
\label{diS_Sigma-s}%
\end{equation}
as expected; see the first term on the right in Eq.
(\ref{Irreversible EntropyGeneration-Complete}).

\qquad(ii) We take treat $\Sigma(t)$ as $\Sigma_{\text{C}}(t)$ in contact with
$\widetilde{\Sigma}_{\text{h}}$. We deal directly with the two macroheat
exchanges $d_{\text{e}}Q_{l}(t),l=1,2$ to $\Sigma_{l}(t)$ from $\widetilde
{\Sigma}_{\text{h}}$, and the internal energy transfer $dE_{\text{in}}(t)$.
Using $dE_{l}(t)$ from Eq. (\ref{dEl-dE-dSl}) in $dS_{\text{C}}$ given in Eq.
(\ref{diS-Media}), we find that%
\end{subequations}
\[
d_{\text{i}}S(t)=%
{\textstyle\sum\nolimits_{l}}
[\beta_{l}(t)-\beta_{0}]d_{\text{e}}Q_{l}(t)+\left[  \beta_{1}(t)-\beta
_{2}(t)\right]  dE_{\text{in}}(t).
\]
Using Eq. (\ref{beta1-beta2}) to express $\beta_{l}$, we can rewrite the above
equation as%
\[
d_{\text{i}}S(t)=[\beta(t)-\beta_{0}]d_{\text{e}}Q(t)+A^{\prime}d\xi,
\]
where we have used the identity
\begin{equation}
d_{\text{e}}Q(t)=d_{\text{e}}Q_{1}+d_{\text{e}}Q_{2}, \label{total=Exch-heat}%
\end{equation}
and have found
\begin{equation}
d\xi=d_{\text{e}}Q_{1}-d_{\text{e}}Q_{2}+2dE_{\text{in}} \label{internal-dxi}%
\end{equation}
using its general definition $d\xi(t)=dE_{1}(t)-dE_{2}(t)$. We thus see that
$d_{\text{i}}S(t)$ obtained by both realizations are the same as they must.
However, the realization $\Sigma_{\text{C}}(t)$ allows us to also identify
$d\xi$.\ 

Each exchange generates irreversible entropy following Eq. (\ref{diS_Sigma-s}%
). Using $d_{\text{e}}Q(t)=d_{\text{e}}Q_{1}(t)+d_{\text{e}}Q_{2}(t)$ in
$dQ(t)=T(t)dS(t)$ to determine $d_{\text{i}}Q(t)$, we find the generalization
of Eq. (\ref{Single-Interacting diS}):
\begin{equation}
d_{\text{i}}S(t)=\left[  \beta_{1}(t)-\beta_{2}(t)\right]  d_{\text{e}%
}Q_{\text{in}}(t)+%
{\textstyle\sum\nolimits_{l}}
[T_{0}\beta_{l}(t)-1]d_{\text{e}}S_{l}(t). \label{Composite-Interacting diS}%
\end{equation}
It is easy to see that the last term above gives nothing but the sum of the
irreversible entropies due to external exchanges of macroheat by $\Sigma
_{1}(t)$ and $\Sigma_{2}(t)$ with $\widetilde{\Sigma}_{\text{h}}$:%
\begin{equation}
d_{\text{i}}S^{\text{Q}}(t)=d_{\text{i}}S_{1}^{\text{Q}}(t)+d_{\text{i}}%
S_{2}^{\text{Q}}(t), \label{diS_S1-S2}%
\end{equation}
where
\begin{equation}
d_{\text{i}}S_{l}^{\text{Q}}(t)=[T_{0}\beta_{l}(t)-1]d_{\text{e}}%
S_{l}(t),l=1,2 \label{diS_Sel}%
\end{equation}
\ is the external entropy exchange of $\Sigma_{l}(t)$ with $\widetilde{\Sigma
}_{\text{h}}$.

Thus, whether we treat $\Sigma$ as a system $\Sigma_{\text{B}}$ at temperature
$T(t)$ or a collection $\Sigma_{\text{C}}$ of $\Sigma_{1}(t)$ and $\Sigma
_{2}(t)$ at temperatures $T_{1}(t)$ and $T_{2}(t)$, respectively, we obtain
the same irreversibility. In other words, $T(t)$ is a sensible thermodynamic
temperature even in the presence of inhomogeneity.

\subsection{$\Sigma$ Interacting with $\widetilde{\Sigma}_{\text{h1}}$ and
$\widetilde{\Sigma}_{\text{h2}}$}

We now consider our composite $\Sigma$ in thermal contact with two distinct
and mutually noninteracting stochastic media $\widetilde{\Sigma}_{\text{h1}}$
and $\widetilde{\Sigma}_{\text{h2}}$ at temperatures $T_{01}$ and $T_{02}$. We
will again discuss the two different realizations as above.

(i) We first consider $\Sigma_{\text{B}}(t)$ at temperature $T(t)$, which
interacts with the two $\widetilde{\Sigma}_{\text{h}}$'s, and use the general
result in Eq. (\ref{diS-Media}). A simple calculation using $dS_{\text{B}}$
generalizes Eq. (\ref{Single-Interacting diS}) and yields%
\begin{subequations}
\begin{equation}
d_{\text{i}}S(t)=%
{\textstyle\sum\nolimits_{l}}
[\beta(t)-\beta_{0l}]d_{\text{e}}Q_{l}(t)+A^{\prime}(t)d\xi(t),
\label{diSb-two media}%
\end{equation}
since this reduces to that result when we set $\beta_{01}=\beta_{02}=\beta
_{0}$. As above, $d_{\text{i}}Q(t)=d_{\text{i}}W(t)=A(t)d\xi(t)$; see Eq.
(\ref{diS-Isolated-Composite}), which gives rise to the last term above. Thus,
setting $d_{\text{e}}Q_{l}(t)=0,l=1,2$ to make $\Sigma$ isolated,\ we retrieve
$d_{\text{i}}S(t)$ in Eq. (\ref{diS-Isolated-Composite}) as expected. The
first sum above gives the external entropy exchanges with the two stochastic
media as above.

(ii) We now consider $\Sigma_{\text{C}}$, and allow $\widetilde{\Sigma
}_{\text{h1}}$ to directly interact with $\Sigma_{1}(t)$ at temperature
$T_{1}(t)$ and $\widetilde{\Sigma}_{\text{h2}}$ to directly interact with
$\Sigma_{2}(t)$ at temperature $T_{2}(t)$. Using $dS_{\text{C}}$ generalizes
Eq. (\ref{Single-Interacting diS}) and yields%
\begin{equation}
d_{\text{i}}S(t)=%
{\textstyle\sum\nolimits_{l}}
[\beta_{l}(t)-\beta_{0l}]d_{\text{e}}Q_{l}(t)+\left[  \beta_{1}(t)-\beta
_{2}(t)\right]  dE_{\text{in}}(t). \label{diSc-two media}%
\end{equation}
Again using Eq. (\ref{beta1-beta2}) to express $\beta_{l}$, we can rewrite the
above $d_{\text{i}}S(t)$ as the $d_{\text{i}}S(t)$ in Eq.
(\ref{diSb-two media}) for $\Sigma_{\text{B}}$, and also find that $d\xi$ is
given by Eq. (\ref{internal-dxi}).

It should be emphasized that the determination of $d_{\text{i}}S(t)$ in Eqs.
(\ref{diSb-two media}-\ref{diSc-two media}) is valid for all cases of $\Sigma$
interacting with $\widetilde{\Sigma}_{\text{h1}}$ and $\widetilde{\Sigma
}_{\text{h2}}$ as we have not imposed any conditions on $T_{1}(t)$ and
$T_{2}(t)$ with respect to $T_{01}$ and $T_{02}$, respectively.\ Thus it is
very general. The derivation also applies to the NEQ stationary state, which
happens when $T_{1}(t)\rightarrow T_{01}$ and $T_{2}(t)\rightarrow T_{02}$.
For the stationary case, using Eq. (\ref{diSc-two media}), we have%
\end{subequations}
\begin{equation}
d_{\text{i}}S^{\text{st}}=\left[  \beta_{01}-\beta_{02}\right]  dE_{\text{in}%
}, \label{diS-stationary}%
\end{equation}
where all quantities on the right have their steady values. Thus,
$d_{\text{i}}S^{\text{st}}$ is only determined by the stationary value of the
internal energy exchange $dE_{\text{in}}$. The reader can easily verify that
$d_{\text{i}}S(t)$ in Eqs. (\ref{diSb-two media}) also reduces to the above
result in the stationary limit.

From the above examples, we see that we can consider $\Sigma$\ in any of the
two realization $\Sigma_{\text{B}}$ and $\Sigma_{\text{C}}$ as we obtain the
same thermodynamics in that $d_{\text{i}}S(t)$ is identical. We emphasize this
important observation by summarizing it in the following conclusion.

\begin{conclusion}
\label{Conclusion-SigmaB-C} If we consider $\Sigma(t)$ as a single system
$\Sigma_{\text{B}}$ with an uniform temperature $T(t)$ and with an internal
variable $\xi(t)$, we do not need to consider the energy transfer
$dE_{\text{in}}(t)$ explicitly to obtain $d_{\text{i}}S(t)$. If we consider
$\Sigma(t)$ as a composite system $\Sigma_{\text{C}}$ formed of $\Sigma
_{1}(t)$ and $\Sigma_{2}(t)$ at their specific temperatures, then we
specifically need to consider the energy transfer $dE_{\text{in}}(t)$ to
obtain $d_{\text{i}}S(t)$ but no internal variable.
\end{conclusion}

This conclusion emphasizes the most important fact of the MNEQT that the
homogeneous thermodynamic temperature $T(t)$ of $\Sigma_{\text{B}}$\ can also
describe an inhomogeneous system $\Sigma_{\text{C}}$. This observation
justifies using the thermodynamic temperature $T(t)$ for treating $\Sigma(t)$
as a single system $\Sigma_{\text{B}}$, a black box, without any need to
consider the internal energy transfers.

The above discussion can be easily extended to also include inhomogeneities
such as two different work media $\widetilde{\Sigma}_{\text{w}}^{(1)}$ and
$\widetilde{\Sigma}_{\text{w}}^{(2)}$ corresponding to different pressures
$P_{01\text{ }}$and $P_{02}$. We will not do that here.

\subsection{$\Sigma$ Interacting with $\widetilde{\Sigma}_{\text{w}}$ and
$\widetilde{\Sigma}_{\text{h}}$}

In this case, $\Sigma$ is specified by two observables $E$ and $V$ so to
describe any inhomogeneity will require considering at least two subsystems
$\Sigma_{1}$ and $\Sigma_{2}$ specified by $E_{1},V_{1}$ and $E_{2,}V_{2}$,
respectively. From these four observables, we construct the following four
combinations%
\begin{align*}
E_{1}+E_{2}  &  =E,\xi_{\text{E}}=E_{1}-E_{2},\\
V_{1}+V_{2}  &  =V,\xi_{\text{V}}=V_{1}-V_{2},
\end{align*}
to express the entropy of the system%
\[
S(E,V,\xi_{\text{E}},\xi_{\text{V}})=S_{1}(E_{1},V_{1})+S_{2}(E_{2},V_{2})
\]
in terms of
\[
E_{1,2}=\frac{E\pm\xi_{\text{E}}}{2},V_{1,2}=\frac{V\pm\xi_{\text{V}}}{2}.
\]
Note that we have assumed that $\Sigma_{1}$ and $\Sigma_{2}$ are in EQ (no
internal variables for them). We now follow the procedure carried out in Sec.
\ref{Sec-Composite System} to identify thermodynamic temperature $T$, pressure
$P$, and affinities:%
\begin{equation}%
\begin{array}
[c]{c}%
\beta=\frac{(\beta_{1}+\beta_{2})}{2},\beta P=\frac{(\beta_{1}P_{1}+\beta
_{2}P_{2})}{2},\\
\beta A_{\text{E}}=\frac{(\beta_{1}-\beta_{2})}{2},\beta A_{\text{V}}%
=\frac{(\beta_{1}P_{1}-\beta_{2}P_{2})}{2}.
\end{array}
\label{System-Both Interactions}%
\end{equation}
All these quantities are SI-quantities and have the same values regardless of
whether $\Sigma$ is isolated or interacting. A more complicated
inhomogeneities will require more internal variables.

\begin{remark}
\label{Marker-Inhomogeneities}We now make an important remark about Eq.
(\ref{Irreversible EntropyGeneration-Complete}) that contains only a single
internal variable. From what is said above, it must include at least two
internal variables if $\Sigma$ contains inhomogeneity. in both $E$ and~$V$. If
it contains inhomogeneity. in only one variable, then and only then we will
have at least one internal variable. Thus, either we will $\xi_{\text{E}}$ or
$\xi_{\text{V}}$ as the case may be.
\end{remark}

\section{Tool-Narayanaswamy Equation \label{Sec-Tool-Narayan}}

We consider a simple NEQ\ laboratory problem to model the situation in a glass
\cite{Gujrati-I}. It is a system consisting of two "interpenetrating" parts at
different temperatures $T_{1}$ and $T_{2}>T_{1}$, but insulated from each
other so that they cannot come to equilibrium. The two parts are like slow and
fast motions in a glass, and the insulation allows us to treat them as
independent, having different temperatures. This is a very simple model for a
glass. A more detailed discussion is given elsewhere \cite{Gujrati-I}, where
each part was assumed to be in EQ macrostates. Here, we go beyond the earlier
discussion, and assume that the two parts are in some IEQ macrostates
$\mathcal{M}_{1}$ and $\mathcal{M}_{2}$ with temperatures $T_{1}$ and $T_{2}$,
respectively; we have suppressed "ieq" in the subscripts for simplicity. Thus,
there are irreversible processes going on within each part so that there are
nonzero irreversible macroheat $d_{\text{i}}Q_{1}$ and $d_{\text{i}}Q_{2}$
generated within each part. We wish to identify the temperature of the system
that we treat as a black box $\Sigma_{\text{B}}$. This will require
introducing its global temperature $T$. However, we also need to relate it to
$T_{1}$ and $T_{2}$ so that we need to treat $\Sigma$ as $\Sigma_{\text{C}}$.
We now imagine that each part is added a certain \emph{infinitesimal} amount
of exchange macroheat from outside, which we denote by $d_{\text{e}}Q_{1}$ and
$d_{\text{e}}Q_{2}$ so that $dQ_{1}=d_{\text{e}}Q_{1}+d_{\text{i}}Q_{1}$ and
$dQ_{2}=d_{\text{e}}Q_{2}+d_{\text{i}}Q_{2}$. This does not affect their
temperatures. We assume the entropy changes to be $dS_{1}$ and $dS_{2}$. Then,
we have for the net macroheat and entropy change
\[
dQ=dQ_{1}+dQ_{2},dS=dS_{1}+dS_{2}.
\]
We introduce the temperature $T$ by $dQ=TdS$. This makes it a thermodynamic
temperature of the black box. Using $dQ_{1}=T_{1}dS_{1},dQ_{2}=T_{2}dS_{2}$,
we immediately find

\qquad\qquad\qquad%
\[
dQ(1/T-1/T_{2})=dQ_{1}(1/T_{1}-1/T_{2}).
\]
By introducing $x=dQ_{1}/dQ$, which is determined by the setup, we find that
$T$ is given by%
\begin{equation}
\frac{1}{T}=\frac{x}{T_{1}}+\frac{1-x}{T_{2}}. \label{T_eff}%
\end{equation}
As $x$ is between $0$ and $1$, it is clear that $T$ lies between $T_{1}$ and
$T_{2}$ depending on the value of $x$. For $x=1/2$, this heuristic model
calculation reduces to that in Eq. (\ref{beta-A-Composite}) as expected. The
derivation also shows that the thermodynamic temperature $T$ is not affected
by having two nonoverlapping parts or overlapping parts. A similar relation
also exists for the pressure $P$ of a composite system; see Eq. (\ref{P_eff}).

If the insulation between the parts is not perfect, there is going to be some
energy transfer between the two parts, which would result in maximizing the
entropy of the system. As a consequence, their temperatures will eventually
become the same. During this period, $T$ will also change until all the three
temperatures become equal. This will require additional internal variable or
variables as in Sect. \ref{Sec-Composite System}.
\begin{figure}
[ptb]
\begin{center}
\includegraphics[
trim=0.199961in 0.000000in 0.401889in 0.000000in,
height=1.7158in,
width=3.0052in
]%
{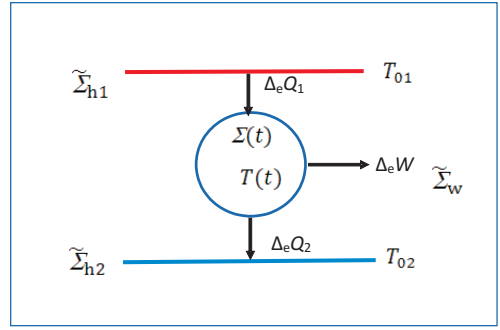}%
\caption{An irreversible Carnot cycle running between two heat reservoirs
$\widetilde{\Sigma}_{\text{h1}}$ and $\widetilde{\Sigma}_{\text{h2}}$.}%
\label{Fig-CarnotCycle}%
\end{center}
\end{figure}
\ 

\section{Irreversible Carnot Cycle\label{Sec-CarnotCycle}}

Let us consider an \textquotedblleft irreversible\textquotedblright\ Carnot
engine running between two heat sources $\widetilde{\Sigma}_{\text{h1}}$ and
$\widetilde{\Sigma}_{\text{h2}}$ as shown in Fig. \ref{Fig-CarnotCycle} that
are always maintained at fixed temperatures $T_{01}$ and $T_{02}$,
respectively, during each cyclic process $\mathcal{P}_{\text{cyc}}$. As
$\Sigma$ needs to perform work, we also need to consider it to be in constant
with a work source $\widetilde{\Sigma}_{\text{w}}$. We first observe the
following features of a reversible Carnot cycle. The system, which we take to
be formed by an ideal gas, starts in thermal contact with $\widetilde{\Sigma
}_{\text{h1}}$ in $\mathcal{A}_{\text{eq}}=\mathcal{M}_{\text{eq}}%
(T_{01},V_{1})$ as it expands to $V_{2}>V_{1}$, and ends in $\mathcal{M}%
_{\text{eq}}(T_{01},V_{2})$ through an isothermal process $\mathcal{P}%
_{1\text{eq}}$ resulting in $\Delta_{\text{e}}Q_{1\text{eq}}=\Delta_{\text{e}%
}W_{1\text{eq}}>0$. It is then detached from $\widetilde{\Sigma}_{\text{h1}}$
so no heat is exchanged ($\Delta_{\text{e}}Q_{2\text{eq}}=0$) but exchanges
work $\Delta_{\text{e}}W_{2\text{eq}}>0$\ during the process $\mathcal{P}%
_{2\text{eq}}$\ as it expands to $V_{3}$ and ends in $\mathcal{M}_{\text{eq}%
}(T_{02},V_{3})$ at temperature $T_{02}$. The system is brought in thermal
contact with $\widetilde{\Sigma}_{\text{h2}}$ now, and the volume is
compressed to $V_{4}$ isothermally during the process $\mathcal{P}%
_{3\text{eq}}$ and ends in $\mathcal{M}_{\text{eq}}(T_{02},V_{4})$. During
$\mathcal{P}_{3\text{eq}}$, $\Delta_{\text{e}}Q_{3\text{eq}}=\Delta_{\text{e}%
}W_{3\text{eq}}<0$. The choice of $V_{4}$ is chosen so that $\Sigma$ comes
back to $\mathcal{A}_{\text{eq}}=\mathcal{M}_{\text{eq}}(T_{01},V_{1})$ along
a process $\mathcal{P}_{4\text{eq}}$ after detaching it from $\widetilde
{\Sigma}_{\text{h2}}$ during which $\Delta_{\text{e}}Q_{4\text{eq}}=0$, but
$\Delta_{\text{e}}W_{4\text{eq}}<0$. The four segments bring back $\Sigma$ to
its starting state $\mathcal{A}_{\text{eq}}$, and form a cycle $\mathcal{P}%
_{\text{eq,cyc}}$. It is well known that the EQ efficiency $\epsilon
_{\text{eq}}$ of the Carnot cycle is%
\begin{subequations}
\begin{equation}
\epsilon_{\text{eq}}=1-T_{02}/T_{01}, \label{CarnotEfficiency-EQ}%
\end{equation}
so that
\begin{equation}
\Delta_{\text{e}}W_{\text{eq}}=\epsilon_{\text{eq}}\Delta_{\text{e}}Q_{1},
\label{Carnot-EQMacrowork}%
\end{equation}
the equilibrium macrowork obtained from the cycle for a given $\Delta
_{\text{e}}Q_{1}$.

We now consider an irreversible cyclic process $\mathcal{P}$, which consists
of the same four segments except that some or all may be irreversible. We have
discussed such a process in Sect. \ref{Sec-CyclicProcess}. However, to have a
cyclic process, the system must start and end in $\mathcal{M}_{\text{eq}%
}(T_{01},V_{1})$, which does not require any internal variable. Being an
irreversible process, there is no guarantee that $\Sigma$ would be in EQ
macrostates at the end of $\mathcal{P}_{1},\mathcal{P}_{2}$, and
$\mathcal{P}_{3}$; $\mathcal{P}_{4}$ must bring $\Sigma$ to the EQ initial
macrostate $\mathcal{M}_{\text{eq}}(T_{01},V_{1})$. However, we will simplify
the calculation here by assuming that the end states in $\mathcal{P}%
_{1},\mathcal{P}_{2}$, and $\mathcal{P}_{3}$ are $\mathcal{M}_{\text{eq}%
}(T_{01},V_{2}),\mathcal{M}_{\text{eq}}(T_{02},V_{3})$, and $\mathcal{M}%
_{\text{eq}}(T_{02},V_{4})$, respectively. However, relaxing this condition
does not change the results below.

Being a cyclic process, we have%
\end{subequations}
\begin{subequations}
\begin{equation}
\Delta_{\text{c}}E=\Delta_{\text{c}}S=0 \label{Carnot-E-S}%
\end{equation}
over $\mathcal{P}$. Thus, over $\mathcal{P}$,
\begin{equation}
\Delta_{\text{e}}Q=\Delta_{\text{e}}W. \label{Carnot-Qe-We}%
\end{equation}
In the MNEQT, we also have over $\mathcal{P}$,
\begin{equation}
\Delta Q=\Delta W. \label{Carnot-Q-W}%
\end{equation}

Let $\Delta_{\text{e}}Q_{1}=T_{01}\Delta_{\text{e}}S_{1}$ and $\Delta
_{\text{e}}Q_{3}=T_{02}\Delta_{\text{e}}S_{3}$ be the macroheat exchanges
during $\mathcal{P}_{1}$ and $\mathcal{P}_{3}$, and $\Delta_{\text{e}}%
Q=\Delta_{\text{e}}Q_{1}+\Delta_{\text{e}}Q_{3}$. \ Similarly, let
$\Delta_{\text{e}}W_{l}$ be the macrowork exchanges during $\mathcal{P}%
_{l},l=1,2,3,4$, and $\Delta_{\text{e}}W=%
{\textstyle\sum\nolimits_{l}}
\Delta_{\text{e}}W_{l}$, the net exchange macrowork. From $\Delta_{\text{c}%
}S=0$ follows
\end{subequations}
\[
\Delta_{\text{e}}Q_{1}/T_{01}+\Delta_{\text{e}}Q_{3}/T_{02}+\Delta_{\text{i}%
}S=0,
\]
which can be rewritten by simple manipulation as%
\begin{subequations}
\begin{equation}
\Delta_{\text{e}}W=\Delta_{\text{e}}Q_{1}(1-T_{02}/T_{01})-T_{02}%
\Delta_{\text{i}}S, \label{Carnot-Exchange Work}%
\end{equation}
where we have used the identity in Eq. (\ref{Carnot-Qe-We}). We can also
express it as%
\begin{equation}
\Delta_{\text{e}}W=\epsilon_{\text{eq}}\Delta_{\text{e}}Q_{1}-T_{02}%
\Delta_{\text{i}}S\leq\Delta_{\text{e}}W_{\text{eq}}.
\label{Carnot-Exchange Work-InEquality}%
\end{equation}
The efficiency of the irreversible Carnot cycle is given by%
\end{subequations}
\begin{equation}
\epsilon_{\text{irr}}=\epsilon_{\text{eq}}-T_{02}\Delta_{\text{i}}%
S/\Delta_{\text{e}}Q_{1}\leq\epsilon_{\text{eq}}. \label{CarnotEfficiency-NEQ}%
\end{equation}
We remark that $\epsilon_{\text{irr}}$ above is similar to the result obtained
by Eu \cite[see Eq. (7.139)]{Eu} in which the numerator of the last term is
identified as "the total dissipation" by Eu; however, the analysis is tedious
compared to the one given here.

The determination of $\Delta_{\text{i}}S$ requires the extended state space
$\mathfrak{S}_{\mathbf{Z}}$ needed for the four process segments. We will
focus on Eq. (\ref{Irreversible EntropyGeneration-Complete}). Let $T_{l}(t)$
and $P_{l}(t)$ be the temperature and pressure of $\Sigma$ along
$\mathcal{P}_{l}$, respectively, with $l=1,\cdots,4$ for $\mathcal{P}_{l}$. As
seen from Eq. (\ref{System-Both Interactions}), we need at least two internal
variables $\xi_{\text{E}l}$ and $\xi_{\text{V}l}$ along $\mathcal{P}_{l}$ that
are usually different along the four segments. The corresponding affinities
must vanish at the end points of each segment because they are EQ macrostates.
We will assume this to be the case, and introduce%
\[
\mathbf{A}_{l}(t)\cdot d\boldsymbol{\xi}_{l}(t)=A_{\text{E}l}(t)d\xi
_{\text{E}l}(t)+A_{\text{V}l}(t)d\xi_{\text{V}l}(t)
\]
for each segment. Then,%
\begin{align}
\Delta_{\text{i}}S_{l}  &  =\int_{\mathcal{P}_{l}}[(\beta_{l}(t)-\beta
_{0l})d_{\text{e}}Q_{l}(t)+(P_{l}(t)-P_{0l})dV(t)\label{DeltaS_i-Carnot}\\
&  +\beta_{l}(t)\mathbf{A}_{l}(t)\cdot d\boldsymbol{\xi}_{l}(t)],\nonumber
\end{align}
where $d_{\text{e}}Q_{l}(t)=0$ for $l=2,4$, and where $P_{0l}$ is the external
pressures of $\widetilde{\Sigma}_{\text{w}}$ along $\mathcal{P}_{l}$ and must
be the same as for the reversible Carnot engine. This then determines
$\Delta_{\text{i}}S$ so that $\epsilon_{\text{irr}}$ is determined.

Using $\Delta_{\text{i}}Q=\Delta Q-\Delta_{\text{e}}Q$, we also have%
\begin{equation}
\Delta_{\text{i}}W=%
{\textstyle\sum\limits_{l=1}^{4}}
{\textstyle\int\limits_{\mathcal{P}_{l}}}
\left[  T_{l}(t)d_{\text{i}}S_{l}(t)+(T_{l}(t)-T_{0l})d_{\text{e}}%
S_{l}(t)\right]  \label{Irre-Work-Carnot1}%
\end{equation}
Recognizing that $d_{\text{e}}S_{l}(t)$ is nonzero only for $l=1,3$, we can
also rewritten $\Delta_{\text{i}}W$ as%
\begin{equation}
\Delta_{\text{i}}W=%
{\textstyle\sum\limits_{l=1,3}}
{\textstyle\int\limits_{\mathcal{P}_{l}}}
(\beta_{0l}/\beta_{l}(t)-1)d_{\text{e}}Q_{l}(t)+%
{\textstyle\int\limits_{\mathcal{P}}}
T(t)d_{\text{i}}S(t). \label{Irre-Work-Carnot2}%
\end{equation}

It should be noted that nowhere did we use the vanishing of the affinities in
the EQ states at the end of $\mathcal{P}_{1},\mathcal{P}_{2}$, and
$\mathcal{P}_{3}$ so the calculation above is not limited by this requirement.
Thus, the results in this section are general.

\section{Origin of Friction and Brownian Motion\label{Sec-Friction}}

It is well known, see for example Kestin \cite[Secs. 4.7 and 5.12]{Kestin},
that there are several ways to incorporate friction in a system in
thermodynamics. This has to do with the difficulties in making an unambiguous
distinction between various possibilities of exchange macroheat in a process
$\mathcal{P}$. We overcome this problem by using the MNEQT in which this is
not an issue as both $dQ$ and $d_{\text{e}}Q$ are uniquely defined. We
identify the origin of friction in our approach
\cite{Gujrati-Entropy1,Gujrati-Entropy2,Gujrati-II,Gujrati-Heat-Work0,Gujrati-Heat-Work}
by considering relative motion between parts of a system or between the system
and the medium; see also \cite{Gujrati-LangevinEq,Landau}. Such a situation
arises during sudden mixing of fluids or in a Couette flow or when friction is
involved between two bodies. The origin of friction is also applicable to NEQ
terminal states of the process in the MNEQT.

\subsection{Piston-Gas System}

We consider the piston gas system in Fig. \ref{Fig_Piston-Spring}(a). As
discussed in Sec. \ref{Sec-Piston-Spring}, the system entropy in
$\mathcal{M}_{\text{ieq}}$ is $S(E,V,\mathbf{P}_{\text{gc}},\mathbf{P}%
_{\text{p}})$. Hence, the corresponding Gibbs fundamental relation becomes
\[
dS=\beta\lbrack dE+PdV-\mathbf{V}_{\text{gc}}\mathbf{\cdot}d\mathbf{P}%
_{\text{gc}}-\mathbf{V}_{\text{p}}\mathbf{\cdot}d\mathbf{P}_{\text{p}}],
\]
where we have used the conventional conjugate fields
\begin{equation}%
\begin{tabular}
[c]{c}%
$\beta\doteq\partial S/\partial E,\beta P\doteq\partial S/\partial V,,$\\
$\beta\mathbf{V}_{\text{gc}}\doteq-\partial S/\partial\mathbf{P}_{\text{gc}%
}\mathbf{,}\beta\mathbf{V}_{\text{p}}\doteq-\partial S/\partial\mathbf{P}%
_{\text{p}}$%
\end{tabular}
\ \ \ \ \ \ \ \ \ \ \ \ \ \ \label{Conjugate-Fields_Gas-Piston}%
\end{equation}
as shown by Landau and Lifshitz \cite{Landau-Fluid} and by us elsewhere
\cite[and references theirin]{Gujrati-II}. Using Eq.
(\ref{Stationary_Momentum_Condition}), we can rewrite this equation as%
\begin{equation}
dS=\beta\lbrack dE+PdV-\mathbf{V\cdot}d\mathbf{P}_{\text{p}}]
\label{Gibbs-Fundamental_Gas-Piston}%
\end{equation}
in terms of the $\emph{relative}$ \emph{velocity}, also known as the
\emph{drift velocity }$\mathbf{V\doteq V}_{\text{p}}-\mathbf{V}_{\text{gc}}$
of the piston with respect to $\Sigma_{\text{gc}}$. We can cast the drift
velocity term as $\mathbf{V\cdot}d\mathbf{P}_{\text{p}}\equiv\mathbf{F}%
_{\text{p}}\mathbf{\cdot}d\mathbf{R}$, where $\mathbf{F}_{\text{p}}\doteq
d\mathbf{P}_{\text{p}}\mathbf{/}dt$ is the \emph{force}\ (a macroforce in
nature) and $d\mathbf{R=V}dt$ is the \emph{relative displacement} of the piston.

The internal motions of $\Sigma_{\text{gc}}$\ and $\Sigma_{\text{p}}$\ is not
controlled by any external agent so the relative motion described by the
relative displacement $\mathbf{R}$ represents an \emph{internal variable
}\cite{Kestin} so that the corresponding affinity $\mathbf{F}_{\text{p}0}=0$
for $\widetilde{\Sigma}$. Because of this, the first law $dE=T_{0}d_{\text{e}%
}S-P_{0}dV$ as given in Eq. (\ref{FirstLaw-MI}) does not involve the relative
displacement $\mathbf{R}$. We now support this claim using our approach in the
following. This also shows how $\mathcal{H}(\left.  \mathbf{x}\right\vert
V,\mathbf{P}_{\text{gc}},\mathbf{P}_{\text{p}})$ develops a dependence on the
internal variable $\mathbf{R}$. We manipulate $dS$ in Eq.
(\ref{Gibbs-Fundamental_Gas-Piston}) by using the above first law for $dE$ so
that
\[
TdS=T_{0}d_{\text{e}}S+(P-P_{0})dV-\mathbf{F}_{\text{p}}\mathbf{\cdot
}d\mathbf{R,}%
\]
which reduces to%
\[
T_{0}d_{\text{i}}S=(T_{0}-T)dS+(P-P_{0})dV-\mathbf{F}_{\text{p}}\mathbf{\cdot
}d\mathbf{R.}%
\]
This equation expresses the irreversible entropy generation as sum of three
distinct and independent irreversible entropy generations. To comply with the
second law, we conclude that for $T_{0}>0$,%
\begin{equation}
(T_{0}-T)dS\geq0,(P-P_{0})dV\geq0,\mathbf{F}_{\text{p}}\mathbf{\cdot
}d\mathbf{R\leq}0, \label{SecondLaw-Consequences}%
\end{equation}
which shows that each of the components of $d_{\text{i}}S$\ is nonnegative. In
equilibrium, each irreversible component vanishes, which happens when%
\begin{equation}
T\rightarrow T_{0},P\rightarrow P_{0}\text{, and }\mathbf{V}\rightarrow0\text{
or }\mathbf{F}_{\text{p}}\rightarrow0. \label{Equilibrium-Piston}%
\end{equation}
The inequality $\mathbf{F}_{\text{p}}\mathbf{\cdot}d\mathbf{R\leq}0$ shows
that $\mathbf{F}_{\text{p}}$ and $d\mathbf{R}$ are antiparallel, which is what
is expected of a \emph{frictional} force $\mathbf{F}_{\text{fr}}$. Thus, we
can identify $\mathbf{F}_{\text{p}}$ with $\mathbf{F}_{\text{fr}}$, and the
irreversible frictional macrowork $d_{\text{i}}W_{\text{fr}}$ done during this
motion by%
\begin{equation}
d_{\text{i}}W_{\text{fr}}\doteq-\mathbf{V}(t)\cdot d\mathbf{P}(t)=-d\mathbf{R}%
(t)\cdot\mathbf{F}_{\text{fr}}(t)>0. \label{FrictionalWork}%
\end{equation}
This macrowork, which appears as part of $d_{\text{i}}W$ in our approach,
vanishes as the motion ceases so that the equilibrium value $\mathbf{V}_{0}$
of $\mathbf{V}(t)$ is $\mathbf{V}_{0}=0$ just as the equilibrium affinity
$\mathbf{A}_{0}=0$ for $\mathbf{\xi}$. This causes the piston to finally come
to rest. As $\mathbf{F}_{\text{fr}}$ and $\mathbf{V}$ vanish together, we can
express this force as
\begin{equation}
\mathbf{F}_{\text{fr}}=-\mu\mathbf{V}f(\mathbf{V}^{2}),
\label{Friction-GeneralForm}%
\end{equation}
where $\mu>0$ and $f$ is an \emph{even} function of $\mathbf{V}$. The medium
$\widetilde{\Sigma}$ is specified by $T=T_{0},P=P_{0}$ and $\mathbf{V}_{0}=0$
or $\mathbf{F}_{\text{p}}=0$. We will take $\mathbf{F}_{\text{fr}}$ and
$d\mathbf{R}$ to be colinear and replace $\mathbf{F}_{\text{fr}}\mathbf{\cdot
}d\mathbf{R}$ by $-F_{\text{fr}}dx$ ($F_{\text{fr}}dx\geq0$), where $dx$ is
the magnitude of the relative displacement $d\mathbf{R}$. The sign convention
is that $F_{\text{fr}}$ and increasing $x$ point in the same direction. We can
invert Eq. (\ref{Gibbs-Fundamental_Gas-Piston}) to obtain%
\begin{equation}
dE=TdS-PdV-F_{\text{fr}}dx \label{Gibbs-Fundamental-Energy_Gas-Piston}%
\end{equation}
in which $dQ=TdS$ from our general result in Eq. (\ref{dQ-dS}). Comparing the
above equation with the first law in Eq. (\ref{FirstLaw-SI}), we conclude
that
\begin{equation}
dW=PdV+F_{\text{fr}}dx. \label{Work-Friction}%
\end{equation}
The important point to note is that the friction term $F_{\text{fr}}dx$
properly belongs to $dW$.\ As $d_{\text{e}}W=P_{0}dV$, we have%
\begin{equation}
d_{\text{i}}W=(P-P_{0})dV+F_{\text{fr}}dx. \label{Irreversible_Work-Piston}%
\end{equation}
Both contributions in $d_{\text{i}}W$ are separately nonnegative.

We can determine the exchange macroheat $d_{\text{e}}Q=dQ-d_{\text{i}}W$%
\begin{equation}
d_{\text{e}}Q=TdS-(P-P_{0})dV-F_{\text{fr}}dx \label{ExchangeHeat-Friction}%
\end{equation}
\ 

It should be emphasized that in the above discussion, we have not considered
any other internal motion such as between different parts of the gas besides
the relative motion between $\Sigma_{\text{gc}}$ and $\Sigma_{\text{p}}$.
These internal motions within $\Sigma_{\text{g}}$ can be considered by
following the approach outlined elsewhere \cite{Gujrati-II}. We will not
consider such a complication here.

\subsection{Particle-Spring-Fluid System}

It should be evident that by treating the piston as a mesoscopic particle such
as a pollen or a colloid, we can treat its thermodynamics using the above
procedure. This allows us to finally make a connection with the system
depicted in Fig. \ref{Fig_Piston-Spring}(b) in which the particle (a pollen or
a colloid) is manipulated by an external force $F_{0}$. We need to also
consider two additional forces $F_{\text{s}}$ and $F_{\text{fr}}$, both
pointing in the same direction as increasing $x$; the latter is the frictional
force induced by the presence of the fluid in which the particle is moving
around. The analog of Eq. (\ref{Irreversible_Work-Piston}) for this case
becomes%
\begin{equation}
d_{\text{i}}W=(F_{\text{s}}+F_{0})dx+F_{\text{fr}}dx=F_{\text{t}}dx,
\label{diW-Particle-Spring}%
\end{equation}
where $F_{\text{t}}=F_{\text{s}}+F_{0}+F_{\text{fr}}$ is the net force. The
other two macroworks are $dW=(F_{\text{s}}+F_{\text{fr}})dx$ and
$d\widetilde{W}=F_{0}dx=-d_{\text{e}}W$. In EQ, $F_{\text{fr}}=0$ and
$F_{\text{s}}+F_{0}=0$ ($F_{0}\neq0$) to ensure $d_{\text{i}}W=0$.

\subsection{Particle-Fluid System}

In the absence of a spring in the previous subsection, we must set
$F_{\text{s}}=0$ so
\begin{equation}
dW=F_{\text{fr}}dx,d\widetilde{W}=F_{0}dx=-d_{\text{e}}W,d_{\text{i}}%
W=(F_{0}+F_{\text{fr}})dx. \label{diW-Particle-Fluid}%
\end{equation}
In EQ, $F_{0}+F_{\text{fr}}=0$ so that $F_{\text{fr}}=-F_{0}$. This means that
in EQ, the particle's nonzero terminal velocity is determined by $F_{0}$ as
expected.
\begin{figure}
[ptb]
\begin{center}
\includegraphics[
height=1.9147in,
width=2.5918in
]%
{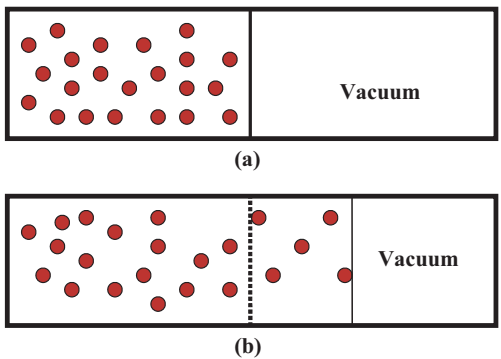}%
\caption{Free expansion of a gas. The gas is confined to the left chamber,
which is separated by a hard partition (shown by a solid black vertical line)
from the vacuum in the right chamber as shown in (a). At time $t=0$, the
partition is removed abruptly as shown by the broken line in its original
place in (b). The gas expands in the empty space, devoid of matter and
radiation, on the right but the expansion is gradual as shown by the solid
front, which separates it from the vacuum on its right. We can also think of
the hard partition in (a) as a piston, which maintains the volume of the gas
on its left. The piston can be moved slowly or rapidly to the right within the
right chamber with a pressure $P_{0}<P$ to change this volume. The free
expansion occurs when the piston moves extremely (infinitely) fast by letting
$P_{0}\rightarrow0$.}%
\label{Fig_Expansion}%
\end{center}
\end{figure}

\section{Free Expansion\label{Sec-Free Expansion}}

What makes NEQ thermodynamics complicated than EQ thermodynamics is the
evaluation of nonzero irreversible entropy generation $d_{\text{i}}S(t)\geq0$.
As $d_{\text{i}}S^{\text{Q}}(t)$ and $d_{\text{i}}S^{\text{W}}(t)$ are
independent contributions, it is simpler to consider an isolated system for
which $d_{\text{i}}S^{\text{Q}}(t)\equiv0$ so that we only deal with
$d_{\text{i}}S^{\text{W}}(t)$ in Eq. (\ref{IrreversibleEntropy}). Then the use
of Eq. (\ref{EntropyDiff-Isolated}) allows us to determine the temperature of
the system in any arbitrary macrostate. In free expansion, there is no
exchange of any kind so $d=d_{i}$. This simplifies our notation as we do not
need to use $d_{\text{i}}$ when referring to $\Sigma$, which we will do in
this section.

\subsection{Classical Free Expansion in $\mathfrak{S}_{\mathbf{Z}}%
$\label{Sec-FreeExpansion-MNEQT}}

The gas, which forms $\Sigma$, expands freely in a vacuum ($\widetilde{\Sigma
}$) from $V_{\text{in}}$, the volume of the left chamber, to $V_{\text{fin}%
}=2V_{\text{in}}$, the volume of $\Sigma_{0}$; the volume of the right chamber
is $V_{\text{fin}}-V_{\text{in}}=V_{\text{in}}$. The initial and final
macrostate are denoted by $\mathcal{A}_{\text{eq}}$\ and $\mathcal{B}%
_{\text{eq}}$. The vacuum exerts no pressure ($\widetilde{P}=P_{\text{vacuum}%
}=0$). The left (L) and right (R) chambers are initially separated by an
impenetrable partition, shown by the solid partition in Fig.
\ref{Fig_Expansion}(a), to ensure that they are thermodynamically independent
regions, with all the $N$ particles of $\Sigma$ in the left chamber, which are
initially in an EQ macrostate $\mathcal{A}_{\text{eq}}$ with entropy
$S_{\text{in}}$. For ideal gas, we have \cite{Landau}%
\[
S(E,V)=N\ln(eV/N)+f(E),
\]
where $N$ is kept as a suffix for a reason that will become evident below. The
initial pressure and temperature of the gas prior to expansion in
$\mathcal{A}_{\text{eq}}$ at time $t=0$ are $P_{\text{in}}$ and $T_{\text{in}%
}=T_{0}$, respectively, that are related to $E_{0}=E_{\text{in}}$ and
$V_{\text{in}}$ by its EQ equation of state. A similar set of quantities also
pertain to $\mathcal{B}_{\text{eq}}$. As $\Sigma_{0}$ is isolated, the
expansion occurs at \emph{constant} energy $E_{0}$, which is also the energy
of $\Sigma$.

It should be stated, which is also evident from Fig. \ref{Fig_Expansion}(b),
that while the removal of the partition is instantaneous, the actual process
of gas expanding in the right chamber is continuous and gradually fills it.
This is obviously a very complex internal process in a highly inhomogeneous
macrostate. As thus, it will require many internal variables to describe
different number of particles, different energies, different pressures,
different flow pattern which may be even chaotic, etc. in each of the
chambers. For example, we can divide the volume $V_{\text{fin}}$ into many
layers of volume parallel to the partition, each layer in equilibrium with
itself but need not be with others; see the example in Sec.
\ref{Sec-InternalVariables}. As our aim is to show the feasibility of the
MNEQT in this investigation, we will simplify the situation by limiting to two
internal variables. The first internal variable $\xi_{\text{N}}$
\ \ \ \ \ \ \ \ \ \ \ \ \ \ \ \ \ \ \ \ \ \ \ \ \ \ \ \ \ \ \ \ \ \ \ \ \ \ \ \ \ \ \ \ \ \ \ \ \ \ \ \ \ \ \ \ \ \ \ \ \ \ \ \ \ \ \ \ \ \ \ \ \ \ \ \ \ \ \ \ \ \ \ \ \ \ \ \ \ \ \ \ \ \ \ \ \ \ \ \ \ \ \ \ \ \ \ \ \ \ \ \ \ \ \ \ \ \ \ \ \ \ \ \ \ \ \ \ \ \ \ \ \ \ \ \ \ \ \ \ \ \ \ \ \ \ \ \ \ \ \ \ \ \ \ \ \ \ \ \ \ \ \ \ \ \ \ \ \ \ \ \ \ \ \ \ \ \ \ \ \ \ \ \ \ \ \ \ \ \ \ \ \ \ \ \ \ \ \ \ \ \ \ \ \ \ \ \ \ \ \ \ \ \ \ \ \ \ \ \ \ \ \ \ \ \ \ \ \ \ \ \ \ \ \ \ \ \ \ \ \ \ \ \ \ \ \ \ \ \ \ \ \ \ \ \ \ \ \ \ \ \ \ \ \ \ \ \ \ \ \ \ \ \ \ \ \ \ \ \ \ \ \ \ \ \ \ \ \ \ \ \ \ \ \ \ \ \ \ \ \ \ \ \ \ \ \ \ \ \ \ \ \ \ \ \ \ \ \ \ \ \ \ \ \ \ \ \ \ \ \ \ \ \ \ \ \ \ \ \ \ \ \ \ \ \ \ \ \ \ \ \ \ \ \ \ \ \ \ \ \ \ \ \ \ \ \ \ \ \ \ \ \ \ \ \ \ \ \ \ \ \ \ \ \ \ \ \ \ \ \ \ \ \ \ \ \ \ \ \ \ \ \ \ \ \ \ \ \ \ \ \ \ \ \ \ \ \ \ \ \ \ \ \ \ \ \ \ \ \ \ \ \ \ \ \ \ \ \ \ \ \ \ \ \ \ \ \ \ \ \ \ \ \ \ \ \ \ \ \ \ \ \ \ \ \ \ \ \ \ \ \ \ \ \ \ \ \ \ \ \ \ \ \ \ \ \ \ \ \ \ \ \ \ \ \ \ \ \ \ \ \ \ \ \ \ \ \ \ \ \ \ \ \ \ \ \ \ \ \ \ \ \ \ \ \ \ \ \ \ \ \ \ \ \ \ \ \ \ \ \ \ \ \ \ \ \ \ \ \ \ \ \ \ \ \ \ \ \ \ \ \ \ \ \ \ \ \ \ \ \ \ \ \ \ \ \ \ \ \ \ \ \ \ \ \ \ \ \ \ \ \ \ \ \ \ \ \ \ \ \ \ \ \ \ \ \ \ \ \ \ \ \ \ \ \ \ \ \ \ \ \ \ \ \ \ \ \ \ \ \ \ \ \ \ \ \ \ \ \ \ \ \ \ \ \ \ \ \ \ \ \ \ \ \ \ \ \ \ \ \ \ \ \ \ \ \ \ \ \ \ \ \ \ \ \ \ \ \ \ \ \ \ \ \ \ \ \ \ \ \ \ \ \ \ \ \ \ \ \ \ \ \ \ \ \ \ \ \ \ \ \ \ \ \ \ \ \ \ \ \ \
\begin{equation}
\xi_{\text{N}}\doteq N_{\text{R}}/N \label{xi-free-expansion}%
\end{equation}
is obtained by considering only two layers to describe different numbers
$N_{\text{L}}=(1-\xi_{\text{N}})N$ particles to left and $N_{\text{R}}%
=N\xi_{\text{N}}$ particles to the right of the chamber partition in Fig.
\ref{Fig_Expansion}(a) as a function of time. Initially, $\xi_{\text{N}}=0$
and finally at EQ, $\xi_{\text{N}}=1/2$. At each instant, we imagine a front
of the expanding gas shown by the solid vertical line in Fig.
\ref{Fig_Expansion}(b) containing all the particles to its left. We denote
this volume by a time-dependent $V=V(t)$ to the right of which exists a
vacuum. This means that at each instant when there is a vacuum to the right of
this front, the gas is expanding against zero pressure so that $d_{\text{e}%
}W=0$. Since we have a NEQ expansion, $dW>0$. As $V(t)$ cannot be controlled
externally, it can be used to determine another internal variable by using
$V^{\prime}=V-V_{\text{in}}$:
\[
\xi_{\text{V}}\doteq V^{\prime}/V=1-V_{\text{in}}/V,
\]
so that $V^{\prime}=\xi_{\text{V}}V$ and $V_{\text{in}}=(1-\xi_{\text{V}})V$.
Initially, $\xi_{\text{V}}=0$ and finally at EQ, $\xi_{\text{V}}=1/2$. The
choice of the two internal variables $\xi_{\text{N}}(t)$\ and $\xi_{\text{V}%
}(t)$\ follows the procedure in Sect. \ref{Sec-Tool-Narayan} for two
subsystems of different sizes, and allow us to distinguish between
$\mathcal{P}^{\prime}$ and $\mathcal{P}^{\prime\prime}$ as we will see below.
We assume that the expansion is isothermal (which it need not be) so there is
no additional internal variable associated with temperature variation. As
$dQ=dW>0$, the expansion is irreversible so the entropy continues to increase.

At $t=0$, the partition is suddenly removed, shown by the broken partition in
Fig. \ref{Fig_Expansion}(b) and the gas expands freely to the final volume
$V(t^{\prime})=V_{\text{fin}}$ at time $t^{\prime}<\tau_{\text{eq}}$ during
$\mathcal{P}^{\prime}$. At $t^{\prime}$, the free expansion stops but there is
no reason a priori for $\xi_{\text{N}}=0$ so the gas is still inhomogeneous
($\xi_{\text{N}}\neq0$). This is in a NEQ macrostate until $\xi_{\text{N}}$
achieves its EQ value $\xi_{\text{N}}=0$ during $\mathcal{P}^{\prime\prime}$,
at the end of which at $t=\tau_{\text{eq}}$ the gas eventually comes into
$\mathcal{B}_{\text{eq}}$ isoenergetically. The complete process is
$\overline{\mathcal{P}}=\mathcal{P}^{\prime}\cup\mathcal{P}^{\prime\prime}%
$\ between $\mathcal{A}_{\text{eq}}$ and $\mathcal{B}_{\text{eq}}$.\ We
briefly review this expansion in the MNEQT \cite{Gujrati-Heat-Work}.

We work in the extended state space with the two internal variables, which we
denote simply by $\mathfrak{S}$ here. Using Eq. (\ref{FirstLaw-SI}), we have%
\begin{subequations}
\begin{equation}
dS(t)=dW(t)/T(t). \label{dS-dW-Free Expansion}%
\end{equation}
Setting $P_{0}=0$ in Eq. (\ref{MI-Work}), we have
\begin{equation}
dW(t)=\left\{
\begin{array}
[c]{c}%
P(t)dV(t)+\mathbf{A}(t)\cdot d\boldsymbol{\xi}(t)\text{ for }t<t^{\prime}%
<\tau_{\text{eq}},\\
A(t)d\xi(t)\text{ for }t^{\prime}<t\leq\tau_{\text{eq}};
\end{array}
\right.  \label{dW-Free Expansion}%
\end{equation}
here, we have used the fact that $V(t)$ does not change for $\tau^{\prime
}<t\leq\tau_{\text{eq}}$. Thus,
\end{subequations}
\begin{align*}
\Delta S  &  =\int_{\overline{\mathcal{P}}}\frac{dW(t)+\mathbf{A}(t)\cdot
d\boldsymbol{\xi}(t)}{T(t)}>0,\\
\Delta Q  &  =\int_{\overline{\mathcal{P}}}dW(t)=\Delta W>0;
\end{align*}
the last equation is the fundamental identity in Eq. (\ref{diQ-diW-EQ}). The
irreversible entropy change $\Delta S$\ from EQ macrostate from $\mathcal{A}%
_{\text{eq}}$ to $\mathcal{B}_{\text{eq}}$ is the EQ entropy change
$\Delta_{\text{i}}S$ is%
\begin{equation}
\Delta S\equiv S_{\text{fin}}-S_{\text{in}}, \label{Entropy-Change-Eq0}%
\end{equation}
and can be directly obtained since the EQ entropy $S(E,V)$ is known. The above
analysis is also valid for any arbitrary free expansion process $\mathcal{P}$
and not just $\overline{\mathcal{P}}$\ as we have not used any information yet
about $\mathcal{A}_{\text{eq}}$ to $\mathcal{B}_{\text{eq}}$.

For $V_{\text{fin}}=2V_{\text{in}}$, $\Delta_{\text{i}}S=N\ln2$, a well-known
result \cite{Prigogine}. Here, we provide a more general result for the
entropy for $t\leq t^{\prime}$, which can be trivially determined:%
\[
S(\xi_{\text{V}},\xi_{\text{N}})=N_{\text{L}}\ln(eV_{\text{in}}/N_{\text{L}%
})+N_{\text{R}}\ln(eV^{\prime}/N_{\text{R}})+f(E_{0}).
\]
Thus, for arbitrary $\xi_{\text{V}}$ and $\xi_{\text{N}}$, we have
$\Delta_{\text{i}}S(\xi_{\text{V}},\xi_{\text{N}})=S(\xi_{\text{V}}%
,\xi_{\text{N}})-S_{\text{in}}$. We can determine the two affinities. A simple
calculation gives%
\begin{align*}
A_{\text{V}}/T  &  =\partial S/\partial\xi_{\text{V}}=1-\xi_{\text{N}}%
-\xi_{\text{V}},\\
A_{\text{N}}/T  &  =\partial S/\partial\xi_{\text{N}}=\ln\frac{(1-\xi
_{\text{N}})\xi_{\text{V}}}{(1-\xi_{\text{V}})\xi_{\text{N}}}.
\end{align*}
We see that $A_{\text{V}}$ does not vanish when $V^{\prime}=V_{\text{fin}}$ as
discussed above. It is easy to verify that $\mathbf{A}$ vanish in
$\mathcal{B}_{\text{eq}}$. The pressure of the expanding gas is obtained by
using the derivative $\partial S/\partial V$ as usual. A simple calculation
yields%
\begin{align}
\beta P  &  =(1-\xi_{\text{V}})\frac{N_{\text{L}}}{V_{\text{in}}}%
+\xi_{\text{V}}\frac{N_{\text{R}}}{V^{\prime}}\label{P_eff}\\
&  =(1-\xi_{\text{V}})\beta P_{\text{L}}+\xi_{\text{V}}\beta P_{\text{R}%
}.\nonumber
\end{align}
The last expression for pressure has a close similarity with the
Toll-Narayanaswamy equation (\ref{T_eff}), which should not be surprising.
Before expansion, we have $\beta P_{\text{in}}=N/V_{\text{in}}$ in
$\mathcal{A}_{\text{eq}}$ and $\beta P_{\text{fin}}=N/V_{\text{fin}}$ in
$\mathcal{B}_{\text{eq}}$\ as expected. At EQ in $\mathcal{B}_{\text{eq}}$,
the entropy is given by $S_{\text{fin}}=N\ln(2eV_{\text{in}}/N)$, which gives
$\Delta_{\text{i}}S=N\ln2$, as expected. We can also take the initial
macrostate to be not an EQ one in $\mathcal{P}$ by using one or more
additional internal variables. Thus, the approach is very general.

\subsection{Quantum Free Expansion \label{Sec-QuantumExpansion}}

The sudden expansion has been studied
\cite{Bender,Gujrati-QuantumHeat,Gujrati-JensenInequality} quantum
mechanically (without any $\xi_{\text{N}}$) as a particle in an isolated box
$\Sigma_{0}$\ of length $L_{\text{fin}}$, which we restrict to $2L_{\text{in}%
}$ here, with rigid, insulating walls. We briefly revisit this study and
expand on it by introducing a $\xi_{\text{N}}$ to parallel the study of the
classical expansion above but using the $\mu$NEQT. Thus, we will closely
follow the microstates and follow Ref. \cite{Gujrati-JensenInequality} closely.

We make the very simplifying assumptions in the previous section to introduce
$\xi_{\text{N}}$. At time $t=0$, all the $N$ particles (or their
wavefunctions) are confined in EQ in the left chamber of length $L_{\text{in}%
}$ so that $N_{\text{L}}=N$ initially. We can think of an intermediate length
$L_{\text{fin}}\geq L(t)>L_{\text{in}}$, in analogy with $V(t)$ in the
previous section, so that $N_{\text{R}}=N-N_{\text{L}}$ particles are
simultaneously confined in the intermediate chamber of size $L(t)$, while
$N_{\text{L}}$ particles are still confined in the left chamber for all $t>0$.
This is slightly different from what we did in the previous section.
Eventually, at $t=\tau_{\text{eq}}$, all the $N_{\text{R}}=N$ particles are
confined in the larger chamber of size $L_{\text{fin}}$ so that there are no
particles are confined in the initial chamber. We let $\xi_{\text{N}%
}=N_{\text{R}}/N$, which gradually increases from $\xi_{\text{N}}=0$ to
$\xi_{\text{N}}=1$. Note that this definition is different from the previous
section but we make this choice for the sake of simplicity.\ At some
intermediate time $\tau^{\prime}<\tau_{\text{eq}}$ that identifies
$\mathcal{P}^{\prime}$, $L(t)=L_{\text{fin}}$, but $N_{\text{R}}$ is still not
equal to $N$ ($\xi_{\text{N}}\neq0$). We then follow its equilibration during
$\mathcal{P}^{\prime\prime}$ as the gas come to EQ in the larger chamber at
the end of $\overline{\mathcal{P}}$ when $\xi_{\text{N}}=1$. Again, there are
two internal variables $L$ and $\xi_{\text{N}}$. The expansion is isoenergetic
at each instant. As we will see below, this means that it is also isothermal.
However, $dQ=dW\neq0$ ensuring a irreversible process so the microstate
probabilities continue to change.

Since we are dealing with an ideal gas, we can focus on a single particle
whose energy levels are in appropriate units $E_{k}=k^{2}/l^{2}$, where $l$ is
the length of the chamber confining it. The single-particle partition function
for arbitrary $l$ and inverse temperature $\beta=1/T$\ is given by
\[
Z(\beta,l)=%
{\textstyle\sum\nolimits_{k}}
e^{-\beta E_{k}(l)},
\]
from which we find that the single particle free energy is $\overline
{F}=-(T/2)\ln(\pi Tl^{2}/4)$ and the average single particle energy is
$E=1/2\beta$, which depends only on $\beta$ but not on $l$. Assuming that the
gas is in IEQ so that the particles in each of the two chambers are in EQ (see
the second example in Sec. \ref{Sec-InternalVariables}) at inverse
temperatures $\beta_{\text{L}}$ and $\beta$, we find that the $N$-particle
partition function is given by%
\[
Z_{N}(\beta_{\text{L}},\beta)=\left[  Z(\beta_{\text{L}},L_{\text{in}%
})\right]  ^{N(1-\xi_{\text{N}})}\left[  Z(\beta,L)\right]  ^{N\xi_{\text{N}}}%
\]
so that the average energy is $E_{N}(\beta_{\text{L}},\beta,L_{\text{in}%
},L,\xi_{\text{N}})=N(1-\xi_{\text{N}})/2\beta_{\text{L}}+N\xi_{\text{N}%
}/2\beta$. As this must equal $N/2\beta_{0}$ for all values of $L$ and
$\xi_{\text{N}}$, it is clear that $\beta_{\text{L}}=\beta=\beta_{0}$, which
proves the above assertion of an isothermal free expansion at $T_{0}$.

To determine $\Delta W_{k}$, we merely have to determine the microenergy
change $\Delta E_{k}=E_{k\text{,fin}}-E_{k\text{,in}}$
\cite{Gujrati-GeneralizedWork,Gujrati-JensenInequality}.

Below we will show that the quantum calculation here deals with an
irreversible $\overline{\mathcal{P}}$. The single-particle energy change
$\Delta E_{k}$ is
\[
\Delta E_{k}=k^{2}(1/L^{2}-1/L_{\text{in}}^{2})<0,L>L_{\text{in}}.
\]
The micropressure%
\begin{equation}
P_{k}=-\partial E_{k}/\partial L=2E_{k}/L\neq0 \label{P_k-FreeExpansion}%
\end{equation}
determines the microwork%
\begin{equation}
\Delta W_{k}=%
{\displaystyle\int\nolimits_{L_{\text{in}}}^{L_{\text{fin}}}}
P_{k}dL>0. \label{DW-FreeExpansion}%
\end{equation}
It is easy to see that this microwork is precisely equal to \ $(-\Delta
E_{k})$ as expected. It is also evident from Eq. (\ref{P_k-FreeExpansion})
that for each $L$ between $L_{\text{in}}$ and $L_{\text{fin}}$,
\[
P=%
{\textstyle\sum\nolimits_{k}}
p_{k}P_{k}=2E/L\neq0,
\]
We can use this average pressure to calculate the thermodynamic macrowork
\[
\Delta W=%
{\displaystyle\int\nolimits_{L_{\text{in}}}^{L_{\text{fin}}}}
PdL=2%
{\textstyle\sum\nolimits_{k}}
{\displaystyle\int\nolimits_{L_{\text{in}}}^{L_{\text{fin}}}}
p_{k}E_{k}dL/L\neq0.
\]
as expected. As $\Delta E=0$, this means that the irreversible macroheat and
macrowork are nonnegative and equal: $\Delta Q=\Delta W>0$. This establishes
that the expansion we are studying is \emph{irreversible}.

We now turn to the entire system in which the work is done by $N_{\text{R}}$
particles occupying the larger box. We need to think of the microstate index
$k$ as an $N$-component vector $\mathbf{k}=\left\{  k_{i}\right\}  $ denoting
the indices for the single-particle microstates. For a given $\xi_{\text{N}}$,
we have $\Delta W_{\mathbf{k}}(L,\xi_{\text{N}})=-%
{\textstyle\sum\nolimits_{i}}
\Delta E_{k_{i}}$, where $i$ runs over the $N_{\text{R}}$\ particles. We can
compute the macrowork, which turns out to be $\Delta W_{N}(\xi_{\text{N}%
})=N\xi_{\text{N}}\Delta W>0$. The corresponding change in the free energy is
\begin{align*}
\Delta\overline{F}_{N}(L,\xi_{\text{N}})  &  =N\xi_{\text{N}}[\overline
{F}(\beta_{0},L)-\overline{F}(\beta_{0},L_{\text{in}})]\\
&  =-\Delta W_{N}(\xi_{\text{N}}),
\end{align*}
which is consistent with Eq. (\ref{Irrev-Work-Applied-DelF}) for an isolated
system for any $\xi_{\text{N}}$.

At the end of $\mathcal{P}_{0}$, $\Delta W_{N}(0)=N\Delta W>0$, and
$\Delta\overline{F}_{N}(0)=N[\overline{F}(\beta_{0},L_{\text{fin}}%
)-\overline{F}(\beta_{0},L_{\text{in}})]$. We find that for the isothermal
expansion
\begin{equation}
\Delta W_{N}=-\Delta\overline{F}_{N}=T_{0}\Delta_{\text{i}}S_{N}>0.
\label{DW-DF-DiS}%
\end{equation}
after using Eq. (\ref{EntropyDiff-Isolated}). The same result is also obtained
from the classical isothermal expansion; see Eq.
(\ref{Irrev-Work-Applied-DelF-Del_i_S}). All this is in accordance with
Theorem \ref{Theorem-diW-diS-Isolated} in the MNEQT, as expected.

\section{Discussion and Conclusions\label{Sec-Conclusions}}

As we noted in Sect. \ref{Sec-Introduction}, thermodynamics is a science of
entropy and temperature. As these macroquantities should uniquely describe the
system, we have required them to SI-quantities in developing the new NEQT,
called the MNEQT, to go beyond the EQ\ thermodynamics. We will now briefly
summarize and discuss our conclusions form this thermodynamics. We will
consider them separately.

\subsection{Unique NEQ $S$ in $\mathfrak{S}_{\mathbf{Z}}$\ }

We first point out the important consequence of the restriction imposed by
quasi-independence discussed in Sects. \ref{Sec-Notation} and
(\ref{Sec-Unique-S-T}). By always dealing with the SI-entropy $S$, which we
have shown to be identical to the statistical quantity $\mathcal{S}$ in all
cases, we can appreciate the concept of quasi-independence by considering
$p_{k}$ that appears in Eq. (\ref{Gibbs_Formulation}). Considering $\Sigma$ to
consist of two subsystems $\Sigma_{1}$ and $\Sigma_{2}$, which are in
macrostates $\mathcal{M}_{1}\doteq\left\{  \mathfrak{m}_{k_{1}},p_{k_{1}%
}\right\}  $ and $\mathcal{M}_{2}\doteq\left\{  \mathfrak{m}_{k_{2}},p_{k_{2}%
}\right\}  $. If $\mathcal{M}_{1}$ and $\mathcal{M}_{2}$ are quasi-independent
and form $\mathcal{M}$ for $\Sigma$, then
\[
p_{k}\simeq p_{k_{1}}p_{k_{2}}.
\]
As a consequence, the entropy additivity
\[
S(\mathbf{X}(t)\mathbf{,}t)\simeq S_{1}(\mathbf{X}_{1}(t)\mathbf{,}%
t)+S_{2}(\mathbf{X}_{2}(t)\mathbf{,}t)
\]
is approximately satisfied. This has a generalization to many subsystems
$\left\{  \Sigma_{i}\right\}  $ given in Eqs. (\ref{S-Additivity-1}%
-\ref{S-Additivity-3}) so that they are \emph{all} quasi-independent. In terms
of the volume $\Delta V_{i}$ of $\Sigma_{i}$ so that $V=%
{\textstyle\sum\nolimits_{i}}
\Delta V_{i}$, the generalization can be simply written in the form of the
entropy additivity requirement over $\Delta V_{i}$
\begin{subequations}
\begin{equation}
S(\mathbf{X}(t)\mathbf{,}t)=%
{\textstyle\sum\nolimits_{i}}
S_{i}(\mathbf{X}_{i}(t)\mathbf{,}t), \label{S-Additivity-4}%
\end{equation}
in accordance with quasi-independence. The requirement of quasi-independence
forces the linear size $\Delta l_{i}$ of $\Sigma_{i}$ to be not less than the
correlation length $\lambda_{\text{corr}}$\ as discussed in Sects.
\ref{Sec-Notation} and \ref{Sec-Unique-S-T}; see also \cite{Gujrati-II}. Thus,
there will be no nonlocal effects \cite[for example.]{Jou1,Hutter} to consider
in the MNEQT as they are subsumed within each subsystem. Each subsystem has
its own Hamiltonian $\mathcal{H}$ containing all the information regarding
interactions between its constituent particles and internal variables (see how
$\mathcal{H}$ in Sect. ) so its microstates will contain the effects of all
the interactions in $\left\{  E_{k}\right\}  $.

By a proper choice of $\mathfrak{S}_{\mathbf{Z}}$, $S(\mathbf{X}%
(t)\mathbf{,}t)$ can be replaced by a unique state function $S(\mathbf{Z}%
(t))$. Similarly, by a proper choice of $\mathfrak{S}_{\mathbf{Z}_{i}}$, a
subspace of $\mathfrak{S}_{\mathbf{Z}}$, $S_{i}(\mathbf{X}_{i}(t)\mathbf{,}t)$
can be replaced by a state function $S_{i}(\mathbf{Z}_{i}(t))$ in
$\mathfrak{S}_{\mathbf{Z}_{i}}$. By matching the number of independent
variables $n^{\ast}$ on both sides in Eq. (\ref{S-Additivity-4}) as discussed
in Sect. \ref{Sec-Unique-S-T}, we ensures that $S(\mathbf{Z}(t))$ is uniquely
determined as a sum of $S_{i}(\mathbf{Z}_{i}(t))$ in accordance with Eq.
(\ref{S-Additivity-3}). By replacing $\mathfrak{S}_{\mathbf{Z}_{i}}$ by
$\mathfrak{S}_{\mathbf{X}},\forall i$, and using Theorem
\ref{Theorem-Exixtence-S} and Corollary \ref{Corollary-ProperSubspaces}, we
know that $S(\mathbf{Z}(t))$ uniquely exists in the MNEQT so there is no
freedom to choose any other variables on which $S(\mathbf{Z}(t))$ can depend
on. But the actual choice of $n<<n^{\ast}$ for a given $\mathcal{M}%
_{\text{ieq}}$\ is determined by the experimental setup. It is this
$n>n_{\text{obs}}$ that is physically relevant for $\mathcal{M}_{\text{ieq}}%
$\ unless we are dealing with a $\mathcal{M}_{\text{eq}}$, where
$n_{\text{obs}}$ is the number of independent variables in $\mathbf{X}$. The
remaining $n^{\ast}-n$ internal variables have equilibrated so their
affinities vanish.

We turn back to $S(\mathbf{X}(t)\mathbf{,}t)$ and $S_{i}(\mathbf{X}%
_{i}(t)\mathbf{,}t)$. If $\Delta l_{i}$ is less than the correlation length,
then Eq. (\ref{S-Additivity-4}) must be replaced by
\begin{equation}
S(\mathbf{X}(t)\mathbf{,}t)=%
{\textstyle\sum\nolimits_{i}}
S_{i}(\mathbf{X}_{i}(t)\mathbf{,}t)+S_{\text{corr}}(\mathbf{X}(t)\mathbf{,}t)
\label{S-Additivity-5}%
\end{equation}
due to the correlations present among various $\Sigma_{i}$'s; cf. Eq.
(\ref{Entropy-Additivity}). In a continuum NEQT introduced as CNEQT in Sect.
\ref{Sec-Digression-T}, we are in the limit of "infinitesimal volume element"
over which $S_{i}(\mathbf{X}_{i}(t)\mathbf{,}t)$ can be expressed as
$s(\mathbf{r\mid x}(\mathbf{r,}t)\mathbf{,}t)d\mathbf{r}$; here,
$\mathbf{x}(\mathbf{r,}t)$ is the local analog of $\mathbf{X}_{i}(t)$ over
this volume element. In this limit, Eq. (\ref{S-Additivity-5}) reduces to
\begin{equation}
S(\mathbf{X}(t)\mathbf{,}t)=%
{\textstyle\int}
s(\mathbf{r\mid x}(\mathbf{r,}t)\mathbf{,}t)d\mathbf{r}+S_{\text{corr}%
}(\mathbf{X}(t)\mathbf{,}t), \label{S-Integral-1}%
\end{equation}
where $S_{\text{corr}}$ cannot be converted to a volume integral as it is a
nonlocal quantity over at least the correlation length. Unfortunately,
$S_{\text{corr}}$ is almost invariably overlooked in the CNEQT
\cite{Maugin,Jou0}, which allows the function $s(\mathbf{r\mid x}%
(\mathbf{r,}t)\mathbf{,}t)$ to be commonly identified as the entropy density.
This is most probably a misleading nomenclature as $S_{\text{corr}}%
(\mathbf{X}(t)\mathbf{,}t)$ has been neglected. Even in EQ, the correct
entropy density should be precisely $S(\mathbf{X}(t)\mathbf{,}t)/V(t)$, which
is not $s(\mathbf{r\mid x}(\mathbf{r,}t)\mathbf{,}t)$. With this
approximation, the SI-entropy is replaced by $S_{\text{CT}}(\mathbf{X}%
(t)\mathbf{,}t)$ given by the integral on the right side (CT for the CNEQT),%
\begin{equation}
S(\mathbf{X}(t)\mathbf{,}t)\overset{?}{\simeq}S_{\text{CT}}(\mathbf{X}%
(t)\mathbf{,}t)=%
{\textstyle\int}
s(\mathbf{r\mid x}(\mathbf{r,}t)\mathbf{,}t)d\mathbf{r;}
\label{S-Additivity-6}%
\end{equation}
the questionmark is because it is hard to estimate the error due to the
neglect of $S_{\text{corr}}$. Thus, the additivity of $s$ in the integral is
not the postulated additivity of $S$ even in EQ\ thermodynamics. To ensure
that $s$ satisfies the second law, it is postulated that $s$ shares this
property \cite{Jou0}. Because of these issues, we do not focus on $s$ in this
review as the volume elements are not usually quasi-independent unless we are
at high enough temperatures so that the correlation lengths become small
enough to make them quasi-independent.\ 

The above limitations also distinguish the MNEQT with all CNEQT theories,
which fall under the category of the \r{M}NEQT. Here, we will briefly comment
on two successful theories.

The first one is the extended irreversible thermodynamics (ENEQT) \cite{Jou0},
a well-known CNEQT, which also neglects $S_{\text{corr}}$ but treats the
corresponding entropy density $s_{\text{ET }}$(ET for the ENEQT)$\ $as a state
function involving various dissipative fluxes such as the heat flux. As said
above, one needs to be careful to incorporate nonlocal effects \cite[for
example.]{Jou1,Hutter} in the CNEQT. In addition, the total entropy
$S_{\text{ET}}$ \cite[see Eq. (5.66) and the discussion thereafter]{Jou0} is
also a state function involving the same fluxes for $\Sigma$, which violates
Corollary \ref{Corollary-ProperSubspaces} about requiring a larger state space
relative to $s_{\text{ET}}$: Both $S_{\text{ET}}$ and $s_{\text{ET}}$\ cannot
have the same state space for the additivity of entropy to be an identity; see
also Remark \ref{Remark-Small number-Subsystems}. As the fluxes determine
MI-macroquantities $d_{\text{e}}Q$ and $d_{\text{e}}W$, $S_{\text{ET}}$ is not
a SI-entropy as $S$ is in the MNEQT.

The second one is the MNET \cite{Rubi-Vilar-Reguera} that is based on the idea
of \emph{internal} dof (dof$_{\text{in}}$) proposed by Prigogine and Mazur
\cite{Prigogine2} for a $\Sigma$ in contact with a $\widetilde{\Sigma}$. The
authors provide a very good comparison of the MNET with other important
theories to which we direct the reader. Here, we only compare it with the
MNEQT. The emphasis in the MNET is to study slow relaxation in $\Sigma$ (cf.
Sect. \ref{Sec-Tool-Narayan}) caused by the dof, that we denote here by
dof$_{\text{slow}}$ or by observables $\mathbf{X}_{\text{slow}}$,\ and the
corresponding part of $\Sigma$\ by $\Sigma_{\text{slow}}$; the remainder of
the system is denoted\ by $\Sigma_{\text{fast}}$ with observables
$\mathbf{X}_{\text{fast}}$. In addition to $\widetilde{\Sigma}$,
$\Sigma_{\text{slow}}$ is also allowed to interact with another work medium,
which we denote by $\widetilde{\Sigma}^{\prime\text{(w)}}$ with an extra
prime, with which it exchanges macrowork only; see \cite[Sect. 20]{Landau}.
This makes As is well-known from the Tool-Narayanaswamy equation, see Eq.
(\ref{T_eff}), and other works \cite{Gujrati-II,Langer,Nagel,Liu},
$\Sigma_{\text{slow}}$ and $\Sigma_{\text{fast}}$\ usually have different
temperatures, see Eq. (\ref{T_eff}), pressures, see for example Eq.
(\ref{System-Both Interactions}), etc. and they need not be equal to those of
$\widetilde{\Sigma}$. This is not considered in the MNET, where it is assumed
that $T=T_{0},P=P_{0}$, etc. so $\Sigma$ is assumed to be in EQ with
$\widetilde{\Sigma}$, so there cannot be any internal exchanges between
$\Sigma_{\text{slow}}$ and $\Sigma_{\text{fast}}$.\ The main focus in the MNET
is only on $\Sigma_{\text{slow}}$ and not the entire $\Sigma$. This makes
$\Sigma$ as the realization $\Sigma_{\text{C}}$; see Sect.
\ref{Sec-Applications}. The entropy in the MNET in our discrete notation is
given by
\end{subequations}
\begin{equation}
S_{\text{MNET}}(\mathbf{X},t)=S_{\text{eq}}(\mathbf{X})-H(\mathbf{X}%
_{\text{slow}},t), \label{RRV-1}%
\end{equation}
where%
\[
H(\mathbf{X}_{\text{slow}},t)\doteq%
{\textstyle\sum\limits_{k}}
P_{k}(t)\ln\frac{P_{k}(t)}{P_{k}^{\text{eq}}}%
\]
is the net contribution from $\mathbf{X}_{\text{slow}}$ \cite[Eq.
(7)]{Rubi-Vilar-Reguera}; here, $P_{k}(t)$ is the probability of
$\mathfrak{m}_{k}^{\text{c}}$, the microstate in the \emph{internal
configurational space} (c) formed by dof$_{\text{slow}}$ \cite{Prigogine2},
with $\mathbf{X}_{\text{slow}k}$ denoting its configuration.$\ $As
$\Sigma_{\text{slow}}$ equilibrates with $\widetilde{\Sigma}^{\prime}$,
$P_{k}(t)\rightarrow P_{k}^{\text{eq}}$ so that $H(t)\rightarrow0$.
Consequently, $S_{\text{MNET}}(\mathbf{X},t)\rightarrow S_{\text{eq}%
}(\mathbf{X})$ so that $S_{\text{MNET}}(\mathbf{X},t)-S_{\text{eq}}%
(\mathbf{X})=-H(\mathbf{X}_{\text{slow}},t)\ $is the contribution from the NEQ
dof$_{\text{slow}}$ \cite{vanKampen}. The presence of $P_{k}^{\text{eq}}$,
which surely depends on the conjugate fields of the medium $\widetilde{\Sigma
}_{\text{slow}}$ controlling $\mathbf{X}_{\text{slow}}$, makes $S_{\text{MNET}%
}(\mathbf{X},t)$ an MI-quantity. Thus, it is different from our SI-entropy
$S(\mathbf{X},t)$. Moreover, as $\Sigma_{\text{slow}}$ interacts with
$\widetilde{\Sigma}^{\prime\text{(w)}}$, there is an exchange work
$\Delta\widetilde{W}^{\prime}=-\Delta_{\text{e}}W_{\text{slow}}$ done by
$\widetilde{\Sigma}^{\prime\text{(w)}}$ \cite{Landau}. As $\mathbf{X}%
_{\text{slow}}$ in the MNET is controlled externally, it \ does not represent
an internal variable in the sense used in the MNEQT, which explains the use of
$\mathbf{X}_{\text{slow}}$ and not $\boldsymbol{\xi}$ to represent
dof$_{\text{slow}}$. This is also consistent with Conclusion
\ref{Conclusion-SigmaB-C} since there is no need to consider any internal
variable for the realization $\Sigma_{\text{C}}$. This is further clarified in
the next paragraph. It is easy to see that $\Delta_{\text{e}}W_{\text{slow}}$
satisfies Eq. (\ref{FirstLaw-MI}):
\begin{equation}
\Delta E=T_{0}\Delta_{\text{e}}S-P_{0}\Delta V-\Delta_{\text{e}}%
W_{\text{slow}}; \label{RRV-2}%
\end{equation}
see \cite[see the derivation leading to Eq. (20.1)]{Landau}. Thus, MNET
belongs to the \r{M}NEQT as pointed out above. A configuration temperature for
$\mathfrak{m}_{k}^{\text{c}}$ is also introduced in the MNET by using
$s_{\text{c}k}=-\ln P_{k}(t)$, which is not considered in the MNEQT, where
only a global thermodynamic temperature is defined.

As the examples in Sect. \ref{Sec-Composite System} have revealed, we can
treat $\Sigma$\ either as $\Sigma_{\text{B}}$\ or $\Sigma_{\text{C}}$. We need
internal variables to specify $\Sigma_{\text{B}}$ that help to describe
whatever is going on within $\Sigma$ without knowing these processes. While we
do not need internal variables to specify $\Sigma_{\text{C}}$, we need to know
internal processes such as\ the internal transfer $dE_{\text{in}%
}(t)=d_{\text{e}}Q_{\text{in}}(t)$. Both realizations are equivalent in the
MNEQT. As the entropy is a unique function in $\mathfrak{S}_{\mathbf{Z}}$,
there is no room for any extra dependence such as external fluxes in either
realization; see Theorem \ref{Theorem-Exixtence-S}. The internal fluxes such
as $dE_{\text{in}}(t)$ are needed for $\Sigma_{\text{C}}$, but they are not
controlled by the medium (they are present even if $\Sigma$ is isolated).
Thus, the MNEQT always deals with a SI-entropy.

Thus, the two entropies, $S_{\text{ET}}$ and $S$, are very distinct in many
ways, and cannot be compared as their predictions will be very different.

We now turn to the significance of the new NEQT (MNEQT) in the enlarged state
space, which is a SI-thermodynamics. As macroheat and macrowork are two
independent quantities in the MNEQT, it is clear that the notion of
temperature can be understood by merely focusing of the relationship between
$dQ$ and $dS$ for any arbitrary process; $dW$ plays no role in it. This is
what makes the MNEQT a very useful thermodynamic approach. It should be
stressed that the generalized macrowork $dW$ (the generalized macroheat
$dQ$)\ is not the same as the exchanged macrowork $d_{\text{e}}W$ (exchanged
macroheat $d_{\text{e}}Q$) with the medium unless $d_{\text{i}}W=0$
($d_{\text{i}}Q=0$), i.e., unless there is no \emph{internal irreversibility}
caused by internal processes.

Thus, any deviation of $dW$ from\emph{ }$d_{\text{e}}W$ or $dQ$ from
$d_{\text{e}}Q$ in a process is the result of irreversibility due to internal
processes alone. Indeed, $d_{\text{i}}W$ is the macrowork done internally by
the system against all \emph{dissipative forces within the system}, see Eq.
(\ref{Delta_i-W-Full}), which explains why $d_{\text{i}}W$ is a measure of
dissipative irreversibility (Definition \ref{Def-DissipatedWork}) within the
system. In a similar manner, $d_{\text{i}}Q$ is the macroheat generated
internally by the system, which from $d_{\text{i}}Q=d_{\text{i}}W$ is also due
to all dissipative forces within the system. It must be emphasized that
irreversible macroheat transfer due to temperature difference between a system
and a medium does not affect $d_{\text{i}}W=d_{\text{i}}Q$; see Eq.
(\ref{diS-Qe}). On the other hand, irreversible macroheat transfer between
different \emph{internal} parts of $\Sigma$ will be part of $d_{\text{i}%
}W=d_{\text{i}}Q$ as seen from the discussion of $\Sigma_{\text{C}}$ in Sect.
\ref{Sec-Applications}.

The use of SI-quantities to specify a macrostate and microstates of a system
allows us to determine a statistical definition of the\ entropy of any
$\mathcal{M}_{\text{arb}}$ in Sect. \ref{Sec-NEQ-S}, which is based on the
ideas of Boltzmann. Using the flatness hypothesis, see Remark
\ref{Remark-FlatDistribution}, known to be valid for macroscopic systems, we
provide a simple proof of the second law in Sect. \ref{Sec-SecondLawProof};
see also \cite[Theorem 4]{Gujrati-Symmetry}.

\subsection{Unique NEQ $T$ in $\mathfrak{S}_{\mathbf{Z}}$}

As $dQ$ and $dS$ are SI-macroquantities, their extensivity requires a linear
relation between them for any $\mathcal{M}_{\text{arb}}$ as discussed in Sect.
\ref{Sec_Stat_Concepts}; see Eq. (\ref{System_dQ_dS}). The proportionality
parameter is identified as the temperature $T$ in Eq. (\ref{beta_arb}). With
this extension to deal with $\mathcal{M}_{\text{nieq}}$\ in $\mathfrak{S}%
_{\mathbf{Z}^{\prime}}\supset\mathfrak{S}_{\mathbf{Z}}$, the definition is
applicable to any $\mathcal{M}_{\text{arb}}$. Thus, there is no need to
differentiate between $T_{\text{arb}}$ and $T$ in the MNEQT as said earlier.
The same definition also applies to an isolated system in an arbitrary
macrostate. Determining other fields related to $dW$ such as the pressure do
not pose any new\ complications in the new approach as they are mechanical in
nature as discussed in Sect. \ref{Sec-GibbsRelation}. We have thus proposed a
novel approach to define the unique temperature (see Theorem
\ref{Theorem-Exixtence-S}) that is applicable to any $\mathcal{M}_{\text{arb}%
}$ by selecting the particular $\mathfrak{S}_{\mathbf{Z}^{\prime}}$ where
$\mathcal{M}_{\text{nieq}}$ is convert to $\mathcal{M}_{\text{ieq}}$. Then the
changes in $\left\{  dp_{k}\right\}  $ and $\left\{  E_{k}\right\}  $ identify
$dS,dQ$ and $dW$ for given $T$ and $\mathbf{W}$ in $\mathfrak{S}%
_{\mathbf{Z}^{\prime}}$. All these SI-macroquantities have the same values
also in $\mathfrak{S}_{\mathbf{Z}}$ for $\mathcal{M}_{\text{nieq}}$. All of
this strongly supports Proposition \ref{Proposition-General-MNEQT} as the
fundamental axiom of the MNEQT that can explain the behavior of any
$\mathcal{M}_{\text{arb}}$.

It should be pointed out that the statistical definition of temperature in
Eqs. (\ref{dQ-dS}) or (\ref{EntropyDiff-Isolated}) is not limited to extensive
systems only. The discussion and the conclusions are also valid for systems
for which $dQ$ and $dS$ scale the same way with $N$. We have only considered a
linear scaling between the two SI-macroquantities in this work. It should be
pointed out that our concept of temperature has some similarity with the idea
of a contact temperature for a system in thermal contact with a medium. The
latter is introduced \cite{Muschik,Muschik-2016,Muschik-2020} by the
inequality in Eq. (\ref{HeatFlowDirection}) for a system in thermal contact
with a medium (but not when the system is isolated). We, instead, define the
temperature as an equality in Eq. (\ref{dQ-dS}) for any arbitrary macrostate,
which works even for an isolated system. Eq. (\ref{HeatFlowDirection}) is a
consequence of our definition.

We have seen that this definition of $T$ satisfies the four requirements, see
Criterion \ref{Criterion-Temperature}, listed in Sect.
\ref{Sec-UniqueMicrostates}. This thus solves the dreams of Planck and Landau
\cite{Landau0,Planck}. For example, we need to ensure that \emph{macroheat
}$d_{\text{e}}Q$\emph{, if it is transferred at different temperatures, always
flows from hot to cold}. Indeed, this is a fundamental requirement for a
consistent notion of temperature due to the second law; see Criterion C3. To
the best of our knowledge, this question has not been answered satisfactorily
\cite{Muschik,Keizer,Morriss,Jou,Hoover,Ruelle} for an arbitrary
nonequilibrium macrostate. The question is not purely academic as it arises in
various contexts of current interest in applying nonequilibrium thermodynamics
to various fields such as the Szilard engine \cite{Marathe,Zurek,Kim},
Jarzynski process \cite{Jarzynski}, stochastic thermodynamics \cite{Seifert},
Maxwell's demon \cite{Wiener,Brillouin}, thermogalvanic cells, corrosion,
chemical reactions, biological systems \cite{Hunt,Horn,Forland}, etc. to name
a few. Our approach thus finally \emph{solves} the long-standing unsolved
problem of defining the temperature for an arbitrary macrostate in a
consistent way that satisfy the stringent criteria C1-C4 as proven in Theorem
\ref{Th-TemperatureCriteria}. Thus, $T$ must be treated as a genuine unique
temperature of the system in any macrostate $\mathcal{M}$.

Our definition of the temperature in Eqs. (\ref{dQ-dS}) and
(\ref{EntropyDiff-Isolated}) introduces $T(t)$ as a global quantity, see C4 in
Criterion \ref{Criterion-Temperature}, for the entire system and should not be
confused as a local quantity, which varies from region to region within the
system. This is true even if the system is inhomogeneous. Recall that we have
not imposed any requirement for the system to be homogeneous in our discussion
in Sec. \ref{Sec-UniqueMicrostates}. One may wonder if it makes any sense to
call $T(t)$ the temperature of the system even if it is inhomogeneous. It is
possible to think of an inhomogeneous system to be composed of a number of
homogeneous subsystems $\Sigma_{1},\Sigma_{2},\cdots$, each macroscopic in its
own right. In that case, we can assign a temperature $T_{1}(t),T_{2}%
(t),\cdots$ to $\Sigma_{1},\Sigma_{2},\cdots$, respectively. It is then
possible to relate $T(t)$ to $T_{1}(t),T_{2}(t),\cdots$. We have explicitly
shown this here by considering only two subsystems in Sect.
\ref{Sec-Applications} when they are of identical sizes, and Sect.
\ref{Sec-Tool-Narayan} when they are of different sizes, and treating the
system as $\Sigma_{\text{B}}$; see Eqs. (\ref{beta-A-Composite}) and
(\ref{T_eff}), respectively. For example, we can divide $\Sigma$ into four
subsystems $\Sigma_{1},\Sigma_{2},\Sigma_{3}$, and $\Sigma_{4}$ of equal
volumes and numbers of particles, but of different energies. We can assume
them in their own EQ macrostate $\mathcal{M}_{i}(E_{i},V/4,N/4)$ at
temperature $T_{i}$. Then, we will obtain for $\Sigma=\Sigma_{\text{B}}$
\[
\beta(t)=[\beta_{1}(t)+\beta_{2}(t)+\beta_{3}(t)+\beta_{4}(t)]/4,
\]
which will require three internal variables as shown in Eq.
(\ref{Internal Variable-2}). It is easy to generalize the above relation to
many subsystems and allowing the possibility of different sizes. We can also
allow for volumes to be different for different subsystems as was done in
deriving Eq. (\ref{System-Both Interactions}).

The possibility to study the formation of internal structures in $\Sigma_{i}$
in a NEQ\ $\Sigma$\ should prove very useful to understand what drives their
formation. A very simple example of this is the pattern formation of
Rayleigh-B\'{e}rnard cells and their competition \cite{Hohenberg} in a fluid
system. This pattern formation has received a lot of attention recently
\cite[for example]{Chatterjee}, where stable cells are studied in nonturbulent
convection in steady state. It is found that each cell can be described in one
of its EQ macrostate to a very good approximation with its own temperature
$T_{i}$. What our approach shows is that the stable convection here can also
be described by a thermodynamic constant (steady) temperature $T$ associated
with the steady macrostate of the entire fluid.

Having a global temperature for an inhomogeneous system does not mean that if
we insert a thermometer in it anywhere, we will measure $T$. This is because
the act of "inserting" a thermometer amounts to looking at the "internal"
structure of the system, so we will be probing it as $\Sigma_{\text{C}}$.
Thus, if we insert it in $\Sigma_{1}$, we will record $T_{1}$; and if we
insert it in $\Sigma_{2}$, we will record $T_{2}$, and so on. This should not
be a surprise. We refer the reader to an interesting discussion of this issue
in \cite{Muschik-2016}.

As far as fields such as the pressure that are associated with $dW$ are
concerned, they do not pose the same kind of problem as they are purely
mechanical. All one needs to do is to take their instantaneous averages over
microstate probabilities for any arbitrary macrostate; see, for example, Eq.
(\ref{GeneralizedMacroforce}) involving such an average. This is possible
because $\mathbf{W}$ is a parameter, which makes $\mathbf{F}_{\text{w}k}$
fluctuating quantities over $\mathfrak{m}_{k}$. This cannot be done for the
temperature as $E$ is not a parameter in the Hamiltonian. In this sense, we
are considering a NEQ version of the canonical ensemble in the MNEQT, which
makes $E_{k}$ fluctuating over $\mathfrak{m}_{k}$. Thus, $T$ plays the role of
a "parameter." For this reason, there is no way to define a temperature
$T_{k}$ for $\mathfrak{m}_{k}$ and then take its average. What we can do in
the MNEQT is to use the temperature of various subsystems to obtain $T$ as is
done in Eqs. (\ref{beta-A-Composite}) or (\ref{T_eff}).

We have shown that the definition of the irreversible macrowork $d_{\text{i}%
}W$ is always \emph{nonnegative} as required by the second law; see Eq.
(\ref{WorkInequality}). Various consequences of the second law are discussed
in Sect. \ref{Sec-IrreversibleInequalities}. We have shown that, once a model
for a system is given, we can identify the required number and nature of
internal variables as a computational scheme in Sects.
\ref{Sec-InternalVariables} and \ref{Sec-Applications}, and later sections in
the second half of the review. These applications provide a clear strategy,
once a model has been created, for computation for an arbitrary thermodynamic
process and should prove useful in the field.

We have mostly alluded to $\mathcal{M}_{\text{ieq}}$'s above to highlight the
importance of internal variables in $\mathfrak{S}_{\mathbf{Z}}$, and to
$\mathcal{M}_{\text{nieq}}$'s for memory-effects with respect to
$\mathcal{M}_{\text{ieq}}$'s in $\mathfrak{S}_{\mathbf{Z}}$. In the absence of
a reliable model, finding $\mathfrak{S}_{\mathbf{Z}}$ in many cases may not be
easy to do. Compared to this, the identification of the state space
$\mathfrak{S}_{\mathbf{X}}$ is almost trivial based on the experimental setup.
Therefore, it is much more convenient to work with $\mathfrak{S}_{\mathbf{X}}%
$, with respect to which all NEQ states possess memory. Thus, \emph{the novel
approach we develop here is extremely useful as it does not require knowing
the internal variables} as discussed in Sect. \ref{Sec-M_nieq}. However, for
completeness, we have developed the MNEQT in $\mathfrak{S}_{\mathbf{Z}}$,
which can be easily adapted to $\mathfrak{S}_{\mathbf{X}}$ by the procedure
outlined in Remark \ref{Remark-IEQ-ARB-Macrostate}.

It should be stressed, as noted in Remarks \ref{Remark-Regardless-of-Speed}
and \ref{Remark-dQ-dS-GeneralRelation} that both $dQ$ and $dS$ exist, and so
does their relation in Eq. (\ref{dQ-dS}), regardless of the speed of the
arbitrary process $\mathcal{P}_{\text{arb}}$. This makes the Clausius equality
extremely important and useful as there is no restriction on its validity. It
is a genuine equality even in the presence of irreversibility without any
restriction on the process. This should be contrasted with the conventional
form of Clausius's inequality in Eq. (\ref{ClausiusInequality}); the equality
here holds only in the absence of irreversibility.

The existence of a unique $T$ also appears in the microstate probabilities,
see Sect. \ref{Sec-MicrostateProbabilities} that can be used to determine
various fluctuations of interest. These probabilities for $\mathcal{M}%
_{\text{ieq}}$ also give a generalization of the EQ partition function to a
NEQ\ partition function in Eq. (\ref{NEQ-PF}). Because of the space
limitation, we did not cover its consequences.

\subsection{Applications}

We now come to the various applications of the MNEQT in the later half of the
review. The main lesson here is that several applications cannot be carried
out in the \r{M}NEQT. Apart from the many applications in Sect.
\ref{Sec-Applications} that we have already discussed above, we have applied
it to glasses when we derive the famous Tool-Narayanaswamy equation in Sect.
\ref{Sec-Tool-Narayan}. It is a phenomenological equation for which we provide
a theoretical justification within the MNEQT. We study an irreversible Carnot
cycle in Sect. \ref{Sec-CarnotCycle} and derive its efficiency in terms of the
entropy generation $\Delta_{\text{i}}S$ and show how it differs from the that
of a reversible Carnot cycle. We also show how to compute $\Delta_{\text{i}}S$
for a simple case in which each segment is irreversible but between EQ
macrostates; see Eq. (\ref{DeltaS_i-Carnot}).

The next important application is about friction and the Brownian motion in
Sect. \ref{Sec-Friction}. By considering the relative motion between $\Sigma$
and $\widetilde{\Sigma}$, we theoretically predict the well-know empirical
fact that friction is caused by the relative motion. We apply the approach to
a system of piston in a cylinder, a moving particle-spring system in a fluid,
and just a particle fluid system.

The last application is on free expansion in Sect. \ref{Sec-Free Expansion}.
Here, we consider classical and quantum expansion. In both cases, we make a
simple model of the process and show how it can used to determine
$\Delta_{\text{i}}S$ between not only two $\mathcal{M}_{\text{eq}}$
macrostates but also between two $\mathcal{M}_{\text{nieq}}$; the latter
cannot be determined in the \r{M}NEQT.

\subsection{Summary}

To summarize, we have given a detailed review of the MNEQT in an extended
state space that was initiated a while back
\cite{Gujrati-II,Gujrati-Heat-Work,Gujrati-Entropy2}. Its main attraction is
the variety of new applications, many of which cannot be investigated in the
\r{M}NEQT in which internal variables play no direct role. The approach is
applicable to a system in any arbitrary macrostate $\mathcal{M}_{\text{arb}}$
and is used to provide a unique but very sensible definition of the
temperature, which satisfies all of its important requirements. The useful
aspect of the statistical approach needed for the MNEQT is that it provides a
\emph{unique} definition of generalized macroheat and macrowork $dQ$ and $dW$,
respectively, that are independent contributions in the generalized first law
in Eq. (\ref{FirstLaw-SI}); both quantities are system intrinsic and obey the
conventional partitioning in Eq. (\ref{Y-partition}) valid for any process.
These macroquantities differ from the exchange macroheat and macrowork
$d_{\text{e}}Q$ and $d_{\text{e}}W$, respectively. Therefore, the MNEQT
directly considers the irreversible components $d_{\text{i}}Q$ and
$d_{\text{i}}W$ that originate from all \emph{internal} dissipation within the
system and satisfy an important identity $d_{\text{i}}Q\equiv d_{\text{i}}%
W>0$, see Corollary \ref{Cor-PositiveIrreversibleWork}, for any arbitrary
irreversible process. The irreversible macroquantities vanish for a reversible
process. The identification of a global and unique temperature $T$ is the most
significant aspect of the MNEQT in that it allows us to deal with $\Sigma$\ as
a blackbox so that we do not need to know its interior. This requires a
certain number of internal variables, which explains the extended state space.
We similarly define other fields like the pressure, etc. statistically in
terms of generalized "mechanical" forces; these also include generalized
forces for internal variables. All these definitions are instantaneous and are
not affected by how slow or fast any arbitrary process is. The latter only
determines the time window of relaxations of the internal processes, and the
choice of the state space. We believe that our novel approach provides a
first-ever definition of the temperature, pressure, etc. and of $dQ$ and $dW$
for any arbitrary macrostate, whether the system is isolated on in a medium.
Our approach is also valid to investigate nonequilibrium macrostates with
respect to $\mathfrak{S}_{\mathbf{X}}$, which brings memory effects in the
investigation. Thus, the approach is applicable in a wide variety of
situations, and fulfils Planck's dream.

\end{document}